%% For double-blind review submission, w/o CCS and ACM Reference (max submission space)
\documentclass{article}

\usepackage[natbib,backend=bibtex]{biblatex}
\bibliography{LRU_cache_analysis_POPL2019}

\usepackage{algorithm}
\usepackage{algpseudocode}
\usepackage{paralist}

\usepackage{bm}

\usepackage{tikz}
\usetikzlibrary{shapes.geometric}

\usepackage{authblk}

%% Some recommended packages.
\usepackage{booktabs}   %% For formal tables:
                        %% http://ctan.org/pkg/booktabs
\usepackage{hyperref}
\usepackage{subcaption} %% For complex figures with subfigures/subcaptions
                        %% http://ctan.org/pkg/subcaption

\usepackage{moreverb}

\usepackage{amssymb,amsmath,amsthm}
\theoremstyle{plain}
\newtheorem{theorem}{Theorem}[section]

\newtheorem{lemma}[theorem]{Lemma}
\newtheorem{corollary}[theorem]{Corollary}
\theoremstyle{remark}
\newtheorem{example}[theorem]{Example}
\theoremstyle{definition}
\newtheorem{definition}[theorem]{Definition}

\newcommand{\oldnewFigure}[3]{#3} %new only
\newcommand{\oldnew}[2]{#2}	%new only
%\newcommand{\oldnew}[2]{{\color{blue}[Old: #1]}{\color{red}[New: #2]}}	%new and old

% begin DM

\newcommand{\vneg}[1]{\bar{#1}}
\newcommand{\cacheWays}{N}
\newcommand{\noAccess}{\varepsilon}
\newcommand{\noLine}{\varepsilon}
\newcommand{\absent}{\mathcal{A}}
\newcommand{\locations}{A}
\newcommand{\software}[1]{\textsc{#1}}

\newcommand{\todo}[1]{{\color{red}(TODO: #1)}}
%\renewcommand{\todo}[1]{}

% print \begin{description}... labels in semibold, not full bold

% end DM

\usepackage{catoptions}
\makeatletter

\def\Autoref#1{%
  \begingroup
  \edef\reserved@a{\cpttrimspaces{#1}}%
  \ifcsndefTF{r@#1}{%
    \xaftercsname{\expandafter\testreftype\@fourthoffive}
      {r@\reserved@a}.\\{#1}%
  }{%
    \ref{#1}%
  }%
  \endgroup
}
\def\testreftype#1.#2\\#3{%
  \ifcsndefTF{#1autorefname}{%
    \def\reserved@a##1##2\@nil{%
      \uppercase{\def\ref@name{##1}}%
      \csn@edef{#1autorefname}{\ref@name##2}%
      \autoref{#3}%
    }%
    \reserved@a#1\@nil
  }{%
    \autoref{#3}%
  }%
}
\makeatother

\title{Fast and exact analysis for LRU caches%
  \thanks{This work was partially supported by the \href{http://erc.europa.eu/}{European Research Council} under the European Union's Seventh Framework Programme (FP/2007-2013) / ERC Grant Agreement nr. 306595 \href{http://stator.imag.fr}{``STATOR''}, and by the Deutsche Forschungsgemeinschaft as part of the project PEP: Precise and Efficient Prediction of Good Worst-case Performance for Contemporary and Future Architectures.}
  \thanks{This is an extended version of work presented at POPL 2019: it includes appendices not present in the ACM publication.}}

\author{Valentin Touzeau}
\author{Claire Ma\"{\i}za}
\author{David Monniaux}
\affil{Univ. Grenoble Alpes, CNRS, Grenoble INP\footnote{Institute of Engineering Univ. Grenoble Alpes}, VERIMAG, 38000 Grenoble, France}
\author{Jan Reineke}
\affil{Saarland University, Saarbr\"ucken, Germany}
  
\newcommand{\semibold}{\fontseries{sb}\selectfont}

\begin{document}
\maketitle

\begin{abstract}
For applications in worst-case execution time analysis and in security, it is desirable to statically classify memory accesses into those that result in cache hits, and those that result in cache misses.
Among cache replacement policies, the least recently used (LRU) policy has been studied the most and is considered to be the most predictable. 

The state-of-the-art in LRU cache analysis presents a tradeoff between precision and analysis efficiency:
The classical approach to analyzing programs running on LRU caches, an abstract interpretation based on a range abstraction, is very fast but can be imprecise.
An exact analysis was recently presented, but, as a last resort, it calls a model checker, which is expensive.

In this paper, we develop an analysis based on abstract interpretation that comes close to the efficiency of the classical approach, while achieving exact classification of all memory accesses as the model-checking approach.
Compared with the model-checking approach we observe speedups of several orders of magnitude.
As a secondary contribution we show that LRU cache analysis problems are in general NP-complete.

%Another benefit of our analysis is that it is presented as an abstract interpretation, and thus can be integrated into any scheme based on combinations of abstractions, whereas an analysis based on calling a model checker may be more difficult to combine with others.

%%% Local Variables:
%%% mode: latex
%%% TeX-master: "LRU_cache_analysis_POPL2019"
%%% End:
\end{abstract}

\section{Introduction}
\subsection{Motivation}
%% The Memory Gap
Due to technological developments, the latency of an access to DRAM-based main memory has long been much higher than the latency of an individual computation on the CPU.
The most common solution to bridge this ``memory gap'' is to include a hierarchy of cache memories between the CPU and main memory, meant to speed up accesses to frequently required code and operands.

In the presence of caches, the latency of an individual memory access may vary considerably depending on whether the access is a \emph{cache hit}, i.e., it can be served from an on-chip cache memory, or a \emph{cache miss}\footnote{An often quoted figure is that a cache miss is $100$ times slower than a cache hit.
  We have a simple program whose memory access pattern can be changed by a numeric parameter while keeping exactly the same computations;
  we timed it with a cache-favorable sequential access pattern compared to a cache-unfavorable one.
  Depending on the machine and processor clocking configuration, the ratio of the two timings varies between 13 and 40 if one core is used, up to 50 with two cores.}, i.e., it has to be fetched from the next level cache or DRAM-based main memory.

%% Static cache analysis
The purpose of \emph{cache analysis} is to statically classify every memory access at every machine-code instruction in a program into one of the following three classes:
\begin{enumerate}
	\item ``always hit'': each dynamic instance of the memory access results in a cache hit;
	\item ``always miss'': each dynamic instance of the memory access results in a cache miss;
	\item there exist dynamic instances that result in a cache hit and others that result in a cache miss.
\end{enumerate}
This is of course, in general, an undecidable question%, as it involves determining whether a given instruction is reachable or not
; so all analyses involve some form of abstraction, and may classify some accesses as ``unknown''.
An analysis is deemed more precise than another if it produces fewer unknowns.

%% The first need for static cache analysis: The consequence for safety-critical real-time systems
For the certification of safety-critical real-time applications, it is often necessary to bound a program's \emph{worst-case execution time} (WCET).
For instance, if a control loop runs at 100 Hz, then it is imperative to show that the program's WCET is less than $0.01$ seconds.
In architectures involving caches, i.e., any modern architecture except the lowest-performance ones, such WCET analysis \cite{DBLP:journals/tecs/WilhelmEEHTWBFHMMPPSS08}%
\footnote{WCET static analysis tools include, among others, \href{https://www.absint.com/ait/}{\software{aiT}}, an industrial tool from Absint GmbH, and \href{http://otawa.fr/}{\software{Otawa}}, an academic tool.}
must thus take into account caches.
For pipelined and superscalar architectures, it is very important to have precise information about the cache behavior, since pipeline analysis must consider the two cases ``cache hit'' and ``cache miss'' for any memory access that cannot be shown to ``always hit'' or ``always miss''~\cite{Lundqvist99, Reineke06}, leading to a state explosion.
Thus, imprecise cache analysis may have two adverse effects on WCET analysis:
\begin{inparaenum}[(a)]
\item excessive overestimation of the WCET compared to the true WCET%
\footnote{An industrial user may suspect this when the upper bound on WCET given by the tools is far from experimental timings.
This may discourage the user from using static analysis tools.}
\item excessively high analysis time due to state explosion.
\end{inparaenum}
Improvements to cache analysis precision are thus of high importance in this respect; but they must come at reasonable cost.

%% The second need for static cache analysis: The consequence for cache side-channel vulnerabilities
Caches also give rise to side channels that can be used to extract or transmit sensitive information.
%The recent SPECTRE-style vulnerabilities~\cite{Kocher2018} make use of cache-based covert channels to exfiltrate secret information obtained during speculative execution. 
For example, cache timing attacks on software implementations of the Advanced Encryption Standard \cite{Bernstein_2005} were one motivation for adding specific hardware support for that cipher to the x86 instruction set~\cite{Mowery:2012:AXC:2381913.2381917}.
Cache analysis may help identify possibilities for such \emph{side-channel attacks} and quantify the amount of information leakage~\cite{Doychev2015,Doychev:2017:RAS:3062341.3062388};
improved precision in cache analysis translates into fewer false alarms and tighter leakage bounds.

%% The state of the cache union:
%% Ferdinand and Wilhelm: fast but not exact; abstract interpretation; range of ages
%% Touzeau et al.: exact but not always fast; model checking for unclassified accesses

\subsection{Cache Organization, Cache Analysis, and the State-of-the-Art}
%terminology: i see cache lines as the containers for memory blocks; i.e. the items cached are memory blocks, and they are stored in cache lines
Instruction and data caches are usually set-associative and thus partitioned into \emph{cache sets}.
Each memory block may reside in exactly one of these cache sets, determined by a simple computation from its address.
Each cache set consists of multiple \emph{cache lines}, which may each be used to store a single memory block.
The number of cache lines $\cacheWays$ in each cache set is known as the \emph{number of ways} or the \emph{associativity} of the cache.
Upon a cache miss to a memory block that maps into a full cache set, one of the $\cacheWays$ cached memory blocks must be evicted from the set.
There exist several \emph{policies} for choosing which memory block to evict.
%the policy used in a processor and the number $\cacheWays$ of ways are selected by the designer according to many factors.
In this article, we focus on the \emph{least recently used} policy (LRU):
the least recently used memory block within a cache set is evicted.
LRU has been extensively studied in the literature and is frequently used in practice, e.g., in processors such as the \software{MPC603E}, the \software{TriCore17xx}, or the \software{TMS320C3x}.

As noted before, cache analysis is in general undecidable%, as it involves determining whether a given instruction is reachable or not
; so all analyses involve some form of abstraction.
Most work on cache analysis separates the concerns of
\begin{enumerate}[(a)]
\item\label{enum:cf_analysis} control-flow analysis, determining which execution paths can be taken, 
\item\label{enum:pointer_analysis} pointer analysis, determining which memory locations may be accessed by instructions, and
\item\label{enum:cache_analysis}  cache analysis proper.
\end{enumerate}

Concerns (\ref{enum:cf_analysis}) and (\ref{enum:pointer_analysis}) are in general undecidable, so safe and terminating analyses use some form of over-approximation --- they may consider some paths to be feasible when actually they are not, or that pointers may point to memory locations when actually they cannot.
In our case, as in many other works, we assume we are given a control-flow graph $G$ of the program decorated with the memory locations that are possibly accessed by the program, but without the functional semantics (arithmetic, guards etc.);
the executions of this control-flow graph are thus a superset of those of the program.
This is what we will consider as a concrete semantics --- though we shall sketch, as future work, in \Autoref{ref:future_work}, how to recover some of the precision lost by using that semantics by reintroducing information about infeasible paths.

In this paper, we focus on (\ref{enum:cache_analysis}) cache analysis proper for LRU caches.
When it comes to LRU cache analysis, the state-of-the-art currently presents a tradeoff between \emph{precision} and \emph{analysis efficiency}:

The classical approach to the static analysis of LRU caches \cite{DBLP:journals/rts/FerdinandW99} is a {\semibold highly-efficient abstract interpretation} that essentially keeps for each block a range of how many other blocks have been used more recently.
This analysis is exact for straight-line programs, but loses precision in general when tests are involved: the join operation adds spurious states, which may translate into classification of memory accesses as ``unknown,'' whereas, with respect to the concrete semantics of~$G$, they should be classified as ``always hit'' or ``always miss''.

%In contrast, our analysis is always exact w.r.t. the concrete semantics of $G$.

Recently, \citet{DBLP:conf/cav/TouzeauMMR17} proposed an exact analysis, i.e., it exactly classifies memory accesses with respect to the concrete semantics, into ``always hit'', ``always miss'', or ``hits or misses depending on the execution''.
Their approach encodes the concrete cache state transitions into a reachability problem, fed to a symbolic model checker.
Since that approach was {\semibold slow}, they also proposed a fast abstract pre-analysis able to detect cases where an access ``hits or misses depending on the execution''.
The model-checking algorithm is then only applied to the relatively infrequent cases where accesses are still classified as ``unknown'' by their new analysis and the classical ones by Ferdinand and Wilhelm.

%In contrast, we simply apply an abstract interpretation in a suitable domain, which provably yields exactly the same results, but at a much lower cost.

% {\semibold exactly classifies memory accesses with respect to the concrete semantics, into ``always hit'', ``always miss'', or ``hits or misses depending on the execution''}.

\subsection{Contributions}
%% Our contribution: fast and exact analysis based on antichains; represented by ZDDs; brag a bit about experimental resutls
%% Secondary contribution: NP-hardness proofs that show that exact analysis is in principle hard; our experimental results show that most instances are "easy"

In this paper, we develop an analysis based on abstract interpretation that comes close to the efficiency of the classical approach by \citet{DBLP:journals/rts/FerdinandW99} while achieving exact classification of all memory accesses as the model-checking approach by \citet{DBLP:conf/cav/TouzeauMMR17}.
In other terms, we introduce an exact and scalable analysis by carefully refining the abstraction and using suitable algorithms and data structures.

Our main contribution is the introduction of a new exact abstraction for LRU caches that is based on a partial order of cache states.
To classify cache misses (cache hits), it is sufficient to only keep minimal (maximal) elements w.r.t. this partial order.
As a consequence, the abstraction may be exponentially more succinct than the model-checking approach followed by \citet{DBLP:conf/cav/TouzeauMMR17}.

We improve the focused semantics of  \citet{DBLP:conf/cav/TouzeauMMR17} by removing subsumed elements with upward and downward closures. 
This form of convergence acceleration preserves the precision of the final classification. 

We discuss a suitable data structure for this abstraction based on zero-suppressed binary decision diagrams (ZDDs), and an implementation on top of \software{Otawa}~\cite{DBLP:conf/seus/BallabrigaCRS10} and \software{Cudd}~\cite{DBLP:journals/sttt/Somenzi01}.
Our experimental evaluation shows an analysis speedup of up to \oldnew{602}{950} compared with the prior exact approach by \citet{DBLP:conf/cav/TouzeauMMR17}.
The geometric mean of the speedup across all studied benchmarks is at least \oldnew{24}{9}\footnote{On a number of benchmarks the prior exact approach timed out at 12 hours. For these benchmarks, we conservatively assume an analysis time of 12 hours, and may thus underestimate the actual speedup, had the analysis been run to completion.}.
On the other hand, compared with the imprecise age-based analysis of \citet{DBLP:journals/rts/FerdinandW99} we observe an average slowdown across all benchmarks of only \oldnew{3.53}{3.46}.

Our secondary contribution is a proof that both of the problems that we address (existence of a trace leading to a cache hit, existence of a trace leading to a cache miss) are NP-complete.
To the best of our knowledge, this was not known previously, whereas it justifies using imprecise abstractions, as in the traditional age-based analyses,  and/or algorithms with non-polynomial worst-case complexity, as in our analysis.

\subsection{Outline}
In \Autoref{sec:problem} we define the static analysis problem for LRU caches.
In \Autoref{sec:motivating_example} we illustrate how the results of our analysis are more precise than those of classical analysis on a small example.
In \Autoref{sec:fixpoint} we reformulate this problem as a least fixpoint, then give two exact abstractions, one for ``always hit'', the other for ``always miss'' results, each of which can be implemented by computations over antichains.
In \Autoref{sec:zdd} we explain the algorithms and data structures used for the upward and downward closures, using and extending zero-suppressed binary decision diagrams (ZDDs).
In \Autoref{sec:extensions} we describe a few possible extensions and variants of our approach.
In \Autoref{sec:complexity} we discuss complexity issues and show that the analysis problems we solve are NP-hard, thus justifying the use of potentially exponentially large ZDDs.
In \Autoref{sec:implementation} we describe our implementation and our experimental results.
In \Autoref{sec:related_work} we discuss related work.
We conclude the paper sketching possible avenues for future work in \Autoref{sec:conclusion}.
%%% Local Variables:
%%% mode: latex
%%% TeX-master: "LRU_cache_analysis_POPL2019"
%%% End:

\section{Problem Setting}
\label{sec:problem}

\begin{figure}
\begin{subfigure}{0.45\textwidth}
\begin{center}
    \begin{tikzpicture}[node distance=4em,->,auto]
      \node (start) [draw,diamond] { $\emptyset$ };
      \node (q0) [right of=start] { $v_0$ };
      \node (q1) [above right of=q0] { $v_1$ };
      \node (q3) [below right of=q1] { $v_3$ };
      \node (q2) [below right of=q0]  { $v_2$ };
      \path (start) edge node {$a$} (q0);
      \path (q0) edge node { $a$ } (q1);
      \path (q2) edge node { $c$ } (q0);
      \path (q1) edge node { $b$ } (q2);
      \path (q1) edge node { $d$} (q3);
      \path (q3) edge node { $e$} (q2);
    \end{tikzpicture}
\end{center}
\caption{Original control-flow graph for two cache sets:
$\{a,e\}$ and $\{b,c,d\}$.}

\label{fig:CFG}
\end{subfigure}
\hfill
\begin{subfigure}{0.45\textwidth}%
\begin{center}
    \begin{tikzpicture}[node distance=4em,->,auto]
      \node (start) [draw,diamond] { $\emptyset$ };
      \node (q0) [right of=start] { $v_0$ };
      \node (q1) [above right of=q0] { $v_1$ };
      \node (q3) [below right of=q1] { $v_3$ };
      \node (q2) [below right of=q0]  { $v_2$ };
      \path (start) edge node {$a$} (q0);
      \path (q0) edge node { $a$ } (q1);
      \path (q2) edge node { $\noAccess$ } (q0);
      \path (q1) edge node { $\noAccess$ } (q2);
      \path (q1) edge node { $\noAccess$} (q3);
      \path (q3) edge node { $e$} (q2);
    \end{tikzpicture}
\end{center}
\caption{The same control-flow graph focused for cache set $\{a,e\}$.}
\end{subfigure}

\caption{Slicing of a control-flow graph according to a cache set}
\label{fig:CFG-slicing}
\end{figure}
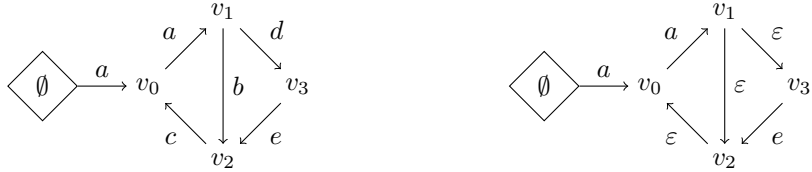

\begin{figure}
\begin{tikzpicture}[->,node distance=8em, auto]
\node (start) [draw,diamond] { $\emptyset$ };
\node (q1) [right of=start] { $(a,\noLine,\noLine,\noLine)$ };
\node (q2) [right of=q1, align=center] { $v_2$:\\ $(b,a,\noLine,\noLine)$\\$(c,a,\noLine,\noLine)$};
\node (q3) [right of=q2] { };
\path (start) edge node {$a$} (q1);
\path (q1) edge[bend left] node {$b$} (q2);
\path (q1) edge[bend right] node {$c$} (q2);
\path (q2) edge[dotted] (q3);
\end{tikzpicture}

\caption{At $v_2$, two cache states are possible, according to the path of arrival. In both of them $a$ is present, so $a$ is a ``must hit'' on edges going out of $v_2$. $b$ is present in one of them, so it is a ``may hit''. $d$ is present in neither, so it is a ``must miss''.}
\label{fig:must_may_hit}
\end{figure}
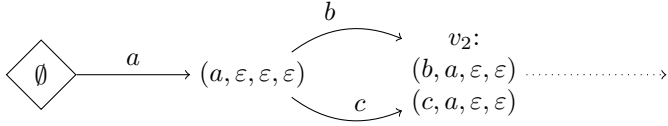

As is common in cache analysis, we assume the following analyses have already been performed:
\begin{enumerate}[(a)]
\item A control-flow graph has been reconstructed from the machine code of the program under analysis (perhaps with some knowledge about the compiler and/or dumps of its intermediate representation).
\item For instruction cache analysis, all code addresses are known.
This is the normal situation for embedded software running on platforms with no operating system or a lightweight one;
it also applies to logical addresses%
\footnote{For systems with a memory management unit (MMU) and ``virtual memory'' we distinguish the logical addresses, as seen from the program, and the physical addresses, as seen from RAM.}
for non-relocatable programs running in full-fledged operating systems.
\item For data cache analysis, a points-to analysis has been run to obtain, for all memory accesses, a superset of the memory locations that may be affected.
In programs using pointer arithmetic or array accesses, this points-to analysis may need a value analysis.
Note that our analysis gives exact results for write-through write-allocate data caches, and that adaptation for other write policies is future work.
\item For mixed instruction/data caches, both of the above must have been done.
\item For caches addressed by physical addresses, the mapping from logical (virtual) to physical addresses must be known.%
\footnote{This is impossible for general operating systems with dynamic memory paging, but is possible for embedded systems: one may have a MMU to isolate processes from each other, e.g. high criticality from low criticality, with a fixed memory layout.}
\end{enumerate}

The result is a control-flow graph $G$ with vertices representing program locations.
An edge from $v_1$ to $v_2$ is labeled with the address of the memory block accessed when control steps from $v_1$ to $v_2$, or with $\noAccess$ if no access is made (e.g. we are analyzing the data cache and the instruction corresponding to the edge accesses no data).
Note that this is not, in general, the same thing as the address of the memory access: for instance, with 64-byte lines, an access to a byte at address 127 is considered to be an address to the line at address 64.
One memory access can extend over several cache lines: for instance, if instead of a byte we access a 4-byte word at address 127, on a processor allowing unaligned accesses, then we access successively two lines at addresses 64 and 128.
% Jan: is this actually possible? on MIPS word accesses have to be word-aligned, i.e., the addresses need to be multiples of 4
% DM: On x86 unaligned accesses are allowed
More generally, if several memory accesses are performed at the same program location, this location must be split into several sub-locations according to the order of the accesses.

Furthermore, this graph may actually be a multigraph, with several edges, labeled differently, between the same pair of vertices: if points-to analysis returns a set of several possible addresses for one access, there is one edge per address.

In our examples, for instance in \Autoref{fig:CFG}, lowercase letters $a$, $b$\dots denote such addresses.
$G$ has special start vertices labeled with either $\emptyset$, meaning that program execution is assumed to start with an empty cache, or $\top$, meaning that program execution may start with an arbitrary cache state (all legal combinations of memory blocks and empty lines are possible);
other classes of initial vertices may be added if needed.
All program executions must start at a start vertex.
Without loss of generality, we assume all vertices and edges to be reachable from start vertices, and the start vertices not to be endpoints of any edges.

In an LRU cache, as with almost all replacement policies, each cache set is treated independently.
One can thus analyze the behavior of the program completely independently on each cache set~$S$: $G$ is \emph{sliced} according to cache set $S$ by replacing each address not in $S$ by $\noAccess$ (see \Autoref{fig:CFG-slicing}).
{\semibold In the rest of the article, unless noted otherwise, we shall thus assume a single cache set, without loss of generality.}
For efficiency, an implementation may wish to collapse vertices and edges making $\noAccess$ transitions only.

An LRU cache encountering an access to memory block $a$ loads $a$ into the cache, for instance, after an access to $a$ the cache state of a 4-way LRU cache is $(a,x,y,z)$ where $a$ is the most recently used and $z$ the least recently used.
Two cases may occur:
\begin{inparadesc}
\item[(Hit)] $a$ is already in the cache and is ``rejuvenated'', i.e., cache-internal status bits are updated to record that $a$ is the most-recently-used memory block; for instance, from the cache state $(x,y,a,z)$ and access to $a$ leads to the cache state $(a,x,y,z)$ and from the cache state $(a,x,y,z)$ an access to $a$ does not change the cache state\footnote{In FIFO caches, there is no rejuvenation: a ``hit'' does not change the cache.}.
\item[(Miss)] $a$ is not in the cache and the ``oldest'' memory block is evicted from $a$'s cache set, for instance, from a cache state $(w,x,y,z)$ an access to $a$ evicts block $z$ and leads to the cache state $(a,w,x,y)$.
\end{inparadesc}

As a consequence, along a program execution $E$, assuming an initially empty cache, an access to $a$ is a hit \emph{if and only if}
 at most $\cacheWays-1$ distinct memory blocks have been accessed along $E$ since the last access to $a$ (several accesses to the same memory block count as one).

An edge is said to be ``always hit'' if all executions passing through this edge encounter a cache hit at this edge; otherwise it is said to be ``may miss''.
An edge is said to be ``always miss'' if all executions passing through this edge encounter a cache miss at this edge; otherwise it is said to be ``may hit''.
See \Autoref{fig:must_may_hit} for an example.
We shall propose two analyses, one for classifying edges as either ``always hit'' or ``may miss'', the other for classifying edges as ``always miss'' or ``may hit''.

%%% Local Variables:
%%% mode: latex
%%% TeX-master: "LRU_cache_analysis_POPL2019"
%%% End:

\section{Motivating Examples}
\label{sec:motivating_example}
\subsection{Age-based Analysis vs Precise Analysis}
\begin{figure}[h]
\begin{center}
\begin{tikzpicture}[->,auto,node distance=4em]
\node (start) [draw,diamond] { $\sigma_0: \emptyset$ };
\node (q1) [right of=start] { $\sigma_1$ };
\node (q2) [above right of=q1] { $\sigma_2$ };
\node (q3) [right of=q2] { $\sigma_3$ };
\node (q4) [right of=q3] { $\sigma_4$ };
\node (q5) [below of=q3] { $\sigma_5$ };
\node (q6) [below right of=q4] { $\sigma_6$ };
\node (q7) [right of=q6] { $\sigma_7$ };
\node (q8) [right of=q7] { $\sigma_8$ };
\node (q9) [above right of=q6] { $\sigma_9$ };
\node (q10) [right of=q9] { $\sigma_{10}$ };
\node (q11) [right of=q10] { $\sigma_{11}$ };

\path (start) edge node {$a$} (q1);
\path (q1) edge node {$c$} (q2);
\path (q2) edge node {$b$} (q3);
\path (q3) edge node {$d$} (q4);
\path (q6) edge node {$c$} (q7);
\path (q7) edge node {$a$} (q8);
\path (q4) edge node { $\noAccess$ } (q6);
\path (q1) edge node {$b$} (q5);
\path (q5) edge node { $\noAccess$ } (q6);
\path (q6) edge node {$a$} (q9);
\path (q9) edge node {$e$} (q10);
\path (q10) edge node {$c$} (q11);
\end{tikzpicture}
\end{center}
\caption{Example of control-flow graph where classical age-based ``must-hit'' analysis yields suboptimal results.}
\label{fig:cfg_suboptimal_age_must_hit}
\end{figure}
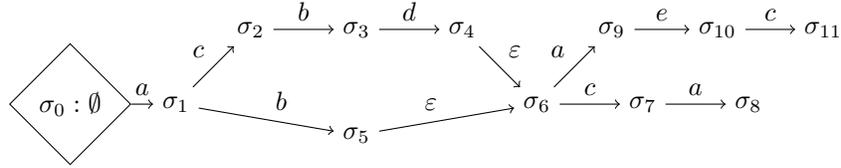

\begin{table}[h]
\caption{Concrete and abstract states in the analysis of the control-flow graph in \protect\Autoref{fig:cfg_suboptimal_age_must_hit}. To avoid too wide a table we omit the analysis results for $e$.
$\absent$ means ``absent''.}
\label{tab:cfg_suboptimal_age_must_hit}
\begin{center}
\footnotesize\setlength{\tabcolsep}{0.7ex}%
\resizebox{\textwidth}{!}{
\begin{tabular}{cccccccccc}
\toprule
  & Reachable cache states
  & \multicolumn{4}{c}{Age-based analysis}
  & \multicolumn{4}{c}{Block-focused analysis}\\
  \midrule
  && a & b & c & d & a & b & c & d\\
  \cmidrule(lr){3-6} \cmidrule(lr){7-10}
$\sigma_0$ & $(\noLine, \noLine, \noLine, \noLine)$
           & $\infty$ & $\infty$ & $\infty$ & $\infty$
           & $\absent$ & $\absent$ & $\absent$ & $\absent$ \\
$\sigma_1$ & $(a, \noLine, \noLine, \noLine)$
           & $0$ & $\infty$ & $\infty$ & $\infty$
           & $\emptyset$ & $\absent$ & $\absent$ & $\absent$ \\
$\sigma_2$ & $(c, a, \noLine, \noLine)$
           & $1$ & $\infty$ & $0$ & $\infty$
           & $\{ c \}$ & $\absent$ & $\emptyset$ & $\absent$ \\
$\sigma_3$ & $(b, c, a, \noLine)$
           & $2$ & $0$ & $1$ & $\infty$
           & $\{ b, c \}$ & $\emptyset$ & $\{ b \} $ & $\absent$ \\
$\sigma_4$ & $(d, b, c, a)$
           & $3$ & $1$ & $2$ & $0$
           & $\{ b, c, d \}$ & $\{ d\}$ & $\{ b, d \} $ & $\emptyset$ \\
$\sigma_5$ & $(b, a, \noLine, \noLine)$
          & $1$ & $0$ & $\infty$ & $\infty$
          & $\{ b \}$ & $\emptyset$ & $\absent$ & $\absent$ \\
$\sigma_6$ & $(d, b, c, a)$, $(b, a, \noLine, \noLine)$
          & $[1,3]$ & $[0,1]$ & $[2,\infty]$ & $[0,\infty]$
          & $\{ b, c, d \}$, $\{ b \}$ & $\{d\}$, $\emptyset$
          & $\{ b, d \}$, $\absent$ & $\emptyset$, $\absent$ \\
$\sigma_7$ & $(c, d, b, a)$, $(c, b, a, \noLine)$
           & $[2,\infty]$ & $[1,2]$ & $0$ & $[1,\infty]$
           & $\{ b, c, d\}$, $\{b, c\}$ & $\{ c, d\}$, $\{ c \}$
           & $\emptyset$ & $\{ c \}$, $\absent$ \\
$\sigma_8$ &$(a, c, d, b)$, $(a, c, b, \noLine)$
           & $0$ & $[2,3]$ & $1$ & $[2,\infty]$
           & $\emptyset$ & $\{ a, c, d \}$, $\{ a, c \}$
           & $\{ a \}$ & $\{ a, c\}$, $\absent$ \\
$\sigma_9$ & $(a, d, b, c)$, $(a, b, \noLine, \noLine)$
          & $0$ & $[1,2]$ & $[2,\infty]$ & $[1,\infty]$
          & $\emptyset$ & $\{a, d\}$, $\{ a \}$
          & $\{ a, b, d \}$, $\absent$ & $\{ a \}$, $\absent$ \\
$\sigma_{10}$ & $(e, a, d, b)$, $(e, a, b, \noLine)$
          & $1$ & $[2,3]$ & $[3,\infty]$ & $[2,\infty]$
          & $\{e \}$ & $\{a, d, e\}$, $\{ a, e \}$
          & $\absent$ & $\{ a, e \}$, $\absent$ \\
$\sigma_{11}$ & $(c, e, a, d)$, $(c, e, a, b)$
          & $2$ & $[3,\infty]$ & $0$ & $[3,\infty]$
          & $\{c, e \}$ & $\absent$, $\{ a, c, e \}$
          & $\absent$ & $\{ a, c, e \}$, $\absent$ \\
          \bottomrule
\end{tabular}}
\end{center}
\end{table}

To analyze LRU caches, it is convenient to introduce the following notion of the \emph{age} of a memory block:
The age of a block $x$ in a concrete cache state is the number of blocks younger than $x$, i.e., the number of blocks that have been accessed more recently than $x$, or $\infty$ if $x$ is not in the cache.
For instance, assuming a cache of associativity $4$, in the cache state $(c, b, a, \noLine)$, from youngest to oldest, $c$ has age $0$, $b$ has age $1$, $a$ has age $2$, $d$ has age $\infty$.

The classical age-based analysis \cite{DBLP:journals/rts/FerdinandW99} abstracts the set of possible cache states as follows: to each block $x$ it attaches a range of possible ages in the cache.
This abstraction may lead to imprecise results, that is, it may fail to conclude that an access is always a hit (respectively, a miss) whereas it is truly always a hit (respectively, a miss) in all executions.
Consider the control-flow graph in \Autoref{fig:cfg_suboptimal_age_must_hit}, and the corresponding concrete and abstract cache states in \Autoref{tab:cfg_suboptimal_age_must_hit}.
At $\sigma_4$, the age of $a$ is $3$ and at $\sigma_5$ it is $1$, so at $\sigma_6$ it is known to be in $[1,3]$;
similarly the age of $c$ is known to be in $[2,\infty]$.
When analyzing $\sigma_6 \xrightarrow{c} \sigma_7$, it is thus unknown whether $c$ is younger or older than $a$ in the cache at $\sigma_6$: in the former case $a$'s age does not change, whereas in the latter case $a$'s age increases by one.
The age-based analysis concludes that at $\sigma_7$, the age of $a$ is in $[2,\infty]$, whereas the exact range is $[2,3]$.
The age-based analysis thus cannot conclude that $\sigma_7 \xrightarrow{a} \sigma_8$ is a hit, which is the case in reality.

Instead of abstracting a concrete state with respect to a block $x$ by the \emph{number} of blocks younger than $x$, we abstract it by the \emph{set} of these blocks; thus a set of concrete states is abstracted by a set of sets of blocks.
For instance, at $\sigma_7$, we consider the possible sets of blocks younger than $a$ in the cache: $\{b, c, d \}$ and $\{b, c\}$.
This analysis is exact, in the sense that no precision is lost by performing analysis steps on the abstract states compared to performing the steps concretely and then abstracting the final result.

Note that the second set is included in the first. If $a$ is a hit after executing a sequence of steps from $\sigma_7$, from a cache state where $a$ is preceded by $\{b, c, d \}$, then \emph{a fortiori} the same sequence of steps also results in a hit if started in a cache state where $a$ is preceded by $\{b, c\}$.
We can thus discard $\{b, c, d \}$ from an analysis aimed at discovering the existence of hits (``may hit'' analysis).
Similarly, we can discard $\{b, c\}$ from an analysis aimed at discovering the existence of misses (the complement of an ``always hit'' analysis).
Again, doing this does not impact precision.

Similarly, in the age-based analysis, when analyzing $\sigma_6 \xrightarrow{a} \sigma_9$, it is unknown whether $c$ is younger or older than $a$ in the cache at $\sigma_6$: in the former case $c$'s age does not change, whereas in the latter case $c$ ages by one.
The age-based analysis concludes that at $\sigma_9$, the age of $c$ is in $[2,\infty]$, whereas the exact range is $[3,\infty]$.
The step $\sigma_9 \xrightarrow{e} \sigma_{10}$, with a fresh letter $e$, results in an increase of the ages of all other blocks.
The age-based analysis thus cannot conclude that $\sigma_{10} \xrightarrow{c} \sigma_{11}$ is a miss, which is the case in reality.

Note that the ``definitely-unknown'' abstract analysis proposed by \citet{DBLP:conf/cav/TouzeauMMR17} would not help in any way: it resolves some of the cases where there are execution traces leading both to a hit and a miss at the same location, which is not the case here.
Their approach would then have to call a model checker to establish that $\sigma_7 \xrightarrow{a} \sigma_8$ is a ``must hit'' and $\sigma_{10} \xrightarrow{c} \sigma_{11}$ a ``must miss''.
The purpose of our analysis is to replace this expensive call to a model checker by an abstract interpretation that yields the same result.

\subsection{Collecting Semantics vs Focused Semantics vs Antichain}

\begin{figure}[h]
\begin{center}
\begin{tikzpicture}[->,auto,node distance=4em]
\node (q0) [draw,diamond] { $\sigma_0: \emptyset$ };
\node (q1) [right of=q0] { $\sigma_1$ };
\node (q2) [right of=q1] { $\sigma_2$ };
\node (qdots) [right of=q2] { $\dots$ };
\node (qdotstwo) [right of=qdots] { $\dots$ };
\node (qn) [right of=qdotstwo] { $\sigma_n$ };
\node (qn1) [right of=qn] { $\sigma_{n+1}$ };
\path (q0) edge node { $a$ } (q1);
\path (q1) edge [bend left] node[above] { $b_1$ } (q2);
\path (q1) edge [bend right] node[below] { $\noAccess$ } (q2);
\path (q2) edge [bend left] node[above] { $b_2$ } (qdots);
\path (q2) edge [bend right] node[below] { $\noAccess$ } (qdots);
\path (qdots) edge [dotted,bend left] (qdotstwo);
\path (qdots) edge [dotted,bend right] (qdotstwo);
\path (qdotstwo) edge [bend left] node[above] { $b_{n-1}$ } (qn);
\path (qdotstwo) edge [bend right] node[below] { $\noAccess$ } (qn);
\path (qn) edge [bend left] node[above] { $b_n$ } (qn1);
\path (qn) edge [bend right] node[below] { $\noAccess$ } (qn1);
\end{tikzpicture}
\end{center}
\caption{Example where collecting and focused semantics are unnecessarily detailed and inefficient. $\noAccess$ means ``no access''.}
\label{fig:collecting_and_focused_are_inefficient}
\end{figure}
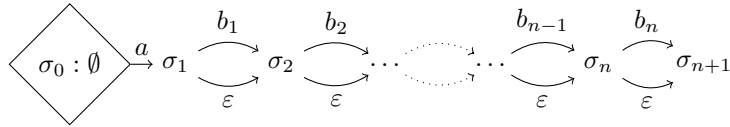

Consider the control-flow graph of \Autoref{fig:collecting_and_focused_are_inefficient}, with an associativity $\cacheWays > n$, and an empty initial cache.
At the last control location $\sigma_{n+1}$, the possible cache states are all subsequences of $b_n,\dots,b_1$ followed by $a$ and possibly empty lines, e.g. $(b_5,b_3,b_1,a, \noLine,\noLine)$. There are therefore $2^n$ reachable cache states at  $\sigma_{n+1}$, all of which appear in the collecting semantics of the program composed with the cache.

The ``block-focused'' abstraction, which was also applied by \citet{DBLP:conf/cav/TouzeauMMR17} when encoding cache problems into model-checking reachability problems, records only the set of blocks younger than the block of interest.
Here, if our block of interest is $a$, this abstraction thus yields at $\sigma_{n+1}$ the set of subsets of $\{ b_1, \dots, b_n \}$.
Of course, symbolic set representation techniques may have a compact representation for such a set, but the main issue is that this set keeps too much information.

If our goal is to prove the existence of a ``hit'' on an access to memory block $a$ further down the execution, then it is sufficient to keep, in this set, $\emptyset$, corresponding to a path composed of the access to $a$ followed by a sequence of no accesses
$\noAccess$.
More generally, it is sufficient to keep only the minimal elements (with respect to the inclusion ordering) from this set, which can be exponentially more succinct, as in this example.
Similarly, if our goal is to prove the existence of a ``miss'' on $a$ further down the execution, then it is sufficient to keep, in this set, $\{ b_1, \dots, b_n \}$, corresponding to a path composed of the access to $a$ followed by a sequence of accesses to $b_1$, \dots, $b_n$.
More generally, it is sufficient to keep only the maximal elements from this set.

This is the main difference between our analysis and the ``focused'' model that \citet{DBLP:conf/cav/TouzeauMMR17} fed into the model checker: the ``focused'' model contains unnecessary information (non-minimal elements for the Always-miss analysis, non-maximal elements for the Always-hit analysis), which increases model-checking times.
We discard these in our analysis.

The following section formally defines the collecting and focused semantics, and our new Always-hit and Always-miss analyses.
The Always-hit analysis (respectively Always-miss) computes the antichain of maximal elements (respectively minimal elements) i.e. the downward (respectively upward) closure of reachable states.

%%% Local Variables:
%%% mode: latex
%%% TeX-master: "LRU_cache_analysis_POPL2019"
%%% End:

\section{Analyses as Fixed-Point Problems}
\label{sec:fixpoint}
\subsection{Collecting Semantics}
To each vertex $l$ we attach a set $C_l$ of possible cache states.
Each cache state $s$ is a sequence $(s_1,\dots,s_{\cacheWays})$ of addresses, from youngest to oldest, possibly ending with one or more special values $\noLine$, meaning that this cache line is empty, and without repetition of addresses except for $\noLine$.
Let $S$ be the set of cache states.

\begin{example}
If $\cacheWays=4$, $(\noLine,\noLine,\noLine,\noLine)$ (empty cache), $(a,b,\noLine,\noLine)$, and $(a,d,c,b)$ are valid cache states;
$(a,b,\noLine,d)$ and $(a,d,a,b)$ are not.
\end{example}

To any start vertex $l$ labeled $\emptyset$, we attach $C_l = \{ (\noLine,\dots, \noLine) \}$, meaning that the only possible cache state at this location is empty.
To any start vertex $l$ labeled $\top$, we attach $C_l = S$,  meaning that any cache state is possible at this location.
The rest of the cache states are obtained as the least solution of:
\begin{description}
\item[Miss:] if $(s_1,\dots,s_{\cacheWays}) \in C_l$, none of the $s_i$ is $a$, and there is an edge $l \xrightarrow{a} l'$, then $(a, s_1,\dots,s_{\cacheWays-1}) \in C_{l'}$: the oldest line is evicted;
\item[Hit:] if $(s_1,\dots,s_{\cacheWays}) \in C_l$, $a$ occurs at position $i$ ($s_i = a$), and there is an edge $l \xrightarrow{a} l'$, then $(a, s_1,\dots,s_{i-1},s_{i+1},\dots,s_{\cacheWays}) \in C_{l'}$ (if $i=\cacheWays$, that is $(a, s_1,\dots,s_{i-1})\in C_{l'}$): block $a$ is ``rejuvenated''.
\end{description}

\begin{example}
  If the cache contains $(a,b,c,d)$ and $b$ is accessed, then $b$ is rejuvenated and the cache then contains $(b,a,c,d)$. If instead $e$ is accessed, then $d$ is evicted and the cache then contains $(e,a,b,c)$.
\end{example}

\subsection{Focused Semantics}
We reuse the ``focused semantics'' proposed by \citet{DBLP:conf/cav/TouzeauMMR17}, as well as their proof of exactness.

Let $a \in \locations$ be some address; we are interested in classifying accesses to $a$.
We can focus the behavior of the cache with respect to $a$ as follows.
A cache state $s_1,\dots,s_{\cacheWays}$ such that $s_i = a$ will be abstracted as the set $\{ s_1,\dots, s_{i-1} \}$ ($\emptyset$ if $i=1$) of blocks present in the cache before~$a$, i.e., younger than~$a$.
A cache state not containing $a$ is abstracted as the special value~$\absent$.

\begin{example}
When focusing on block $a$, cache states $(c,b,a,\noLine)$ and $(b,c,a,d)$ are both abstracted as the set $\{b,c\}$,
and $(c,b,e,\noLine)$ as~$\absent$.
\end{example}

One can easily show that this focused semantics can be directly computed as follows.
To each vertex~$l$ we attach a set $C_{l,a}$ of $a$-focused states.
These sets are ordered by inclusion.
To any start vertex $l$ labeled $\emptyset$, we attach $C_{l,a} = \{ \absent \}$.
To any start vertex $l$ labeled $\top$, we attach
$C_{l,a} = \{ \absent \} \cup
          \{ S \mid S \subseteq \locations \setminus \{ a \}
                    \land |S| \leq \cacheWays-1 \}$.
The rest of the cache states are obtained as the least solution of:
\begin{itemize}
\item if there is an edge $l \xrightarrow{a} l'$,
  then $\emptyset \in C_{l',a}$.
\item if $\absent \in C_{l,a}$ and there is an edge $l \xrightarrow{b} l'$,
  $b \neq a$ then $\absent \in C_{l',a}$;
\item if $S \in C_{l,a}$, $S \neq \absent$, $|S \cup \{b\}| < \cacheWays$
	and there is an edge $l \xrightarrow{b} l'$,
	$b \neq a$ then $S \cup \{ b \} \in C_{l',a}$;
\item if $S \in C_{l,a}$, $S \neq \absent$, $|S \cup \{b\}| = \cacheWays$
	and there is an edge $l \xrightarrow{b} l'$,
	$b \neq a$ then $\absent \in C_{l',a}$;
\end{itemize}

An edge $l \xrightarrow{a} l'$ may result in a miss if and only if
$\absent \in C_{l,a}$.
An edge $l \xrightarrow{a} l'$ may result in a hit if and only if
there exists $S \in C_{l,a}$,  $S \neq \absent$.

Another intuitive characterization is that $C_{l,a}$ is the collection of the sets of addresses found along the paths from the nearest preceding occurrences of~$a$, truncated at associativity;
sets of cardinality greater than or equal to associativity are all abstracted to the special value~$\absent$.

\subsection{Always-Hit Analysis}
\label{sec:may-miss}
To get a better idea of what the Always-hit analysis computes let us first recall the definitions of antichain and upper set, and illustrate this analysis with an example.
\begin{definition}
  An \emph{antichain} is a subset of an ordered set such that no two distinct elements of that subset are comparable.
  An \emph{upper set} (respectively \emph{lower set}) is a set such that if an element is in this set, then all elements larger (respectively, smaller) than it are also in the set.
\end{definition}

\begin{example}
Assume that a node $l$ may be reached with cache states
$C_l = \{ (c,b,e,a)$, $(b,c,d,a)$, $(b,a,\noLine,\noLine) \}$.
The possible a-focused states are:
$C_{l,a} = \left\{ \{ b, c, e \}, \{ b, c, d \}, \{ b \} \right\}$.
Since $\{ b \}$ is strictly included in $\{ b, c, e \}$ (and in $\{ b, c, d \}$), it may not contribute to cache misses that would not also occur following $\{ b, c, e \}$ (and $\{ b, c, d \}$) and can be removed without affecting soundness;
the antichain of the maximal elements of $C_{l,a}$ is $\left\{ \{ b, c, e \}, \{ b, c, d \} \right\}$, which will be called $C^{\max}_{l,a}$.
\end{example}

In all that follows, the ordering will be the inclusion ordering~$\subset$.

Recall that $S \in C_{l,a}, S \neq \absent$ means that at position $l$, there is a reachable cache state of the form $(s_1,\dots,s_{|S|},a,\dots)$ where $S = \{  s_1,\dots,s_{|S|} \}$.
An edge $l \xrightarrow{a} l'$ ``always hits'' if and only if it ``may not miss'', that is, if there is no execution trace leading to a miss at this location, i.e. $\absent \notin C_{l,a}$.

\begin{definition}
  Let $x \xrightarrow{b} y$ denote the transition  ``upon an access to block $b$, $b \neq a$, the cache may move from an $a$-focused state $x$ to an $a$-focused state $y$''.
Recall that $x$ may be $\absent$ ($a$ is not in the cache) or a subset of cache blocks, not containing $a$, of cardinality at most $\cacheWays - 1$.
This deterministic transition relation is defined as follows:
% Jan: why does this relation not consider accesses to $a$? if it would, the four cases would immediately math the four cases in the previous section
% DM: because, essentially, the analysis treats accesses to a specially and the simple exact abstraction relation does not work on accesses to a
\begin{itemize}
\item $\absent \xrightarrow{b} \absent$;
\item $x \xrightarrow{b} x \cup \{ b \}$ for $|x \cup \{b\}| < \cacheWays $;
\item $x \xrightarrow{b} \absent$ for $|x \cup \{b\}| =\cacheWays$.
\end{itemize}
\end{definition}

\begin{comment}
%future work: would imply the two lemmas preceding the respective corollaries
\begin{lemma}[Monotonicity of $\xrightarrow{b}$]
	For x, y \subseteq A \setminus \{a\}...
\end{lemma}
\end{comment}

\begin{definition}
  For $x,y \subseteq \locations \setminus \{ a \}$,
  let $x \xrightarrow{\downarrow} y$ denote a downward closure step:
\begin{itemize}
\item $\absent \xrightarrow{\downarrow} y$ for any $y$;
\item $x \xrightarrow{\downarrow} y$ for any $x,y \neq \absent$, $y \subseteq x$.
\end{itemize}
\end{definition}

\begin{example}
  $\{b, c, e\} \xrightarrow{\downarrow} y$ for any $y \in \{\emptyset, \{b\}, \{c\}, \{e\}, \{b, c\}, \{b, e\}, \{c, e\}, \{b, c, e\}\}$
\end{example}

% Jan: to me it would be more intuitive to have a lemma/theorem that shows that the transition relation is monotone, i.e., $x \subseteq y$, then $x' \subseteq y'$ for $x \xrightarrow{b} x'$ and $y \xrightarrow{b} y'$, where $\subseteq$ is lifted to also work with $\absent$. this would then immediately imply the two lemmas, or it could be used directly to show the corresponding corollaries
% DM: true; but do we wish to rewrite this at day-2?
\begin{lemma}
%  Let $b \neq a$.
%  Let $x,y,z \subseteq \locations \setminus \{ a \}$.
  Assume there are $x,y, z$, and $b$ such that $x \xrightarrow{\downarrow} y  \xrightarrow{b} z$.
  Then there exists $y'$ such that $x \xrightarrow{b} y' \xrightarrow{\downarrow} z$.
\end{lemma}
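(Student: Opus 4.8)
The plan is to exploit the determinism of the transition relation $\xrightarrow{b}$: since each $a$-focused state has a unique $b$-successor, it suffices to take $y'$ to be \emph{the} state with $x \xrightarrow{b} y'$ and then verify $y' \xrightarrow{\downarrow} z$. The whole argument is then a finite case analysis, driven by the two places where the special value $\absent$ can appear. The key quantitative fact I would isolate up front is that every non-$\absent$ focused state $x$ satisfies $|x| \le \cacheWays - 1$, so $|x \cup \{b\}| \le \cacheWays$; this guarantees that the split $|x \cup \{b\}| = \cacheWays$ versus $|x \cup \{b\}| < \cacheWays$ is exhaustive, and it is essentially the only inequality the proof needs.

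First I would dispatch the case $x = \absent$. Then $x \xrightarrow{b} \absent$, so I take $y' = \absent$; since $\absent \xrightarrow{\downarrow} z$ holds for every $z$ (and $z$ is a legitimate state because $y \xrightarrow{b} z$ is assumed), we are done immediately. For the remaining case $x \neq \absent$, the hypothesis $x \xrightarrow{\downarrow} y$ forces $y \neq \absent$ and $y \subseteq x$ by the second clause of the downward-closure definition; this inclusion is what I carry into the next step.

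With $y \subseteq x$ in hand I would branch on the $b$-successor of $x$. If $|x \cup \{b\}| = \cacheWays$, then $y' = \absent$, and $\absent \xrightarrow{\downarrow} z$ for any $z$, so the claim holds irrespective of which successor $y$ took. If instead $|x \cup \{b\}| < \cacheWays$, then $y' = x \cup \{b\}$; here monotonicity of cardinality under $y \subseteq x$ gives $|y \cup \{b\}| \le |x \cup \{b\}| < \cacheWays$, which pins the successor of $y$ to the growing clause, i.e. $z = y \cup \{b\}$. Since $y \subseteq x$ yields $y \cup \{b\} \subseteq x \cup \{b\}$, we get $z \subseteq y'$ with both sides non-$\absent$, hence $y' \xrightarrow{\downarrow} z$ by the second clause of downward closure.

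I expect no genuinely hard step: the argument is a pure case distinction. The only points requiring care are bookkeeping ones — recognizing that $\absent$ is the top element of the downward-closure order, so any step landing in $\absent$ discharges the goal for free, and using the invariant $|x| \le \cacheWays - 1$ to be sure the cardinality cases are complete. The cleanest organization, which I would follow, is to branch on the $b$-successor of $x$ rather than of $y$, because the $\absent$-successor case of $x$ then absorbs both possible behaviors of $y$ at once.
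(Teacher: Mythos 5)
Your proof is correct and follows essentially the same route as the paper's: use determinism of $\xrightarrow{b}$ to fix $y'$, discharge every case where $y' = \absent$ via $\absent \xrightarrow{\downarrow} z$, and in the remaining case derive $z = y \cup \{b\} \subseteq x \cup \{b\} = y'$ from $y \subseteq x$. Your explicit observation that $|y \cup \{b\}| \le |x \cup \{b\}| < \cacheWays$ forces $z \neq \absent$ is a detail the paper leaves implicit, but it does not change the argument.
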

\begin{proof}
  As the transition relation is deterministic, $y'$ is uniquely determined by $x$ and $b$.

We distinguish two cases based on the value of $y'$:
\begin{enumerate}
	\item If $y' = \absent$, then the results follows immediately, as $\absent \xrightarrow{\downarrow} z$ for any $z$.
	\item If $y' \neq \absent$, then $y' = x \cup \{b\}$ with $|y'| < N$. Then $x \neq \absent$ and $y \subseteq x$, and so $z = y \cup \{b\} \subseteq x \cup \{b\} = y'$.
\end{enumerate}
\vspace{-5mm}
\end{proof}

\begin{corollary}
  There exists a sequence of the form
  $u_0 \xrightarrow{b_0} v_0 \xrightarrow{\downarrow}
  u_1 \xrightarrow{b_1} v_1 \xrightarrow{\downarrow} \dots
  u_n \xrightarrow{b_n} v_n  \xrightarrow{\downarrow} u_{n+1}$
  if and only if there exists a sequence of the form
  $u_0 \xrightarrow{b_0} u'_1
       \xrightarrow{b_1} u'_2 \dots
       \xrightarrow{b_n} u'_n \xrightarrow{\downarrow} u_{n+1}$.
\end{corollary}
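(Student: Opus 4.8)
The plan is to treat $\xrightarrow{\downarrow}$ as a preorder and to read both sides of the equivalence as compositions of relations, so that the statement becomes the claim that $\xrightarrow{b_0}\xrightarrow{\downarrow}\cdots\xrightarrow{b_n}\xrightarrow{\downarrow}$ and $\xrightarrow{b_0}\cdots\xrightarrow{b_n}\xrightarrow{\downarrow}$ denote the same relation from $u_0$ to $u_{n+1}$. Before starting, I would record two elementary facts about $\xrightarrow{\downarrow}$ that follow directly from its definition: it is \emph{reflexive} (for every state $x$ we have $x \xrightarrow{\downarrow} x$, using $x \subseteq x$ when $x \neq \absent$ and the clause $\absent \xrightarrow{\downarrow} y$ when $x = \absent$), and it is \emph{transitive} (if $x \xrightarrow{\downarrow} y \xrightarrow{\downarrow} z$ then $x \xrightarrow{\downarrow} z$, by transitivity of $\subseteq$ together with the $\absent \xrightarrow{\downarrow} y$ clause). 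These let me, respectively, insert trivial closure steps and merge two consecutive ones into a single step.

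The ($\Leftarrow$) direction is the easy one, and I would dispatch it first using reflexivity alone: given a witness $u_0 \xrightarrow{b_0} u'_1 \cdots \xrightarrow{b_n} u'_n \xrightarrow{\downarrow} u_{n+1}$, I insert an identity closure $u'_i \xrightarrow{\downarrow} u'_i$ after each intermediate state, which immediately produces a sequence of the required interleaved form. No appeal to the preceding Lemma is needed here.

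The ($\Rightarrow$) direction is the substantive one and is where the Lemma does all the work. I would argue by induction on $n$, the idea being to ``bubble'' every downward-closure step rightward past the $b$-steps until they all accumulate at the end. Concretely, applying the induction hypothesis to the suffix starting at $u_1$ rewrites $u_1 \xrightarrow{b_1} v_1 \xrightarrow{\downarrow} \cdots \xrightarrow{b_n} v_n \xrightarrow{\downarrow} u_{n+1}$ into the pure-$b$ form $u_1 \xrightarrow{b_1} w_1 \xrightarrow{b_2} \cdots \xrightarrow{b_n} w_n \xrightarrow{\downarrow} u_{n+1}$. This leaves a single closure step $v_0 \xrightarrow{\downarrow} u_1$ sitting just before a block of consecutive $b$-steps, and I push it across them one at a time: each occurrence of the pattern $\xrightarrow{\downarrow}\xrightarrow{b_i}$ is replaced, via the Lemma, by $\xrightarrow{b_i}\xrightarrow{\downarrow}$. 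After these swaps the displaced closure reaches the right end, where it meets the final closure, and transitivity merges the two into one, yielding exactly $u_0 \xrightarrow{b_0} \cdots \xrightarrow{b_n} \xrightarrow{\downarrow} u_{n+1}$.

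The main obstacle is not conceptual---the decisive step, local commutation $\xrightarrow{\downarrow}\xrightarrow{b} \subseteq \xrightarrow{b}\xrightarrow{\downarrow}$, is precisely the preceding Lemma---but rather keeping the bookkeeping of the intermediate states and indices honest during the repeated swaps, and respecting the determinism of $\xrightarrow{b}$ (used in the Lemma) when naming the fresh witnesses produced at each step. I expect the cleanest write-up to phrase the whole argument at the level of relation composition, so that ``bubbling'' becomes a finite iteration of the inclusion $\xrightarrow{\downarrow}\xrightarrow{b_i} \subseteq \xrightarrow{b_i}\xrightarrow{\downarrow}$ followed by one use of $\xrightarrow{\downarrow}\xrightarrow{\downarrow} \subseteq \xrightarrow{\downarrow}$, avoiding an explicit chase through named states altogether.
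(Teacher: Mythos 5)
Your argument is correct and is precisely the (omitted) argument the paper intends: the corollary is stated without proof as an immediate consequence of the preceding commutation lemma, and your ``bubbling'' of each $\xrightarrow{\downarrow}$ step rightward, together with reflexivity (to insert trivial closure steps for the easy direction) and transitivity (to merge the accumulated closures at the end), is exactly the standard way to derive it. No gap; the only caveat is a cosmetic indexing slip already present in the paper's own statement (the pure-$b$ form should end in $u'_{n+1}$ rather than $u'_n$), which your write-up handles correctly in spirit.
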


It is thus equivalent to compute the reachable states of the $a$-focused semantics (for transitions different from $a$), then apply downward closure, and to apply downward closure at every step during the computation of reachable states.
In addition, it is obvious that $\absent$ is in a set if and only if it is in its downward closure.
It is thus equivalent to test for a ``may miss'' on the reachable states of the $a$-focused semantics and on their downward closure.

This suggests two possible (and equivalent, in a sense) simplifications to the focused semantics if our goal is to find places where an access to $a$ may be a miss:
\begin{description}
\item[Closure] Replace $C_{l,a}$ by its down-closure $C^{\downarrow}_{l,a}$:
  $S' \in C^{\downarrow}_{l,a}$ if and only if there exists
  $S \in C_{l,a}$ such that $S' \subseteq S$.
\item[Subsumption removal] Replace  $C_{l,a}$ by the antichain of its
  maximal elements:
  $S \in C^{\max}_{l,a}$ if and only if
  $S \in C_{l,a}$ and there is no $S' \in C_{l,a}$
  such that $S \subsetneq S'$.
\end{description}
Note that $C^{\downarrow}_{l,a}$ is the down-closure of $C^{\max}_{l,a}$, and
that $C^{\max}_{l,a}$ is the antichain of maximal elements of $C^{\downarrow}_{l,a}$;
thus $C^{\max}_{l,a}$ is just an alternative representation for $C^{\downarrow}_{l,a}$.
Our idea is to directly compute~$C^{\max}_{l,a}$.

% \begin{remark}
%   The must-hit analysis from \citet{DBLP:journals/rts/FerdinandW99} further abstracts $C_{l,a}$ (or, equivalently, $C^{\downarrow}_{l,a}$) into
%   $\max_{S \in C_{l,a}} |S|$.
% \end{remark}

\subsection{Always-Miss Analysis}
\label{sec:may-hit}

This subsection presents the Always-miss analysis which is the dual of the Always-hit analysis of Section~\ref{sec:may-miss}.
A control location ``always misses'' if and only if it ``may not hit'', that is, if there is no execution trace leading to a hit at this location.

\begin{definition}
  For $x,y \subseteq \locations \setminus \{ a \}$,
  let $x \xrightarrow{\uparrow} y$ denote an upward closure step:
\begin{itemize}
\item $x \xrightarrow{\uparrow} \absent$ for any $x$;
\item $x \xrightarrow{\uparrow} y$ for any $x,y \neq \absent$, $x \subseteq y$.
\end{itemize}
\end{definition}

\begin{lemma}
  Assume there are $x,y, z$, and $b$ such that $x \xrightarrow{\uparrow} y  \xrightarrow{b} z$.
  Then there exists $y'$ such that $x \xrightarrow{b} y' \xrightarrow{\uparrow} z$.
\end{lemma}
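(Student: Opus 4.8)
The plan is to mirror the proof of the preceding lemma for downward closure, exploiting the duality between $\xrightarrow{\downarrow}$ and $\xrightarrow{\uparrow}$ in which $\absent$ plays the role of the top element. Since the transition relation $\xrightarrow{b}$ is deterministic and total, $y'$ is forced: it is the unique successor of $x$ under $b$. The only remaining task is to check that this forced $y'$ satisfies $y' \xrightarrow{\uparrow} z$. I would case split on the value of $z$, because the universal escape clause $y' \xrightarrow{\uparrow} \absent$ renders one of the cases immediate.

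The trivial case is $z = \absent$. Here $y' \xrightarrow{\uparrow} \absent = z$ holds for \emph{every} $y'$ by the first clause of the upward-closure definition, so taking the deterministic $b$-successor of $x$ suffices. This case also absorbs the degenerate possibility $x = \absent$: the only upward step out of $\absent$ is $\absent \xrightarrow{\uparrow} \absent$, which forces $y = \absent$ and hence $z = \absent$.

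The substantive case is $z \neq \absent$. Then the step $y \xrightarrow{b} z$ must be of the ``add $b$'' form, so $z = y \cup \{b\}$ with $|y \cup \{b\}| < \cacheWays$, and in particular $y \neq \absent$. Since $x \xrightarrow{\uparrow} y$ with $y \neq \absent$, the second upward-closure clause yields $x \neq \absent$ and $x \subseteq y$. The key computation is then a short cardinality check: from $x \subseteq y$ we get $x \cup \{b\} \subseteq y \cup \{b\}$, hence $|x \cup \{b\}| \le |y \cup \{b\}| < \cacheWays$; therefore the deterministic step out of $x$ is \emph{also} of the ``add $b$'' form, giving $y' = x \cup \{b\}$. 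Finally $y' = x \cup \{b\} \subseteq y \cup \{b\} = z$, with both sides distinct from $\absent$, so $y' \xrightarrow{\uparrow} z$ by the second clause, closing this case.

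The only point requiring care — and thus the main obstacle, such as it is — is precisely this cardinality argument: one must confirm that the successor of the smaller set $x$ does not overflow to $\absent$ while the successor of $y$ stays proper. The inclusion $x \subseteq y$ guarantees that $x$'s successor is no larger than $y$'s, so this cannot occur; everything else follows directly from the definitions and the determinism already invoked in the downward-closure lemma.
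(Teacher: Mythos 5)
Your proof is correct and follows essentially the same route as the paper's: a case split on $z$, with the $z=\absent$ case dispatched by the universal clause $y' \xrightarrow{\uparrow} \absent$ and the $z\neq\absent$ case handled via $x \subseteq y$ and $y' = x \cup \{b\} \subseteq y \cup \{b\} = z$. You additionally spell out the cardinality check $|x \cup \{b\}| \le |y \cup \{b\}| < \cacheWays$ guaranteeing that $x$'s successor does not overflow to $\absent$, a detail the paper leaves implicit.
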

\begin{proof}
We distinguish two cases based on the value of $z$:
\begin{enumerate}
	\item If $z = \absent$, then the results follows immediately, as $y' \xrightarrow{\uparrow} \absent$ for any $y'$.
	\item If $z \neq \absent$, then $z = y \cup \{b\}$ with $|z| < N$. Then $y \neq \absent$ and $x \subseteq y$, and so $y' = x \cup \{b\} \subseteq y \cup \{b\} = z$.
\end{enumerate}
\vspace{-5mm}
\end{proof}

%\begin{lemma}
%  Let $b \neq a$.
%  Let $x,y,z \subseteq \locations \setminus \{ a \}$.
%  If $x \xrightarrow{\uparrow} y  \xrightarrow{b} z$ then
%  there exists $y'$ such that $x \xrightarrow{b} y' \xrightarrow{\uparrow} z$.
%\end{lemma}
%
%\begin{proof}
%  If $x = \absent$ the problem is trivial. %contradicts the assumption about x in the lemma
%  
%  If $b \in x$ then $b \in y$ thus $z = y$, and $y' = x$, thus the result.
%
%  If $b \notin x$ and $|x| < \cacheWays-1$, then take $y' = x \cup \{ b \}$.
%  \begin{inparaitem}
%  \item If $y = \absent$ the problem is trivial.
%  \item If $b \notin y$ and $|y| < \cacheWays-1$, then $z = y \cup \{ b \}$;
%  since $x \subseteq y$ then $y' \subseteq z$, thus the result.
%  \item If $b \notin y$ and $|y| = \cacheWays-1$, then $z = \absent$ and the problem is trivial.
%  \item If $b \in y$, then $z=y$; also $x \cup \{ b \} \subseteq y$, thus the result.
%  \end{inparaitem}
%\end{proof}

% This is true but unimportant
% \begin{lemma}
%   Let $b \neq a$.
%   Let $x,y,z \subseteq \locations \setminus \{ a \}$.
%   If  $x \xrightarrow{b} y \xrightarrow{\uparrow} z$ then
%   there exists $y'$ such that $x \xrightarrow{\uparrow} y' \xrightarrow{b} z$.
% \end{lemma}

% \begin{proof}
%   If $y = \absent$ then take $y' = \absent$.

%   Assume $y \neq \absent$.
%   Take $y' = z$. In any case $b \in y'$, thus the result.
% \end{proof}

\begin{corollary}
  There exists a sequence of the form
  $u_0 \xrightarrow{b_0} v_0 \xrightarrow{\uparrow}
  u_1 \xrightarrow{b_1} v_1 \xrightarrow{\uparrow} \dots
  u_n \xrightarrow{b_n} v_n  \xrightarrow{\uparrow} u_{n+1}$
  if and only if there exists a sequence of the form
  $u_0 \xrightarrow{b_0} u'_1
       \xrightarrow{b_1} u'_2 \dots
       \xrightarrow{b_n} u'_n \xrightarrow{\uparrow} u_{n+1}$.
\end{corollary}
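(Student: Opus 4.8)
The plan is to derive this corollary from the immediately preceding Lemma by a standard commutation (``bubbling'') argument, exactly dual to the one establishing the Always-Hit corollary. Two elementary facts about $\xrightarrow{\uparrow}$ do the remaining work, and I would establish them first. Reflexivity: $x \xrightarrow{\uparrow} x$ holds for every $x$ (by $x \subseteq x$ when $x \neq \absent$, and directly when $x = \absent$). Transitivity: if $x \xrightarrow{\uparrow} y \xrightarrow{\uparrow} z$ then $x \xrightarrow{\uparrow} z$, proved by a one-line case split on whether $z = \absent$ (immediate) or $z \neq \absent$ (then $y \neq \absent$, so $x \subseteq y \subseteq z$). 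Thus $\xrightarrow{\uparrow}$ is a preorder, which lets me both insert and fuse closure steps freely.

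The easy ($\Leftarrow$) direction is then immediate: a sequence $u_0 \xrightarrow{b_0} u'_1 \xrightarrow{b_1} \cdots \xrightarrow{b_n} u'_n \xrightarrow{\uparrow} u_{n+1}$ is turned into the alternating form by inserting the reflexive closure steps $u'_i \xrightarrow{\uparrow} u'_i$ between consecutive transitions.

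For the main ($\Rightarrow$) direction I would push every closure step rightward past the transition that follows it, using the Lemma to rewrite each occurrence of $\xrightarrow{\uparrow}\xrightarrow{b}$ as $\xrightarrow{b}\xrightarrow{\uparrow}$, and then use transitivity to fuse the freshly produced closure step with the next one. Concretely, I would first prove a bubbling lemma by induction on $k$: if $x \xrightarrow{\uparrow} y$ and $y \xrightarrow{b_1} z_1 \xrightarrow{b_2} \cdots \xrightarrow{b_k} z_k$ is closure-free, then there is a closure-free sequence $x \xrightarrow{b_1} w_1 \xrightarrow{b_2} \cdots \xrightarrow{b_k} w_k$ with $w_k \xrightarrow{\uparrow} z_k$; the inductive step applies the Lemma to $x \xrightarrow{\uparrow} y \xrightarrow{b_1} z_1$ and then the induction hypothesis to the remaining $k-1$ transitions. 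An outer induction on $n$ then removes the closure steps of the alternating sequence one at a time: apply the induction hypothesis to the suffix starting at $u_1$ to obtain a closure-free run ending in a single step $\xrightarrow{\uparrow} u_{n+1}$, feed the leading step $v_0 \xrightarrow{\uparrow} u_1$ through that run with the bubbling lemma, and fuse the two trailing closures by transitivity.

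I do not expect a genuine obstacle here; the mathematical content is entirely contained in the preceding Lemma, and the only subtlety is the bookkeeping of indices together with the observation that adjacent closure steps may be merged, which is exactly what transitivity of $\xrightarrow{\uparrow}$ guarantees. The one point I would be careful about is that the Lemma produces an \emph{existentially} quantified intermediate state at each commutation, so the $w_i$ in the bubbling lemma are built left-to-right and must be named accordingly rather than read off the original run.
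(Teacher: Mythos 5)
Your proof is correct and is essentially the argument the paper intends: the paper states this corollary without proof, treating it as an immediate consequence of iterating the preceding commutation lemma, and your bubbling argument (reflexivity and transitivity of $\xrightarrow{\uparrow}$, plus pushing each closure step rightward past the following transitions and fusing adjacent closures) is exactly the standard way to make that iteration precise. No gaps; the only cosmetic issue is an index slip already present in the corollary's statement (with transitions $b_0,\dots,b_n$ the last closure-free state should be $u'_{n+1}$), which your left-to-right construction handles correctly anyway.
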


It is thus equivalent to compute the reachable states of the $a$-focused semantics (for transitions different from $a$), then apply upward closure, and to apply upward closure at every step during the computation of reachable states.
In addition, it is obvious that there exists $x$ in $X$, $x \neq \absent$, if and only if there exists $y$ in the upward closure of $X$ such that $y \neq \absent$.
It is thus equivalent to test for a ``may hit'' on the reachable states of the $a$-focused semantics and on their upward closure.

This again suggests two possible simplifications to the focused semantics if our goal is to find places where an access to $a$ may be a hit:
\begin{description}
\item[Closure] Replace $C_{l,a}$ by its up-closure $C^{\uparrow}_{l,a}$:
  $S' \in C^{\uparrow}_{l,a}$ if and only if there exists
  $S \in C_{l,a}$ such that $S \subseteq S'$.
\item[Subsumption removal] Replace  $C_{l,a}$ by the antichain of its
  minimal elements:
  $S \in C^{\min}_{l,a}$ if and only if
  $S \in C_{l,a}$ and there is no $S' \in C_{l,a}$
  such that $S' \subsetneq S$.
\end{description}
Note that $C^{\uparrow}_{l,a}$ is the up-closure of $C^{\min}_{l,a}$, and
that $C^{\min}_{l,a}$ is the antichain of minimal elements of $C^{\uparrow}_{l,a}$;
thus $C^{\min}_{l,a}$ is just an alternative representation for $C^{\uparrow}_{l,a}$.
Our idea is to directly compute~$C^{\min}_{l,a}$.

\subsection{A Remark on Lattice Height}
We replace the focused semantics by its upward or downward closure;
this is a form of convergence acceleration, albeit one that preserves the precision of the final result.
We shall see in \Autoref{sec:implementation} that this improves practical performance considerably compared to a version that checks the focused semantics in a model checker.
It is however unlikely that this improvement translates to the worst case; let us see why.

The number of iterations of a data-flow or abstract interpretation analysis is bounded by the height of the analysis lattice, that is, the maximal length of a strictly increasing sequence.
However, this height does not change by imposing that the sets should be lower (respectively upper) closed: just apply the following lemma to $T$, the set of subsets of $\locations$ of cardinality at most $\cacheWays-1$ (plus $\absent$) ordered by inclusion (respectively, reverse inclusion).

\begin{lemma}
  Let $(T,\leq)$ be a partially ordered finite set.
  The lattice of lower subsets of $T$, ordered by inclusion, has height~$|T|$, the same height as the lattice of subsets of $T$.
\end{lemma}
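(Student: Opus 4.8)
The plan is to prove matching upper and lower bounds on the height, where I take the height of a finite poset to be the maximal number of strict-inclusion steps in a chain (equivalently, one less than the maximal number of elements in a chain), consistent with the ``number of iterations'' reading used just above. The reference lattice is the Boolean lattice $2^T$ of all subsets of $T$; its height is $|T|$, witnessed by the chain $\emptyset \subsetneq \{t_1\} \subsetneq \{t_1,t_2\} \subsetneq \dots \subsetneq T$ obtained by adding elements one at a time. Writing $\mathcal{O}(T)$ for the family of lower subsets of $T$ ordered by inclusion, I must show that $\mathcal{O}(T)$ has the same height $|T|$, even though $\mathcal{O}(T)$ is in general a proper subfamily of $2^T$.

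For the upper bound, I would observe that any strictly increasing chain $L_0 \subsetneq L_1 \subsetneq \dots \subsetneq L_k$ in $\mathcal{O}(T)$ (indeed in $2^T$) forces $|L_0| < |L_1| < \dots < |L_k|$, since a strict inclusion increases cardinality by at least one. As all cardinalities lie in $\{0,1,\dots,|T|\}$, there can be at most $|T|+1$ of them, so $k \leq |T|$. This shows $\operatorname{height}(\mathcal{O}(T)) \leq |T|$, and the same argument caps the height of $2^T$.

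For the lower bound, the key idea is to exhibit a maximal chain of lower sets of length $|T|$ by using a linear extension (topological sort) of $T$: enumerate the elements as $t_1,\dots,t_{|T|}$ so that $t_i \leq t_j$ implies $i \leq j$; such an enumeration exists for every finite poset. Define $L_i = \{t_1,\dots,t_i\}$ for $0 \le i \le |T|$. Then each $L_i$ is a lower set: if $t_j \in L_i$ (so $j \le i$) and $t_k \le t_j$, then the linear-extension property gives $k \le j \le i$, whence $t_k \in L_i$. The $L_i$ form a strictly increasing chain from $\emptyset = L_0$ to $T = L_{|T|}$ inside $\mathcal{O}(T)$ with exactly $|T|$ steps, so $\operatorname{height}(\mathcal{O}(T)) \geq |T|$. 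Combining the two bounds yields $\operatorname{height}(\mathcal{O}(T)) = |T|$, matching $2^T$.

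I do not expect a genuine obstacle here; the only point requiring care is the existence and exploitation of a linear extension, which is exactly what guarantees that the prefixes $L_i$ respect down-closure. The upper bound is essentially the observation that cardinality is a strictly monotone rank function on any subfamily of $2^T$, so the content of the lemma lies entirely in the lower-bound construction, which certifies that restricting to down-closed collections does not shorten the longest chain. The same reasoning applies verbatim to upper subsets, since these are the lower subsets of $T$ under the reverse order, which is the form in which the remark preceding the lemma invokes it.
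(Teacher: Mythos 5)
Your proof is correct and follows essentially the same route as the paper's: the paper also orders $T$ topologically and exhibits the prefix sets $u_i = \{t_1,\dots,t_i\}$ as a strictly ascending chain of lower sets. The only difference is that you additionally spell out the (immediate) cardinality upper bound, which the paper leaves implicit.
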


\begin{proof}
  Order $T$ topologically: $t_1,\dots,t_{|T|}$, such that
  $\forall i,j:~ t_i \leq t_j \implies i \leq j$. 
  The sequence $(u_i)_{i=0,\dots, |T|}$, with $u_i = \{ t_1, \dots, t_i \}$, is a strictly ascending sequence of lower sets.
\end{proof}

%%% Local Variables:
%%% mode: latex
%%% TeX-master: "LRU_cache_analysis_POPL2019"
%%% End:

\section{Data Structures and Algorithms}
\label{sec:zdd}

In \Autoref{sec:fixpoint} we defined a collecting semantics for concrete cache states, then, in two steps (1. focused semantics, 2. closures), showed that there is a cache hit (respectively, a cache miss) in the concrete semantics if and only if there is a cache hit (respectively, a cache miss) in an upward-closed (respectively, downward-closed) semantics, and that upward-closed (respectively, downward-closed) sets may be represented by the antichains of their minimal (respectively, maximal) elements.

\subsection{Computation by Abstract Interpretation}

The abstracted semantics in upward-closed (or, downward-closed) sets may be computed by a standard data-flow/abstract interpretation algorithm, by upward iterations, as follows.

To each initial control point we initially attach an initialization value (see below).
For a semantics focused on accesses to $a$, we consider that each edge $x \xrightarrow{a} y$ is replaced by an initial edge $\emptyset \xrightarrow{a} y$, pushing $\{ \emptyset \}$ as the value associated to the control state~$y$.
Then, we iterate in the usual abstract interpretation fashion: we maintain a ``working set'', initially containing the initial locations and the targets of the $\xrightarrow{a}$ edge; we take a control point $x$ from the working set, update the abstract values at the end point of edges going out of $x$ (using the union operation on upper or lower sets), and add these end points to the working set if their value has changed (equality testing). The iterations stop when the working set becomes empty.
It is a classical result \cite[\S 2.9]{Cousot78} that the final result of such iterations does not depend on the iteration ordering, and in fact several elements from the working set may be treated in parallel;
the only requirement is that all elements from the working set are eventually treated.

The sequence of updates to the set decorating a given control location is strictly ascending, in a finite lattice; thus its length is bounded by the height $h$ of that lattice.
If $V$ is the set of control locations, then the total number of updates is bounded by~$|V| \cdot h$.
Recall that the height of the lattice of subsets of a set $X$ is~$|X|$.

If we implement the focused semantics directly, then we compute over sets of subsets of size at most $\cacheWays-1$ of $\locations \setminus \{ a \}$, completed with $\absent$;
the number of such subsets is bounded by $\sum_{k=0}^{\cacheWays-1} (|\locations|-1)^k$
and thus $h \leq \frac{(|\locations|-1) ^ \cacheWays}{|\locations|-2} + 1$.
The cost could thus be exponential in the associativity;
we shall see in \Autoref{sec:complexity} that a polynomial-time algorithm is unlikely, since the problems are NP-complete.

\subsection{Closed Sets Implementation}

We initially attempted adding closure steps to the focused semantics, and running a model checker on the resulting systems.
The performance was however disappointing, worse than model-checking the focused semantics itself as was proposed by \citet{DBLP:conf/cav/TouzeauMMR17}.
The model checker (\software{nuXmv}) was representing its sets of sets of blocks using state-of-the-art binary decision diagrams;
we thus did not expect any gain by going to our own implementation of iterations over the same structure.
We thus moved from representing a closed set by its content to representing it by the antichain of its minimal (respectively, maximal) elements.
There remains the question of how to store and compute upon the antichains representing those sets.

We then tried storing an antichain simply as a sorted set of subsets of $\locations$, each subset being represented as the list of its elements.
Experimentally, this approach was inefficient; let us explain why, algorithmically.
For once, when computing the antichain for the union of two upward or downward closed sets $S$ and $S'$, one takes the antichains $W$ and $W'$ representing $S$ and $S'$ and eliminates redundancies; if such a naive representation is used, one needs to enumerate all pairs of items from $W \times W'$ ---
there is no way to immediately identify which parts of $W$ and $W'$ are subsumed, or even to identify which parts are identical.
Furthermore, there is no sharing of representation between related antichains.

Binary decision diagrams are one well-known data structure for representations of sets of states; they share identical subsets, and allow fast equality testing. All operations over such diagrams can be ``memoized'', meaning that when an operation is run twice between identical subparts of existing diagrams, the result may be cached.
We store an antichain, a set $S$ of sets of addresses, as a \emph{zero-suppressed decision diagram} (ZDD)~\cite{DBLP:journals/sttt/Minato01,Mishchenko_ZDD_2014} \cite[\S7.1.4, p.249]{TAOCP_4A}, a variant of binary decision diagrams optimized for representing sets of sparse sets of items.

\subsection{Basic Functions for May-Hit and May-Miss Analyses}

We assume that all control states are reachable (unreachable states are easily discarded by a graph traversal). The starting points of the analyses focused on $a$ are the initial control points as well as all accesses to~$a$.

The operations that we need for antichains defining upper sets, for the may-hit analysis, are
\begin{description}
\item[Initialization to empty cache]
  Return $\emptyset$.
\item[Initialization to undefined cache state]
  Return $\{ \emptyset \}$.
\item[Initialization to unreachable state]
  Return $\emptyset$.
\item[Access] to address $b \neq a$: return $\{ s \cup \{ b \} \mid s \in S \}$.
\item[Access to tracked block] $a$: return $\{ \emptyset \}$
\item[Limitation to associativity]
  Return $\{ s \mid s \in S \land |s| \leq \cacheWays - 1 \}$.%
\footnote{This means that execution traces that cannot lead to a ``hit'' on the next access to $a$ are discarded. This is correct since execution is assumed to start from all accesses to $a$ as well as initial control states.
If one wishes to combine the analysis with others which need to distinguish between ``no hit at the next access to $a$'' and ``unreachable'', the special value $\absent$ may be added when elements of too large associativity are discarded.}
\item[Union of upper sets] represented by antichain of minimal elements of $S$ and $S'$:
  return $\{ s \mid s \in S \land \neg\exists s' \in S'~ s' \subsetneq s \} \cup
  \{ s' \mid s' \in S' \land \neg\exists s \in S~ s \subsetneq s' \}$.
\item[Equality testing] given $S$ and $S'$, return whether $S = S'$.
\end{description}

\begin{example}
  Let $S$ be the upper set generated by the antichain $\big\{ \{a\}, \{b,c\} \big\}$, and
  $S'$ the upper set generated by the antichain $\big \{ \{b\}, \{a,c\}, \{d\} \big\}$.
  The union of the two upper sets is an up-set generated by the union of these two antichains.
  However, this union is not an antichain because it contains redundant items:
  $\{ a,c \}$ is subsumed by $\{ a \}$,
  $\{ b,c \}$ is subsumed by $\{ b \}$.
  The antichain of minimal elements of $S \cup S'$ is thus
  $\big\{ \{a\}, \{b\}, \{d\} \big\}$.
\end{example}

The operations that we need for antichains defining lower sets, for the may-miss analysis, are
\begin{description}
\item[Initialization to empty or undefined cache state]
  Return $\{ \absent \}$. 
\item[Initialization to unreachable state]
  Return $\emptyset$.
\item[Accesses] Same as with upper sets.
\item[Test for eviction]
  Returns whether there exists $s \in S$ such that $|s| \geq \cacheWays$, in which case $S$ is replaced by $\{\absent\}$ (again, this is an optimization).
\item[Union of lower sets]
  represented by antichains of maximal elements $S$ and $S'$:
  return $\{ s \mid s \in S \land \neg\exists s' \in S'~ s \subsetneq s' \} \cup
  \{ s' \mid s' \in S' \land \neg\exists s \in S~ s' \subsetneq s \}$.
\item[Equality testing] Same as with upper sets.
\end{description}

The union of antichains with subsumption removal was supported by an extension~\cite{Mishchenko_ZDD_2014} of the ZDD library that we used.
The only operations not supported were the test for eviction and the limitation to associativity.
We implemented them by recursive descent over the structure of the ZDD, with an extra parameter for the current depth (number of items already seen in the set), and memoization of the results.
As in the \software{Cudd} library, we call ``then'' the branch where the top variable is true (i.e. the branch that contains the cache block associated to the current node) and ``else'' the branch associated to value false (i.e. the branch that does not contain that cache block).
As shown in Algorithm~\ref{alg:truncate}, the general case (case 3) of the algorithm simply consists in truncating the ``then'' and ``else'' branches of the current nodes.
When the number of ``then'' branches taken reaches the associativity (case 2), we remove all further ``then'' branches (they only lead to sets of cardinality greater than the associativity).
Finally, the algorithm may stop exploring a branch for two different reasons:
\begin{inparaenum}[a)]
	\item either the node treated is a leaf of the ZDD (case 0), or
	\item the result of the truncate function has already been computed and memoized (case 1).
\end{inparaenum}

\newcommand{\zdd}{\textit{zdd}}
\newcommand{\rightSub}{\textit{else}}
\newcommand{\leftSub}{\textit{then}}

\begin{algorithm}[h]
\caption{Truncate(zdd, n) as a recursive function}\label{alg:truncate}
\begin{algorithmic}[1]
\Function{Truncate}{$zdd$, $n$}
\If {$zdd = \emptyset$ or $zdd = \{\emptyset\}$}
	\State \Return $zdd$ \Comment{Case 0. Leaf of the ZDD DAG}
\EndIf
\State $res \gets \Call{cacheLookup}{Truncate, \zdd, n}$ \Comment{Case 1. Already computed}
\If {$res$}
	\State \Return $res$
\EndIf
\If {$n = 0$} \Comment{Case 2. Associativity is reached}
	\State \Return \Call{Truncate}{$\zdd.\rightSub, 0$} \Comment{Case 2. Else branch recursion}
\Else \Comment{Case 3. General case}
	\State $\leftSub \gets \Call{Truncate}{\zdd.\leftSub, n-1}$ \Comment{Case 3. Then branch recursion}
	\State $\rightSub \gets \Call{Truncate}{\zdd.\rightSub, n}$ \Comment{Case 3. Else branch recursion}
	\State \Return \Call{ZDD}{$\zdd.var$, \leftSub, \rightSub}
\EndIf
\EndFunction
\end{algorithmic}
\end{algorithm}

%%% Local Variables:
%%% mode: latex
%%% TeX-master: "LRU_cache_analysis_POPL2019"
%%% End:

\section{Variants and Extensions}
\label{sec:extensions}
\paragraph{Combination with classical abstract interpretation}
When classical abstract interpretation \cite{DBLP:journals/rts/FerdinandW99}, or its combination with the ``definitely unknown'' abstract analysis \cite{DBLP:conf/cav/TouzeauMMR17}, can correctly classify all accesses to a given block $a$ into ``always hit'', ``always miss'' and ``definitely unknown'', there is no use in running our analysis for that block.
We have implemented this combination, which improves performance (see \Autoref{sec:implementation}).

\paragraph{Simultaneous computation}
We have explained our analyses for classifying accesses to each address $a$ separately. It is also possible to simultaneously classify all addresses together, by updating the abstractions (e.g. $C^{\min}_{l,a}$) for all $a$ all together when updating the abstract state at location~$l$.

This simultaneous computation, including across cache sets, is likely to be compulsory if the cache analysis is integrated with a microarchitectural analysis: if the sequence of memory accesses depends on whether some previous accesses are hits or misses, e.g. due to out-of-order execution or opportunistic prefetching.

\paragraph{On-demand backward analysis}
We have presented our analysis in a forward fashion: to classify hits and misses to $a$, we compute at each location the collection of the set of addresses found along path $\pi$ for all paths $\pi$ from the nearest preceding occurrences of~$a$ (truncated at length~$\cacheWays$).
We could formulate our analysis in a backward fashion: given a specific location $l$ in the control-flow graph, we compute at each location $l'$ the collection of the set of addresses found along path $\pi$ for all paths $\pi$ from $l'$ to $l$.
This computation stops at other edges labeled with $a$, start vertices, or when computing the special value~$\absent$.
Then, an edge going out of $l$ and labeled by $a$ may result in a miss if and only if at least one value $\absent$ or an $\emptyset$ start vertex was reached during this backward propagation, and it may result in a hit if and only if at least one edge $a$ or a $\top$ start vertex was reached during this backward propagation.

%%% Local Variables:
%%% mode: latex
%%% TeX-master: "LRU_cache_analysis_POPL2019"
%%% End:

\section{Complexity and NP-Hardness}
\label{sec:complexity}
The cache contains at most $\cacheWays$ cache lines chosen among $|\locations|$ memory blocks; the number of cache states is thus bounded by
$\sum_{k=0}^\cacheWays |\locations|^k = \frac{|\locations|^{\cacheWays+1}-1}{|\locations|-1}$.
% Jan: it is smaller than this because we cannot have repetitions of the same blocks in a cache state. one could express this exactly, but there does not seem to be a simple closed-form solution then.
% DM: I tried getting something closed but Mathematica printed hypergeometric functions
Consequently, the total number of program states is bounded by $|V| \frac{|\locations|^{\cacheWays+1}-1}{|\locations|-1}$
where $V$ is the set of vertices.
Recall that $\locations$ is the set of possible addresses, which are used to label the edges $E$; thus $|\locations| \leq |E|$.
For a fixed associativity $\cacheWays$, an explicit model-checking approach, enumerating all cache states, thus has polynomial complexity in the size of the control-flow graph under analysis;
however its complexity is exponential in the cache's associativity.
Furthermore, for program analysis, any effective complexity beyond almost-linear in the size of the program is generally considered prohibitive.
This explains the development of abstract interpretation, with some imprecision \cite{DBLP:journals/rts/FerdinandW99}, as well as clever pre-analyses and model reductions before applying symbolic model checking \cite{DBLP:conf/cav/TouzeauMMR17}.
We shall now see that cache analysis problems for LRU caches are NP-hard, even if the control-flow graph is acyclic.

We here assume an empty initial cache. The \emph{may-hit} (respectively, \emph{may-miss}) problem is: given a control-flow multigraph and a designated edge $e$, does there exist a path through the graph such that the access on edge $e$ is a hit (respectively, a miss)?
The \emph{always hit} (respectively, \emph{always miss}) problem is its complement: is a given access in a control-flow graph always a hit (respectively, a miss) irrespective of how it is reached?

The input problem is given as
\begin{inparaenum}[(a)]
\item as in preceding sections, the control-flow multigraph, with edges labeled with the addresses of the data being accessed,
\item the designated edge to classify, and
\item the cache's associativity\footnote{Note that if the associativity is larger than the set of possible edge labels, the two problems reduce to simple reachability problems in a directed graph. We can thus assume the associativity to be less than the number of edge labels. Whether the associativity is written in unary or binary form then is of no importance for the complexity of the problem.}.%
%\footnote{Note that if the associativity is larger than the set of possible edge labels, there is no way to obtain a miss at the final edge, which trivializes the problem. We can thus assume the associativity to be less than the number of elements in this set. Whether the associativity is written in unary or binary in the input problem then has no importance.}%commenting this out, because in simplifying the problem formulation i got rid of the initial and the final edge, which are necessary for the argument above. 
\end{inparaenum}

\begin{figure}
  \begin{center}
    \begin{tikzpicture}[->,auto]
 %     \node (start) { start };
      \node (start) [draw,diamond] { $\emptyset$ };

      \node (q0) [right of=start,xshift=5mm] { $\sigma_s$ };
      \node (q1) [right of=q0] { };
      \node (q2) [right of=q1] { };
      \node (q3) [right of=q2] { };

      \path (q0) edge[bend right] node[below] {$a$} (q1); 
      \path (q0) edge[bend left] node[above] {$\vneg{a}$} (q1); 

      \path (q1) edge[bend right] node[below] {$b$} (q2); 
      \path (q1) edge[bend left] node[above] {$\vneg{b}$} (q2); 

      \path (q2) edge[bend right] node[below] {$c$} (q3); 
      \path (q2) edge[bend left] node[above] {$\vneg{c}$} (q3);

      \node (q4) [right of=q3, node distance=5em] { };
      \node (q5) [right of=q4, node distance=5em] { };
      \node (q6) [right of=q5, node distance=5em] { $\sigma_e$  };

      \node (end) [right of=q6,xshift=5mm] { end };

      \path (start) edge node {$w$} (q0);
      \path (q6) edge node {$w$} (end);

      \path (q3) edge[bend right=80] node[auto] {$a$} (q4);
      \path (q3) edge node[auto] {$b$} (q4);
      \path (q3) edge[bend left=80] node[auto] {$\vneg{c}$} (q4);

      \path (q4) edge[bend right=80] node[auto] {$\vneg{a}$} (q5);
      \path (q4) edge node[auto] {$\vneg{b}$} (q5);
      \path (q4) edge[bend left=80] node[auto] {$\vneg{c}$} (q5);

      \path (q5) edge[bend right=80] node[auto] {$\vneg{a}$} (q6);
      \path (q5) edge node[auto] {$b$} (q6);
      \path (q5) edge[bend left=80] node[auto] {$c$} (q6);
    \end{tikzpicture}
  \end{center}

  \caption{Reduction from \protect\Autoref{th:exists-hit-acyclic-np-complete}.
There is a path from $\sigma_s$ to $\sigma_e$ with at most $\cacheWays-1=3$ different labels if and only if the formula $(\vneg{c} \lor b \lor a) \land (\vneg{c} \lor \vneg{b} \lor \vneg{a}) \land (c \lor b \lor \vneg{a})$ has a model.
This is equivalent to the existence of a path with a ``hit'' at the last $w$ edge for associativity $\cacheWays=4$.}
  \label{fig:exists-hit-reduction}
\end{figure}

\begin{theorem}\label{th:exists-hit-acyclic-np-complete}
  The may-hit problem is NP-complete for acyclic control-flow graphs.
\end{theorem}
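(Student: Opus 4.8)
The plan is to prove the two halves of NP-completeness separately: membership in NP, and NP-hardness via a reduction from Boolean satisfiability.

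For membership, I would exploit acyclicity. In an acyclic control-flow graph every path is simple, hence visits at most $|V|$ vertices, so a path through the graph is a certificate of size polynomial in the input. Given such a path as a witness, the hit condition at the designated edge $e$ (say labeled $a$) is checkable in polynomial time: walk along the path, locate the most recent preceding access to $a$, and count the distinct block labels accessed between that occurrence and $e$. By the hit characterization recalled in \Autoref{sec:problem}, the access is a hit exactly when such a preceding $a$ exists and this count is at most $\cacheWays-1$. This verification is clearly polynomial, so the may-hit problem lies in NP.

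For NP-hardness, I would reduce from CNF-SAT, producing an acyclic multigraph exactly as in \Autoref{fig:exists-hit-reduction} (the figure uses three literals per clause, but the construction is insensitive to clause width). Given a formula over variables $x_1,\dots,x_n$ with clauses $C_1,\dots,C_m$, I would build a DAG that first takes an edge labeled $w$ out of the $\emptyset$ start vertex, then chains $n$ \emph{assignment gadgets}, the $i$-th being two parallel edges labeled $x_i$ and $\vneg{x_i}$ (traversal forces a commitment to one literal of $x_i$), then chains $m$ \emph{clause gadgets}, the $j$-th having one edge per literal occurring in $C_j$, labeled by that literal, and finally the designated edge, again labeled $w$. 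I would fix the associativity to $\cacheWays = n+1$ (so $n=3,\cacheWays=4$ in the figure). The reduction is plainly polynomial and the graph is acyclic, so it targets precisely the acyclic case.

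Correctness rests on a counting argument linking the hit condition to satisfiability: the designated $w$-edge is a hit iff at most $\cacheWays-1 = n$ distinct non-$w$ blocks are accessed between the two $w$-accesses. The assignment gadgets always contribute exactly $n$ distinct literal-blocks (one per variable, and $x_i\neq\vneg{x_i}$ are distinct blocks), so the total stays at $n$ precisely when every literal traversed in a clause gadget already appears among the $n$ chosen assignment literals — i.e., when the chosen assignment satisfies every clause. Hence a hit-producing path exists iff the formula is satisfiable. The main obstacle is making this argument airtight in the ``only if'' direction: one must show a hit-path cannot stretch its budget by any trick. The decisive points are that $x_i$ and $\vneg{x_i}$ are genuinely distinct blocks and that the clause gadgets draw only from this same pool of $2n$ literal-blocks, so no clause edge is ``free'' unless its label coincides with a previously chosen assignment literal. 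I would also record that the initial cache is empty, that the first $w$-access is irrelevant (it is a miss, but the problem concerns only the designated edge), and that these facts together complete the equivalence.
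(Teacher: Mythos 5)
Your proposal is correct and follows essentially the same route as the paper's proof: membership in NP via a polynomial-size path certificate, and NP-hardness by the same CNF-SAT reduction with variable switches, clause switches, a fresh block $w$ bracketing the construction, and associativity $\cacheWays=n+1$. You merely spell out in more detail the counting argument that the paper compresses into ``each path with at most $n$ different labels corresponds to a satisfying assignment,'' which is a fair elaboration rather than a different approach.
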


\begin{proof}
  Obviously, the problem is in NP: a path may be chosen nondeterministically, then checked in polynomial time.

  Now consider the following reduction from CNF-SAT (see \Autoref{fig:exists-hit-reduction} for an example).
  To each variable $v$ in the SAT problem we associate two memory blocks $v$ and $\vneg{v}$.
  The control-flow graph is a sequence of switches:
  \begin{itemize}
  \item For all variables $v$ in the SAT problem, a switch between two edges labeled with $v$ and $\vneg{v}$, respectively.
  \item For each clause in the SAT problem, a switch between edges labeled with the literals present in the clause.
  \end{itemize}
  Let $n$ be the number of variables in the SAT problem.
  Each path through the sequence of switches with at most $n$ different labels corresponds to a satisfying assignment.
  Such a path exists if and only if the input formula is satisfiable.

  Now add to the control-flow graph an incoming edge into the first node and an outgoing edge from the last node, both labeled with the same fresh letter~$w$.
  The outgoing edge is the designated edge to classify.
  If the associativity of the cache is $n+1$, then the final access to $w$ may be a hit if and only if the SAT problem is satisfiable.
\end{proof}

\begin{figure}
\begin{subfigure}{0.25\textwidth}
\begin{center}
    \begin{tikzpicture}[node distance=3em]
      \node (q0) { $v_0$ };
      \node (q1) [above right of=q0] { $v_1$ };
      \node (q3) [below right of=q1] { $v_3$ };
      \node (q2) [below right of=q0]  { $v_2$ };
      \path (q0) edge[thick] (q1);
      \path (q0) edge[thick] (q2);
      \path (q1) edge (q2);
      \path (q1) edge[thick] (q3);
      \path (q2) edge[thick] (q3);
    \end{tikzpicture}
\end{center}
\caption{Graph with (thick) Hamiltonian cycle}
\end{subfigure}
\begin{subfigure}{0.57\textwidth}
\begin{center}
\begin{tikzpicture}[->,node distance=3.5em,auto]
      %\node (start) { start };
      \node (start) [draw,diamond] { $\emptyset$ };

      \node (q0s) [right of=start] { $v_0^0$ };

      \node (q2_1) [right of=q0s]  { $v_2^1$ };
      \node (q1_1) [above of=q2_1] { $v_1^1$ };
      \node (q3_1) [below of=q2_1] { $v_3^1$ };

      \node (q1_2) [right of=q1_1] { $v_1^2$ };
      \node (q2_2) [below of=q1_2]  { $v_2^2$ };
      \node (q3_2) [below of=q2_2] { $v_3^2$ };

      \node (q1_3) [right of=q1_2] { $v_1^3$ };
      \node (q2_3) [below of=q1_3]  { $v_2^3$ };
      \node (q3_3) [below of=q2_3] { $v_3^3$ };

      \node (q0e) [right of=q2_3] { $v_0^4$ };

      \node (end) [right of=q0e] { end };

      \path (q0s) edge[thick] (q1_1);
      \path (q0s) edge (q2_1);

      \path (q1_1) edge (q2_2);
      \path (q1_1) edge[thick] (q3_2);
      \path (q2_1) edge (q3_2);
      \path (q2_1) edge (q1_2);
      \path (q3_1) edge (q1_2);
      \path (q3_1) edge (q2_2);

      \path (q1_2) edge (q2_3);
      \path (q1_2) edge (q3_3);
      \path (q2_2) edge (q3_3);
      \path (q2_2) edge (q1_3);
      \path (q3_2) edge (q1_3);
      \path (q3_2) edge[thick] (q2_3);

      \path (q1_3) edge (q0e);
      \path (q2_3) edge[thick] (q0e);

      \path (start) edge[thick] node {$w$} (q0s);
      \path (q0e) edge[thick] node {$w$} (end);
\end{tikzpicture}
\end{center}
\caption{Acyclic control flow graph obtained by the reduction.
Edge labels not shown.
Path corresponding to the Hamiltonian cycle (thick).}
\end{subfigure}

\caption{Reduction from \protect\Autoref{th:exists-miss-acyclic-np-complete}.}
\label{fig:exists-miss-reduction}
\end{figure}

\begin{theorem}\label{th:exists-miss-acyclic-np-complete}
  The may-miss problem is NP-complete for acyclic control-flow graphs.
\end{theorem}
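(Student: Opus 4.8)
The plan is to prove both membership in NP and NP-hardness, the latter by a reduction from the Hamiltonian-cycle problem that matches the layered construction sketched in \Autoref{fig:exists-miss-reduction}. Membership is immediate for acyclic graphs: any source-to-sink path has length at most $|V|$, so a witnessing path can be guessed with polynomial size, and we then simulate the cache along it (equivalently, count the distinct blocks accessed since the last occurrence of the designated block) to verify in polynomial time that the designated access is a miss.

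For hardness I would reduce from the Hamiltonian-cycle problem. Given a graph $G=(V,E)$ with $V=\{v_0,\dots,v_{n-1}\}$, fix $v_0$ as basepoint; since a Hamiltonian cycle visits every vertex, $G$ is Hamiltonian iff it has a Hamiltonian cycle through $v_0$. I build a layered acyclic graph with a single node $v_0^0$ in layer $0$, a copy $v_i^j$ of every vertex in each layer $j\in\{1,\dots,n-1\}$, and a final node $v_0^n$; I add an edge $v_i^j\to v_k^{j+1}$ exactly when $v_i v_k\in E$. I associate to each vertex $v_i$ a distinct memory block, also written $v_i$, and label every inter-layer edge entering a copy of $v_k$ with the block $v_k$. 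Finally I introduce a fresh block $w$, an incoming edge labeled $w$ from an $\emptyset$ start vertex into $v_0^0$, and an outgoing edge labeled $w$ from $v_0^n$; this last edge is the one to classify, and I set the associativity to $\cacheWays=n$ (note $n$ is strictly less than the $n+1$ distinct labels, consistent with the nondegeneracy assumption).

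The correctness claim I would then establish is that the designated access to $w$ can be a miss along some path iff $G$ has a Hamiltonian cycle. Along any source-to-sink path the accessed blocks form the sequence $w,u_1,u_2,\dots,u_{n-1},v_0,w$, where $v_0,u_1,\dots,u_{n-1},v_0$ is a closed walk of length $n$ in $G$ (its edges are valid by construction, and it is forced to close at $v_0$ because the last layer contains only $v_0^n$). By the hit/miss characterization recalled in \Autoref{sec:problem}, the final $w$ is a miss iff at least $\cacheWays$ distinct blocks are accessed between the two occurrences of $w$; these are precisely the vertex-blocks in $\{v_0,u_1,\dots,u_{n-1}\}$, so their number is the count of distinct vertices on the walk. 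This count is at most $n$, with equality exactly when $u_1,\dots,u_{n-1}$ are the $n-1$ vertices distinct from $v_0$, all distinct, i.e. when the walk is a Hamiltonian cycle. Hence a miss is achievable iff $\cacheWays=n$ distinct vertices can be visited, iff $G$ is Hamiltonian. The graph is acyclic (edges increase the layer index) and of size $O(n\,|E|)$, so the reduction is polynomial.

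The routine parts are NP-membership and the size/acyclicity bookkeeping; the step I expect to require the most care is the correctness equivalence, specifically the quantitative link between the associativity threshold and Hamiltonicity. I must verify that the number of distinct blocks in the window between the two accesses to $w$ equals exactly the number of distinct vertices on the induced walk (so that $w$ is neither counted among them nor spuriously rejuvenated, and that the empty initial cache makes the first $w$ a genuine load that cleanly starts the eviction bookkeeping), that this quantity is bounded by $n$ and attains $n$ only for a Hamiltonian cycle, and that choosing $\cacheWays=n$ yields the sharp threshold, so that non-Hamiltonian walks visit at most $n-1$ distinct vertices and therefore force a hit rather than a miss.
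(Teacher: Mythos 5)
Your proof is correct and takes essentially the same approach as the paper: NP membership by guess-and-check, and NP-hardness via the same layered reduction from Hamiltonian cycle, bracketing the layers with a fresh block $w$ and setting the associativity to $n$ so that the final access to $w$ misses exactly when all $n$ edge labels on the traversed path are distinct. The only cosmetic difference is that you also place copies of $v_0$ in the intermediate layers (the paper only copies $v_i$ for $i\geq 1$), which is harmless since any closed walk revisiting $v_0$ yields at most $n-1$ distinct labels.
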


\begin{proof}
  Obviously, the problem is in NP: a path may be chosen nondeterministically, then checked in polynomial time.

  We reduce the Hamiltonian circuit problem to the may-miss problem (see \Autoref{fig:exists-miss-reduction} for an example).
  Let $(V,E)$ be a graph, let $n = |V|$, $v=\{v_0,\dots,v_{n-1}\}$
  (the ordering is arbitrary).
  Let us construct an acyclic control-flow graph $G$ suitable for cache analysis as follows:
  \begin{itemize}
    \item two copies $v_0^0$ and $v_0^n$ of $v_0$
    \item for each $v_i$, $i \geq 1$, $|V|-1=n-1$ copies $v_i^j$, $1 \leq j < n$
      (this arranges these vertices in layers indexed by~$j$)
    \item for each pair $v_i^j$, $v_{i'}^{j+1}$ of nodes in consecutive layers,
      an edge, labeled by the address~$i'$, if and only if there is
      an edge $(i,i')$ in~$E$.
  \end{itemize}
  There is a Hamiltonian circuit in $(V,E)$ if and only if there is a path in $G$ from $v_0^0$ to $v_0^n$ such that no edge label is repeated,
  thus if and only if there exists a path from $v_0^0$ to $v_0^n$ with at least $n$ distinct edge labels.

  Now assume an edge going from a start node into $v_0^0$, and an edge going from $v_0^n$ into an end node, both labeled with the same fresh letter~$w$.
  The edge going from $v_0^n$ is the designated edge to classify.
  For associativity $n$ there exists an access missing the cache at that last edge if and only if there is a path from $v_0^0$ to $v_0^n$ with at least $n$ distinct edge labels.
\end{proof}

We have shown in this section how to construct CFGs for which solving the exist-miss and exist-hit problems is hard.
Note that this implies that both problems are NP-complete in the general case, but not that there is no algorithm for efficiently dealing with ordinary CFGs.

%%% Local Variables:
%%% mode: latex
%%% TeX-master: "LRU_cache_analysis_POPL2019"
%%% End:

\section{Implementation and Experiments}
\label{sec:implementation}

We have implemented our antichain-based analysis, as well as the classical age-based analysis \cite{DBLP:journals/rts/FerdinandW99}, and the ``definitely unknown'' (DU) and ``focused'' model-checking analyses proposed by \citet{DBLP:conf/cav/TouzeauMMR17}.
We did not implement the naive collecting semantics approach (model checking with ``unfocused'' cache states) since \citet[\S 6.3] {DBLP:conf/cav/TouzeauMMR17} note that then the models become so large and complex that the model checker timed out on all of their examples.
Furthermore, initial experiments with concretely represented antichains (an antichain being represented as a concrete set of arrays of block identifiers) scaled very poorly, so we did not pursue that direction further and pursued a fully symbolic representation using ZDDs.

Our experiments are performed on a server with 64~GB of memory, and an Intel Xeon CPU E5-2650 (32~logical cores running at 2.0~GHz).
The tested implementation is fully sequential, and thus does not benefit from the high number of cores available.
Note that the approach could however easily be implemented in parallel, by analyzing a different cache block on each core\footnote{Using threads if the ZDD library is capable of dealing with one different ZDD manager per thread, or separate processes.}.
\oldnew{We analyze a 4~KB cache with 64 cache sets, 4 ways and cache lines holding 16-byte-sized memory blocks.}{We analyze a 4~KB cache with 32 cache sets, 8 ways\footnote{Note that associativities of 8 or even 16 are common in modern microarchitectures. For instance, in the AMD Ryzen microarchitecture~\cite{amdOptManual17}, the L1 data cache and the unified L2 cache are 8-way set-associative, while the shared L3 cache even consists of 16 ways. Similarly, in the Intel Skylake microarchitecture~\cite{intelOptManual16}, the L1 data and instruction caches are both 8-way set-associative and, depending on the specific model, the shared L3 cache consists of up to 16 ways.} and cache lines holding 16-byte-sized memory blocks.}

\begin{figure}
\begin{center}
\includegraphics[width=\textwidth]{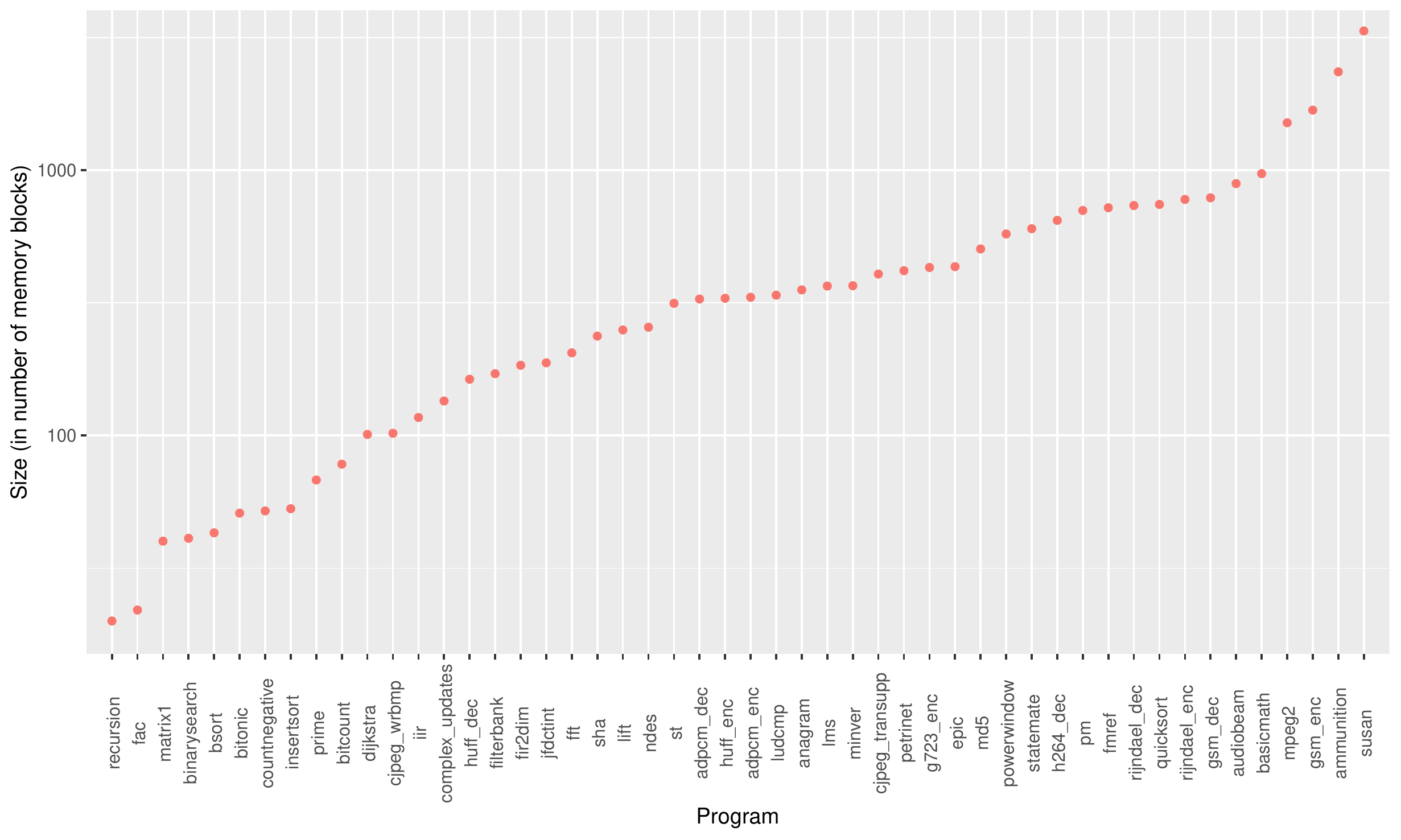}
\caption{Size of the analyzed binary code.}\label{fig:bench_size}
\end{center}
\end{figure}

We evaluate our approach on all sequential benchmarks, i.e., excluding parallel benchmarks, from the \software{TacleBench}\footnote{\software{TacleBench} is available at \url{https://github.com/tacle/tacle-bench.git}.}~\cite{FalkWCET2016}  suite, which is also used by \citet{DBLP:conf/cav/TouzeauMMR17}.
The benchmarks vary in size from 70 to 13000 lines of C code, and the size of the binary files obtained when compiling for ARM 5 (supported by \software{Otawa}) are shown in~\Autoref{fig:bench_size}.
Sizes are given in the number of memory blocks, and range from 20 blocks for the smallest benchmark to 3348 blocks for the biggest benchmark.
When measuring the time and memory consumption of analyses, we use a timeout of 12 hours.
Consequently, when this timeout is reached (i.e. the approach did not finish classifying accesses in the available amount of time), \oldnew{the associated point is not shown in the figures}{the associated point is plotted as if the corresponding analysis had terminated after 12 hours}. %\todo{wouldn't it be much better to include points for these on a line that corresponds to the time out?}

%\begin{figure}[h]
%\begin{center}
%\includegraphics[width=\textwidth]{}
%\caption{Size of the binary code analyzed}\label{fig:bench_size}
%\end{center}
%\end{figure}

We have implemented our analyses on top of \software{Otawa}~\cite{DBLP:conf/seus/BallabrigaCRS10}, an open-source WCET analysis tool.%
\footnote{We have used the version 2 obtained at \url{https://www.tracesgroup.net/otawa/download/otawa-v2/}.}
Computations over ZDDs are performed by \software{Cudd}~2.3.1 \cite{DBLP:journals/sttt/Somenzi01} together with an extension \cite{Mishchenko_ZDD_2014} for computing over antichains\footnote{This extension has not been ported to more recent versions of \software{Cudd}.}.
In order to compare our new analysis to the previous analysis by~\citet{DBLP:conf/cav/TouzeauMMR17}, we reimplemented it within \software{Otawa}, as our previous experiments were conducted at the level of the intermediate representation of the LLVM compiler suite rather than on machine code, as our present analysis.
Recall that our analyses and theirs compute exactly the same classifications and differ only in memory and time consumption;
this enabled us to test and debug our implementation.

There are several comparisons that the interested reader would have perhaps appreciated, but that we were unable to perform.
We are not able to directly confront our implementation to \citeauthor{DBLP:conf/cav/TouzeauMMR17}'s because theirs operates, as a proof of concept, upon LLVM's intermediate representation, using a fake memory mapping, while ours operate upon machine code with the true memory mapping.
We were not able to measure the precision gained on the WCET upper bound computed by \software{Otawa} by replacing the imprecise age-based static analysis \cite{DBLP:journals/rts/FerdinandW99} by our precise analysis, due to engineering issues --- our analysis is implemented on top of \software{Otawa} version 2, which is under development and constantly evolving.
We expect to be able to connect our analysis to the WCET computation in \software{Otawa} in a matter of months.
Moreover, our experiments are performed on an instruction cache.
As mentioned in \Autoref{sec:problem}, analyzing data caches is possible but would require further engineering effort to connect to \software{Otawa}'s pointer analyses.

\newcommand{\function}[1]{\software{#1}}

% implem
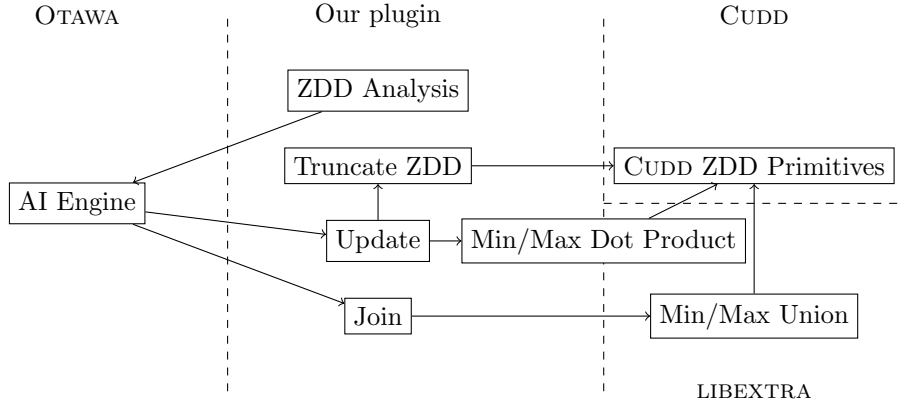
\begin{figure}[h!]
\begin{center}
\begin{tikzpicture}
\draw (2,-1) edge[dashed] (2,4);
\draw (7,-1) edge[dashed] (7,4);
\draw (7,1.5) edge[dashed] (11,1.5);

\node (AI) [draw] at (0, 1.5) {AI Engine};
\node (zdd) [draw] at (4, 3) {ZDD Analysis};
\node (boundsZDD) [draw] at (4, 2) {Truncate ZDD};
\node (update) [draw] at (4, 1) {Update};
\node (join) [draw] at (4, 0) {Join};
\node (primitives) [draw] at (9, 2) {\software{Cudd} ZDD Primitives};
\node (union) [draw] at (9, 0) {Min/Max Union};
\node (dot) [draw,fill=white] at (7, 1) {Min/Max Dot Product};

\draw[->] (update) -- (boundsZDD);
\draw[->] (zdd) -- (AI);
\draw[->] (AI) -- (update);
\draw[->] (AI) -- (join);
\draw[->] (update) -- (dot);
\draw[->] (boundsZDD) -- (primitives);
\draw[->] (join) -- (union);
\draw[->] (union) -- (primitives);
\draw[->] (dot) -- (primitives);
%\draw[->] (MC) edge[in=30,out=20] (May);
%\draw[->] (MC) edge[in=-30,out=-90] (Must);

\node () at (0,4) {\software{Otawa}};
\node () at (4,4) {Our plugin};
\node () at (9,4) {\software{Cudd}};
\node () at (9,-1) {\software{libextra}};

\end{tikzpicture}
\end{center}
\caption{Overview of our framework. Edges represent dependencies ($u \rightarrow v$ means that code in $u$ calls some methods in $v$).}
\label{fig:framework}
\end{figure}

\subsection{Implementation and Evaluation of our Analysis}

\Autoref{fig:framework} shows how we integrated our analyses into the \software{Otawa} framework, and how the operations described in \Autoref{sec:zdd} interact with each other.
The main component of our \software{Otawa} plugin is the ``ZDD Analysis'' box, which classifies accesses by abstract interpretation by calling the generic abstract interpretation engine provided by \software{Otawa}.
This iterates on the CFG and calls the update and join operation we provide when needed.
The join operation is realized by computing the union of a given pair of ZDDs, and then removing the subsumed sets (as explained in section \Autoref{sec:zdd}).
This is done by using the \function{MinUnion} (respectively \function{MaxUnion}) function provided by \software{libextra}, which compute the minimal elements of the union of two ZDDs.
The update function models the effect of accessing a block: to this end, the accessed block is added to all sets represented by the current ZDD.
This operation could be performed by the \function{DotProduct} function of \software{libextra} which, given two ZDDs $S_1$ and $S_2$, computes the set $S = \{s_1 \cup s_2, s_1 \in S_1, s_2 \in S_2\}$.
In practice, we use the \function{MaxDotProduct} provided by \software{libextra} which only keeps the maximal elements of $S$.
However, \software{libextra} does not provide the dual \function{MinDotProduct}, whose implementation we added.
Once the new block is added to the current ZDD, we truncate the ZDD, keeping only those sets whose size is below the associativity.
This is achieved by the \function{Truncate} algorithm described in \Autoref{sec:zdd}.%\todo{why is this discussion not in \Autoref{sec:zdd}. i think it would better fit there.}
%The Access operation is implemented in the Update box and relies on the dot product operator\todo{what is the ``dot product operator''? and where do we need it?} that was partially implemented in \software{libextra}\footnote{The MaxDotProduct operator was already implemented, but we had to implement the MinDotProduct operator}.
%Once the access is performed, we truncate the ZDD to the associativity using algorithm~\ref{alg:truncate} implemented in the ``truncate ZDD'' box.

We implemented two different versions of our analysis:
\begin{itemize}
	\item \emph{ZDD}: The version described in \Autoref{sec:zdd}. One analysis is performed for every memory block in the program to classify all accesses to this block.
When an analysis terminates, all the structures it used are freed and \software{Cudd} cache and memoization tables are flushed.
This approach is referred to as \emph{ZDD} in the following.
	\item \emph{Age-based + DU + ZDD}: The last version uses the \emph{Age-based + DU} analyses to classify memory accesses, and refines the accesses left ``unknown'' by running the associated ZDD analysis.
In other words, this approach is the same as \citet{DBLP:conf/cav/TouzeauMMR17}, where the model-checking phase is replaced by our \emph{ZDD} approach.
\end{itemize}

\subsubsection{Comparison of the two Variants of our Analysis}

\begin{figure}
	\begin{subfigure}[h]{0.45\textwidth}
		\centering
		\oldnewFigure{1.0}{\includegraphics[width=\textwidth]{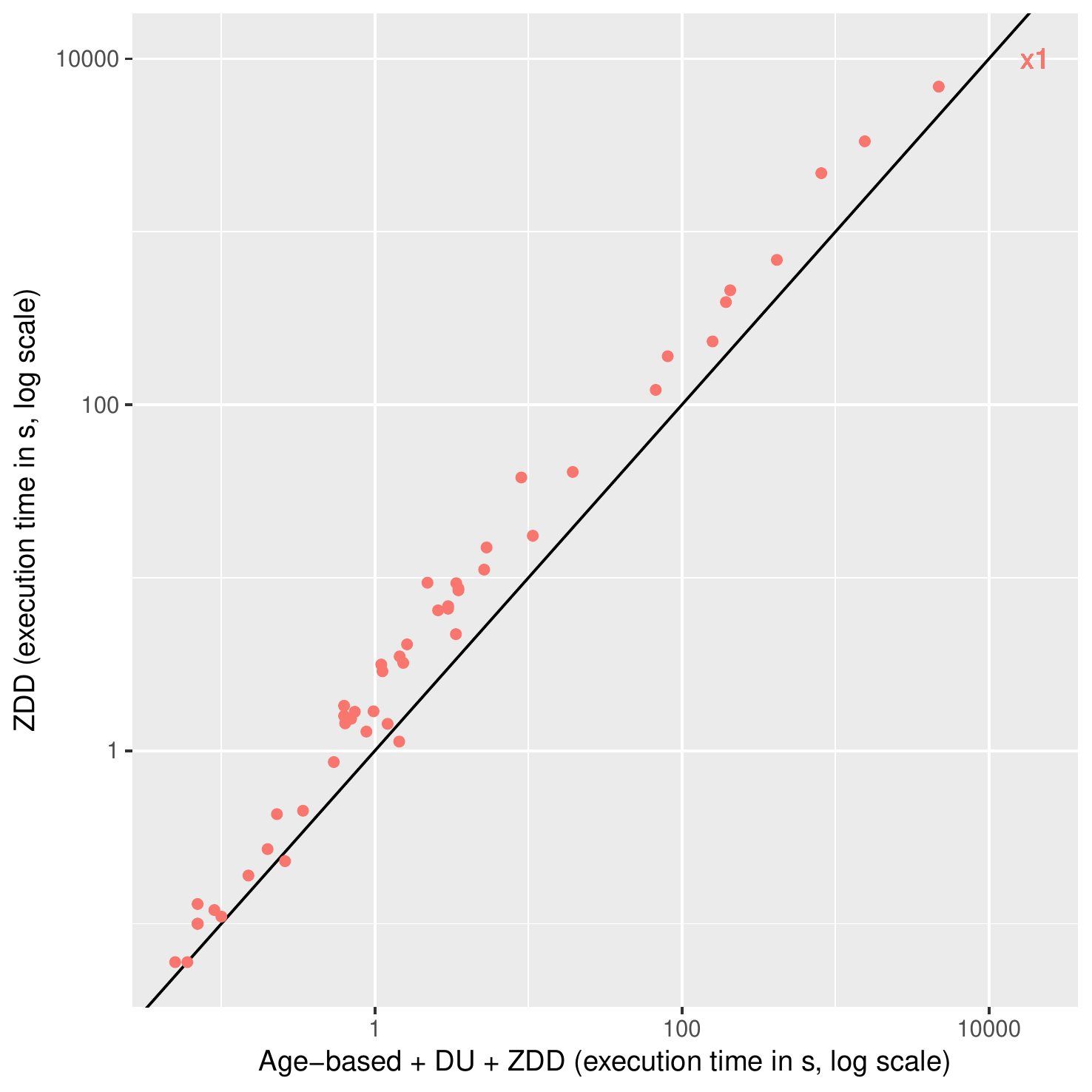}}{\includegraphics[width=\textwidth]{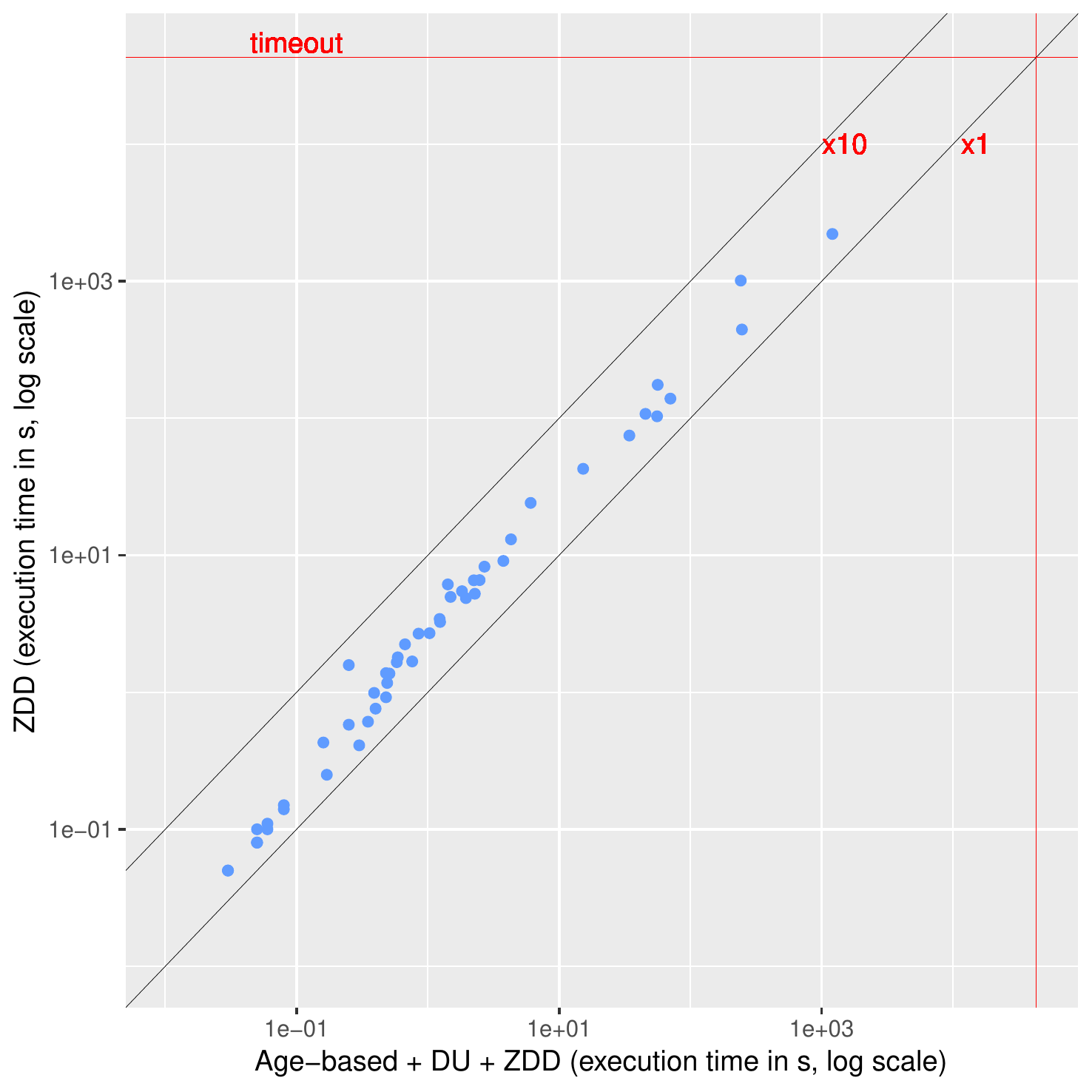}}
		\caption{Time comparison}\label{fig:time_preanalysis}
	\end{subfigure}
	\begin{subfigure}[h]{0.45\textwidth}
		\centering
		\oldnewFigure{1.0}{\includegraphics[width=\textwidth]{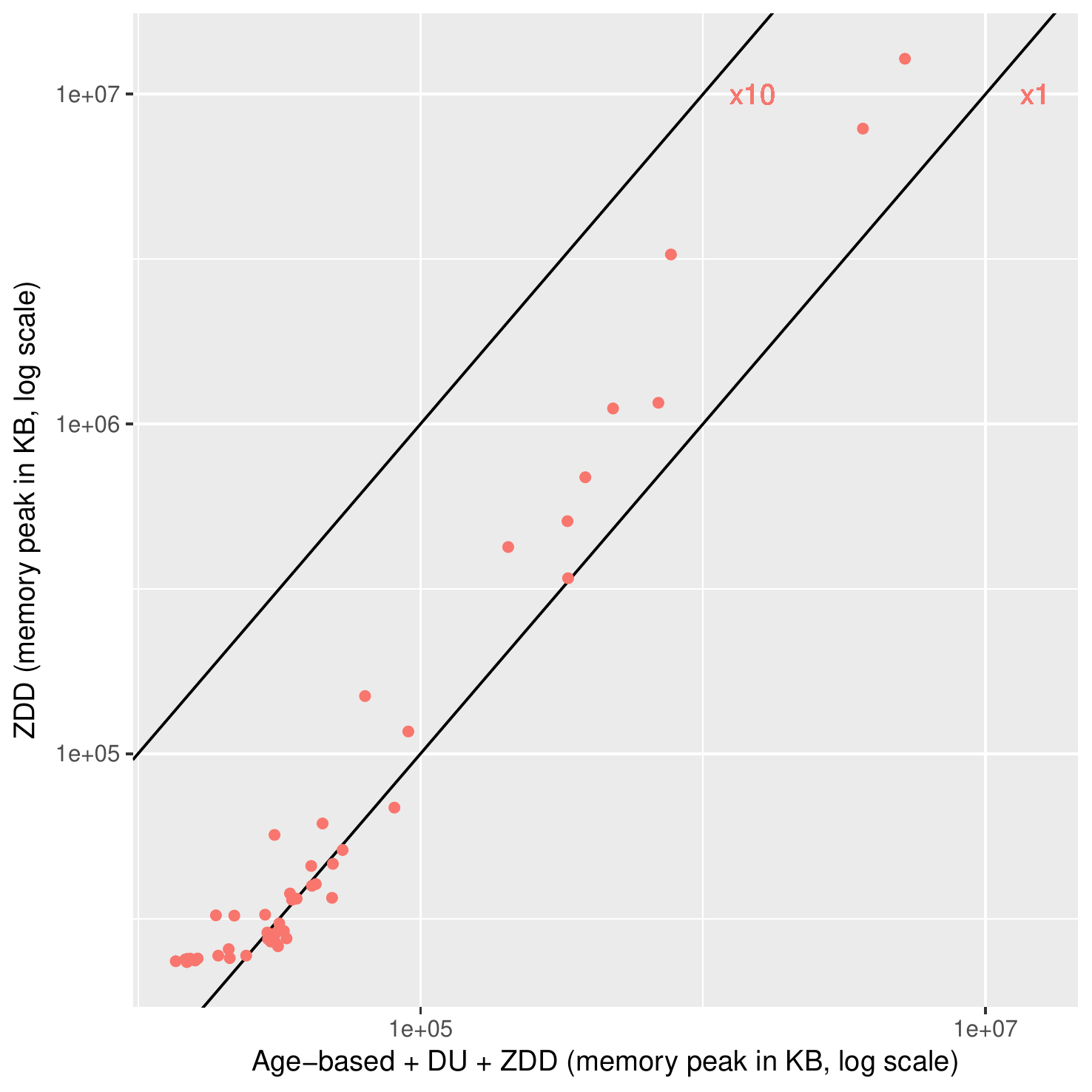}}{\includegraphics[width=\textwidth]{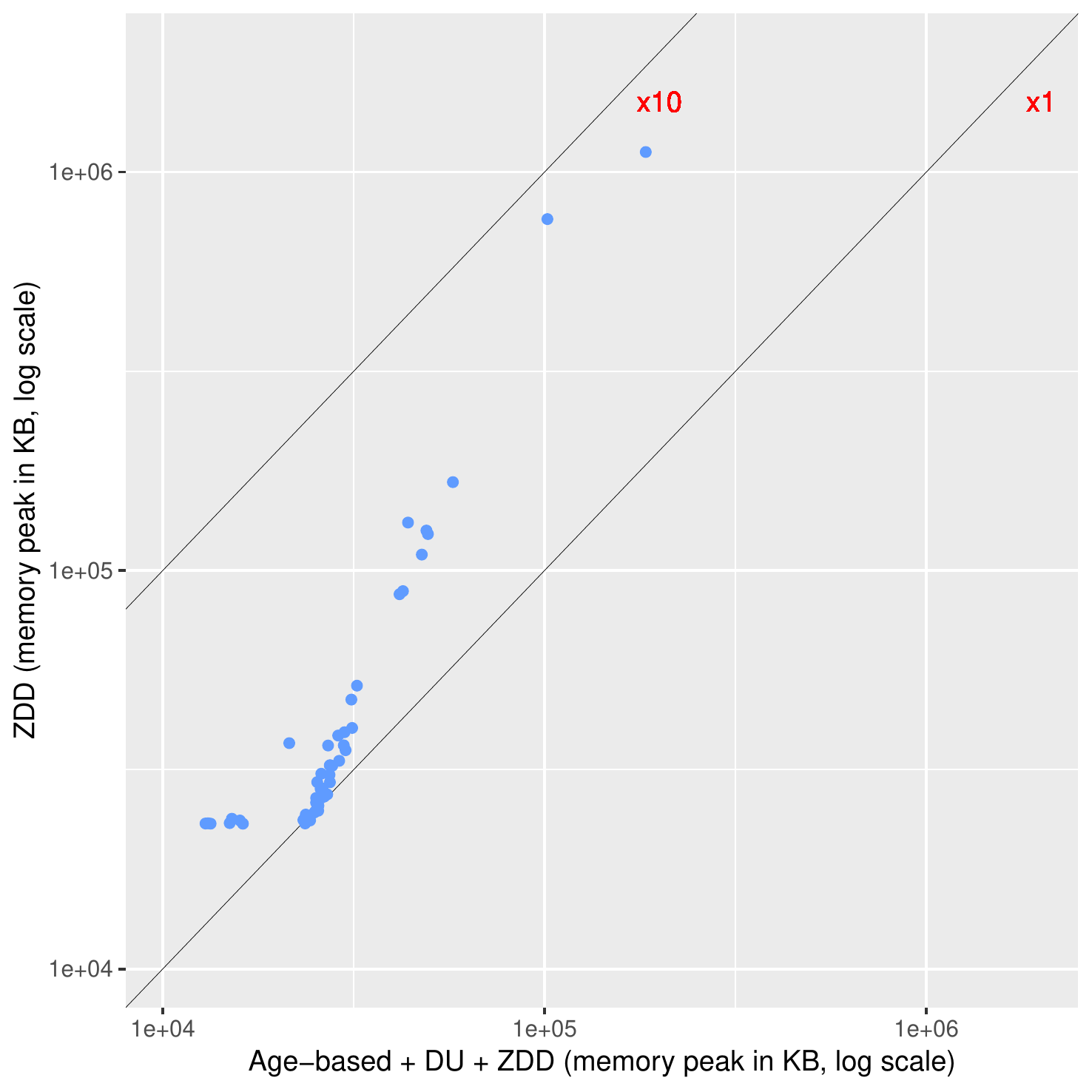}}
		\caption{Memory comparison}\label{fig:memory_preanalysis}
	\end{subfigure}
	\caption{ZDD vs. \emph{Age-based} + DU + ZDD.}
	\label{fig:preanalysis}
\end{figure}

Remember that the two versions of our analysis obtain the same access classifications.
They differ only in the way the classification is obtained.
%To compare the two variants, we plot the analysis execution time in \Autoref{fig:time_our_analyses} and the peak memory usage in \Autoref{fig:memory_our_analyses}. \todo{no analysis of the results???}
\Autoref{fig:preanalysis} shows that the approach combining \emph{ZDDs} with a pre-analysis based on the age-based and DU analyses, is more efficient in terms of analysis execution time and memory consumption than the \emph{ZDD} approach alone. 
The pre-analysis is performed using a single pass over the whole program for all memory blocks.
As it successfully classifies most accesses as ``always hit'', ``always miss'', or ``definitely unknown'', the ZDD approach needs to be run only on a relatively small subset of all memory blocks.
We will keep this \emph{Age-based + DU + ZDD} variant as a basis for the following experiments.%\todo{is this paragraph referring to \Autoref{fig:time_our_analyses} and \Autoref{fig:memory_our_analyses}? then it should probably be moved above the previous paragraph}

%% Associativity

\begin{figure}
\begin{center}
\oldnewFigure{0.84}{\includegraphics[width=\textwidth]{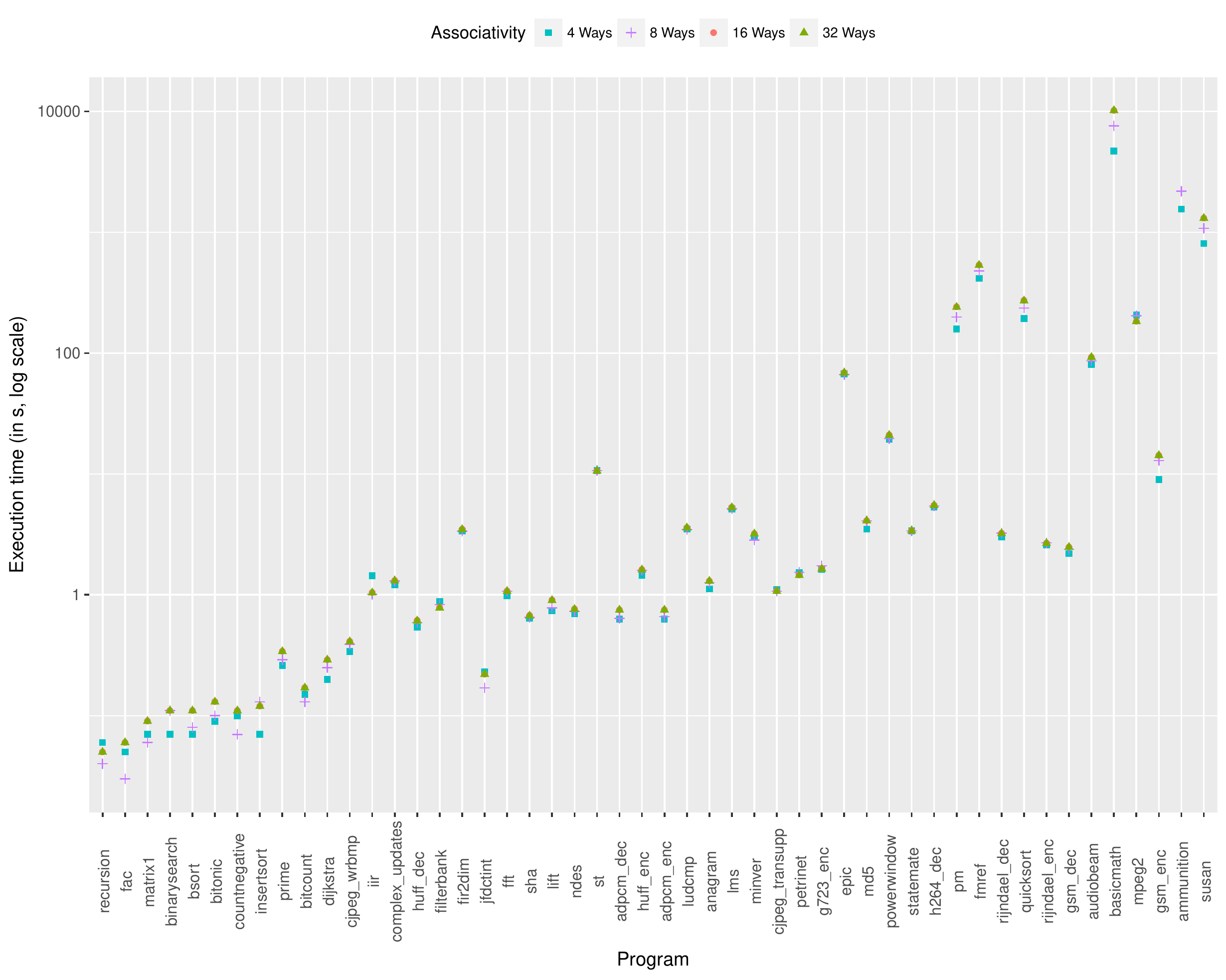}}{\includegraphics[width=\textwidth]{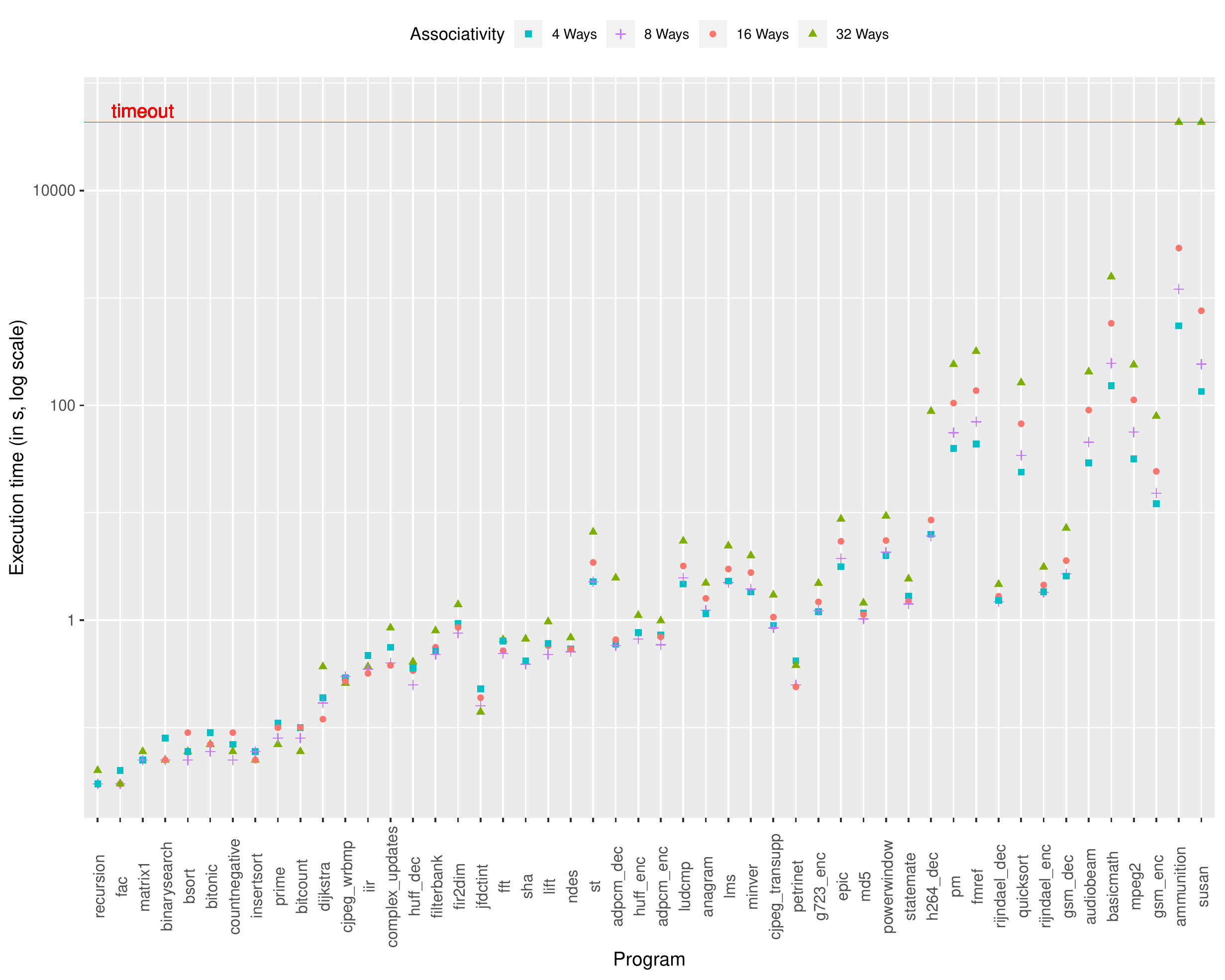}}
\caption{Evolution of execution time of our analysis when increasing associativity.}
\label{fig:associativity}
\end{center}
\end{figure}

\subsubsection{Scalability with Respect to the Associativity}
As mentioned in \Autoref{sec:complexity}, the \emph{may-hit} and \emph{may-miss} problems are NP-complete, when the cache's associativity is considered an input parameter.
In practice, however, repeated accesses to the same block are infrequent and the CFGs' branching structures are simple.
We thus evaluate the running time of our analysis when increasing the associativity of the cache, while keeping the same \oldnew{number of cache sets (thus increasing the cache's capacity)\footnote{Comment to attentive reviewers of this revision: As most benchmarks from the \textsc{TacleBench} are small (see \Autoref{fig:bench_size}), increasing the cache size beyond a certain limit is not sensible: most programs will then fit into the cache entirely. Therefore, to stress the analysis more, we decided to keep the cache size fixed, while increasing the associativity.}.}{cache size (thus decreasing the number of cache sets).}
\Autoref{fig:associativity} shows that our analysis scales well for usual values of associativity;
\oldnew{there is usually only a relatively small increase in analysis times with increasing associativity}{the increase in analysis time is usually proportional to the increase in associativity}.

In the next section we experimentally evaluate how our approach compares with previous work.
Using a first set of experiments, we compare our approaches to the model-checking approach described in \citet{DBLP:conf/cav/TouzeauMMR17}.
Then, we show that our analysis is similarly efficient as the age-based analysis of \citet{DBLP:journals/rts/FerdinandW99} and the DU analysis of \citet{DBLP:conf/cav/TouzeauMMR17} in terms of memory usage and analysis time.%\todo{move this to the next section?}

\subsection{Comparison with Prior Work}

Among the existing cache analyses, the approach of \citet{DBLP:conf/cav/TouzeauMMR17} is the closest to our work.
It consists of the following three steps:
\begin{itemize}
	\item First, the usual analysis of \citet{DBLP:journals/rts/FerdinandW99} (which we refer to as \emph{Age-based} analysis in following) is performed and classifies accesses as ``always hit'', ``always miss'' or ``unknown''.
	\item Then, a second analysis, called definitely unknown, classifies a subset of the ``unknown'' accesses as ``definitely unknown'' when it finds both a path leading to hit and a path leading to a miss for a given access (this is an approximated analysis --- it may fail to identify such paths).
	\item Finally, the remaining unknown accesses are classified using a model checker (MC) and marked as ``always hit'', ``always miss'' or ``definitely unknown''.
\end{itemize}
%%% Section 8.2 (Comparison to CAV)
%implem
\begin{figure}
\begin{center}
\begin{tikzpicture}
\node (May) [draw] at (0,5) {May Analysis};
\node (AI) [draw] at (0,2.5) {AI Engine};
\node (Must) [draw] at (0,0) {Must Analysis};
\node (EMUpdate) [draw] at (4,5) {Exists-miss Update};
\node (EMJoin) [draw] at (4,4) {Exists-miss Join};
\node (EM) [draw] at (4,3) {Exists-miss Analysis};
\node (EH) [draw] at (4,2) {Exists-hit Analysis};
\node (EHJoin) [draw] at (4,1) {Exists-hit Join};
\node (EHUpdate) [draw] at (4,0) {Exists-hit Update};
\node (DU) [draw] at (8,3) {DU Analysis};
\node (MC) [draw] at (8,2) {MC Analysis};
\node (Nuxmv) [draw] at (11,2) {nuXmv};

\draw[->] (EMUpdate) -- (May);
\draw[->] (EHUpdate) -- (Must);
\draw[->] (AI) -- (EMUpdate);
\draw[->] (AI) -- (EMJoin);
\draw[->] (AI) -- (EHUpdate);
\draw[->] (AI) -- (EHJoin);
\draw[->] (EM) -- (AI);
\draw[->] (EH) -- (AI);
\draw[->] (DU) -- (EH);
\draw[->] (DU) -- (EM);
\draw[->] (MC) -- (DU);
\draw[->] (MC) -- (Nuxmv);
\draw[->] (MC) edge[in=30,out=20] (May);
\draw[->] (MC) edge[in=-30,out=-90] (Must);

\draw (2,-1) edge[dashed] (2,6);
\draw (9.75,-1) edge[dashed] (9.75,6);

\node () at (0,6) {\software{Otawa}};
\node () at (6,6) {Our plugin};
\node () at (11,6) {Model Checker};

\end{tikzpicture}
\end{center}
\caption{Integration of \citet{DBLP:conf/cav/TouzeauMMR17} in \software{Otawa}.}
\label{fig:implem_CAV}
\end{figure}

Our implementation of this approach in \software{Otawa} is illustrated in \Autoref{fig:implem_CAV}.
Note that the DU analysis is based on two approximate analyses: an "exists hit'' analysis, which determines whether a path leading to a hit exists, and an ``exists miss'' analysis which determines whether a path leading to a miss exists.
Note that the DU analysis reuses the abstract cache states from \emph{Age-based} May/Must analyses: in our implementation, \emph{Age-based} analysis is provided by \software{Otawa}.
Finally, the accesses that are not classified precisely by the two abstract interpretation phases are refined using a call to the \software{nuXmv} model checker, which processes a model focussed on the memory block under analysis.

\subsubsection{Comparison to \citet{DBLP:conf/cav/TouzeauMMR17}}

\begin{figure}
	\begin{subfigure}[h]{0.45\textwidth}
		\centering
		\oldnewFigure{1}{\includegraphics[width=\textwidth]{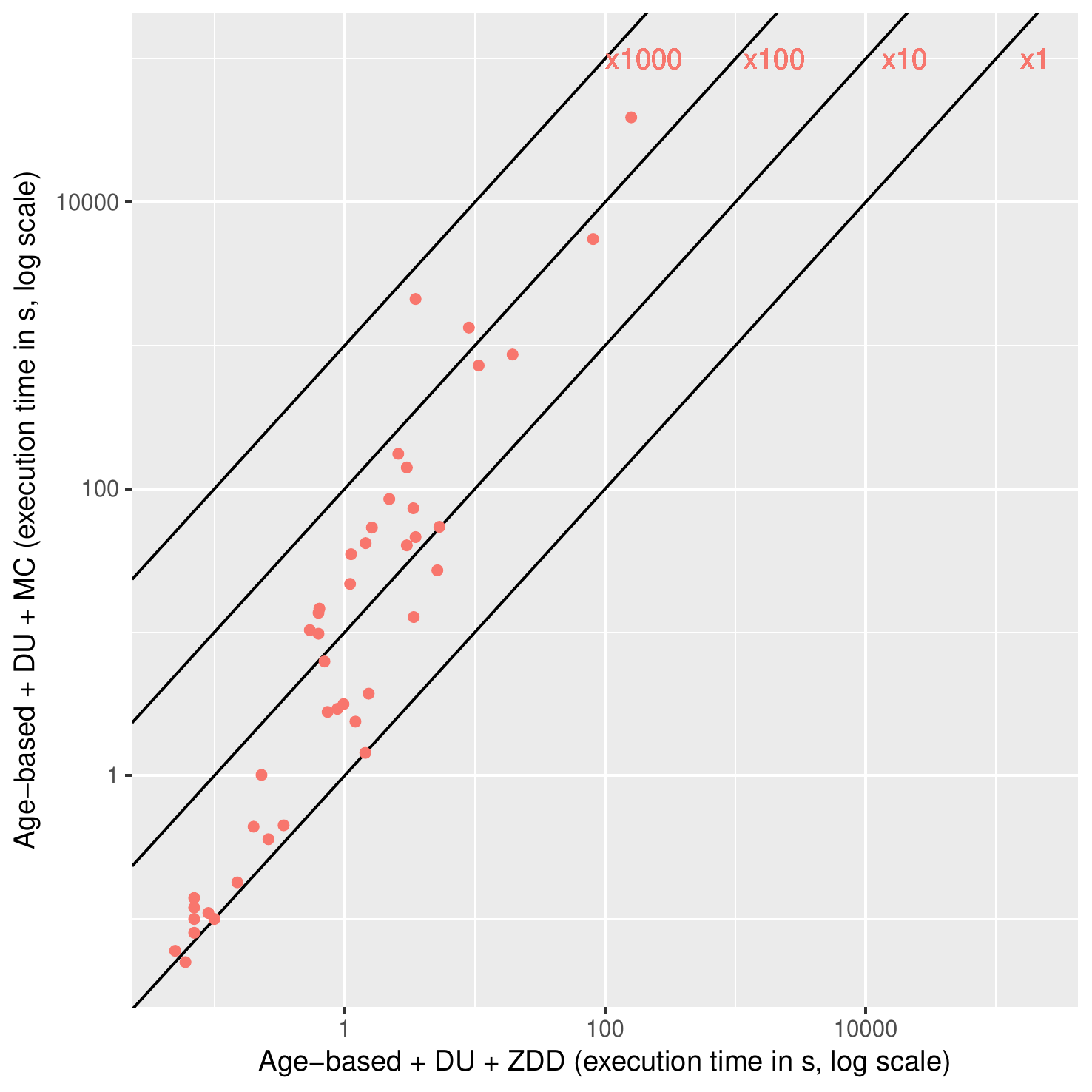}}{\includegraphics[width=\textwidth]{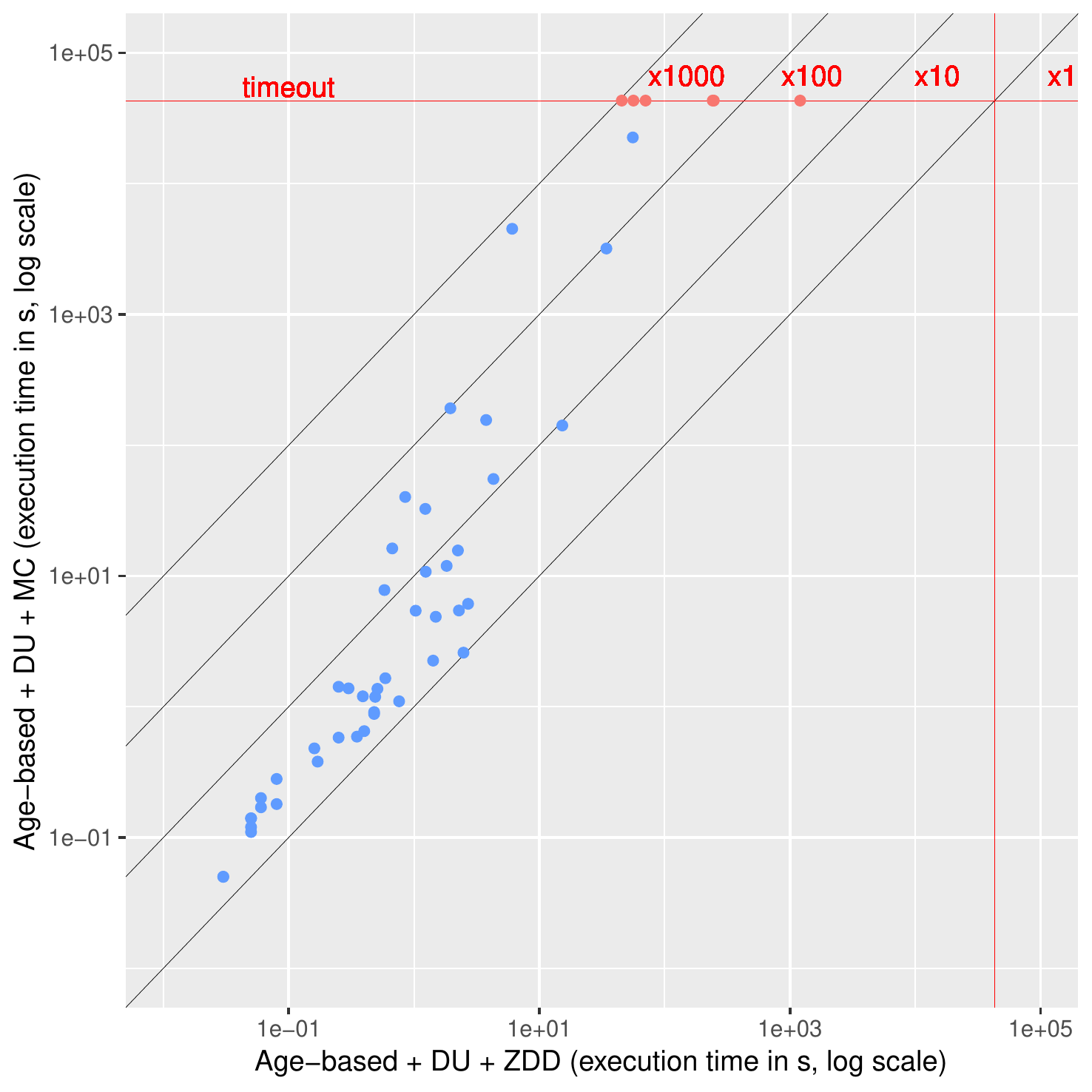}}
		\caption{Time comparison}\label{fig:time_vs_cav}
	\end{subfigure}
	\begin{subfigure}[h]{0.45\textwidth}
		\centering
		\oldnewFigure{1}{\includegraphics[width=\textwidth]{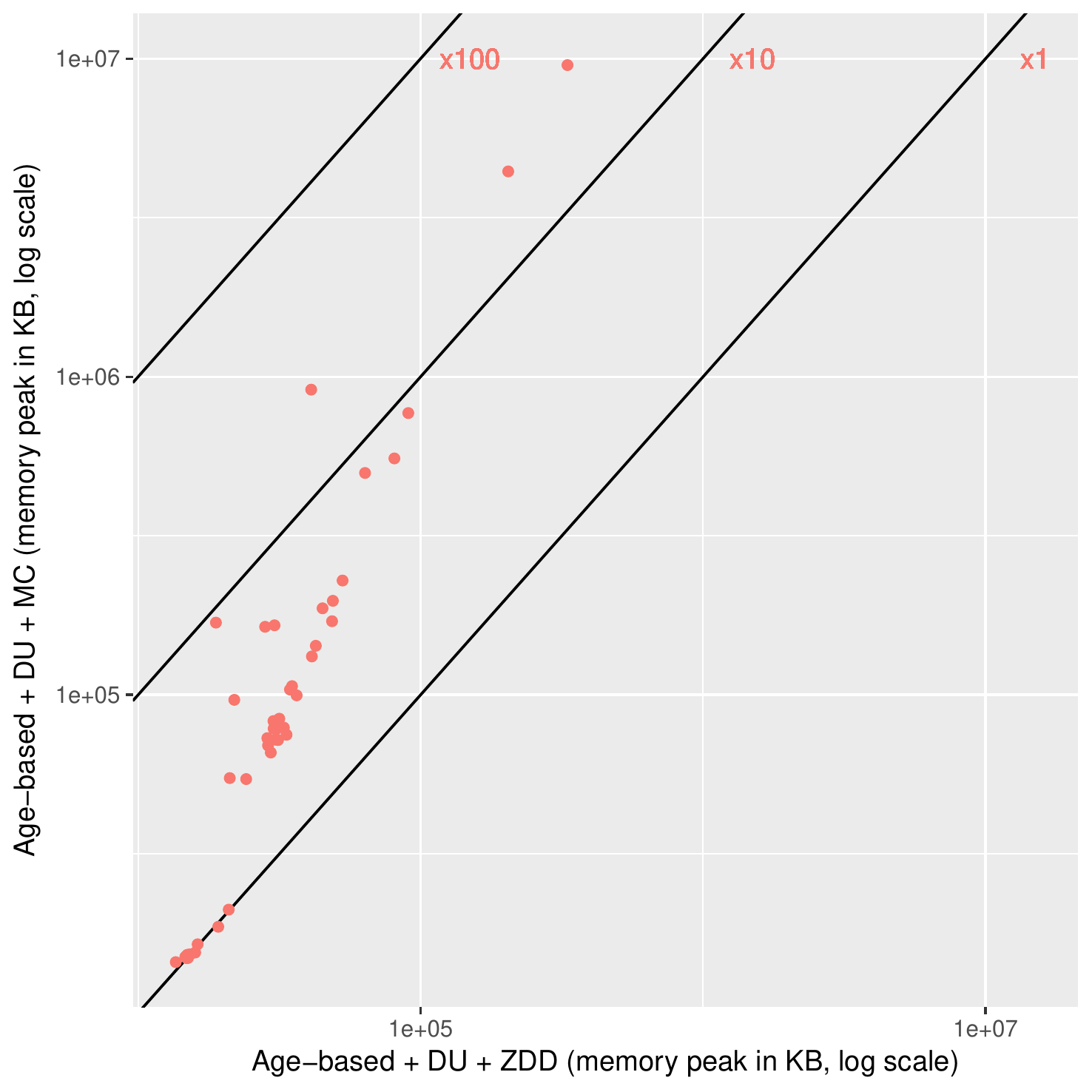}}{\includegraphics[width=\textwidth]{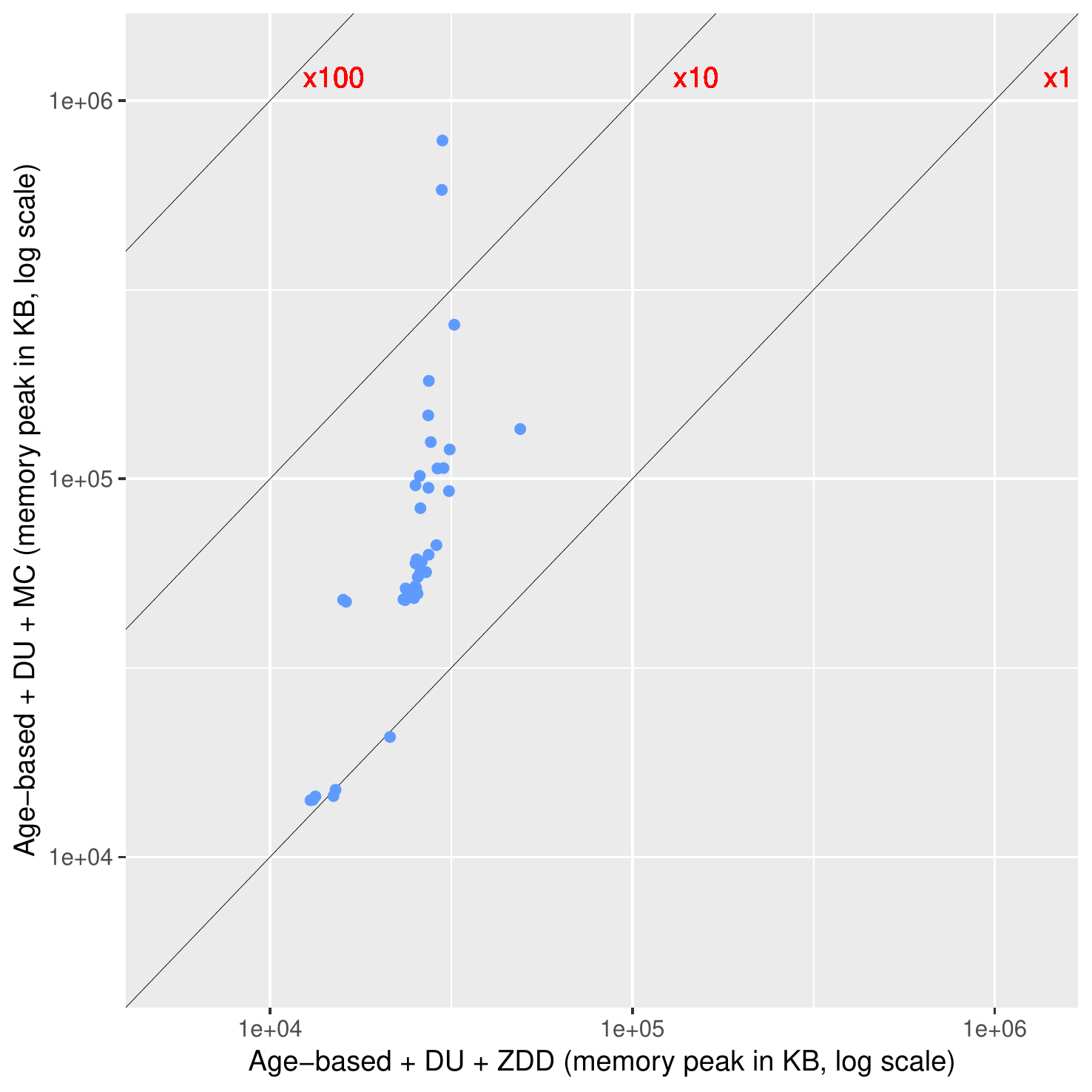}}
		\caption{Memory comparison}\label{fig:memory_vs_cav}
	\end{subfigure}
	\caption{ZDD approach vs. model-checking approach from \cite{DBLP:conf/cav/TouzeauMMR17}.}
	\label{fig:vs_cav}
\end{figure}

As the prior work of Touzeau et al. and ours provide the same classification of memory accesses, we compare the analyses according to two criteria:
\begin{enumerate}[a)]
	\item The full analysis time, including every analysis step from CFG reconstruction to memory-access classification, is compared in \Autoref{fig:time_vs_cav}.
	\item The peak memory usage of the two approaches is compared in \Autoref{fig:memory_vs_cav}.
\end{enumerate}
The scatter plots are both on a log. scale.
Each dot corresponds to the resource consumption of one benchmark from \software{TacleBench} under the two analysis approaches.

The figures clearly show that the ZDD approach is significantly faster than the model-checking approach (more than a hundred times faster for the largest benchmarks) and that the benefits increase with the size of the benchmarks.
The peak memory usage is also generally smaller with our ZDD approach than with model checking.%\todo{there is less detail in the discussion here than in the introduction, which is odd. should elaborate on this at some point}

%Note that using ZDD only is already faster than the previous work \Autoref{fig:time_exact_analyses}, \Autoref{fig:memory_exact_analyses}.\todo{this needs to be explained more

\subsubsection{Overhead of ZDD over \emph{Age-based + DU}}

%% Overhead
%time
\begin{figure}
\begin{center}
\oldnewFigure{0.9}{\includegraphics[width=0.85\textwidth]{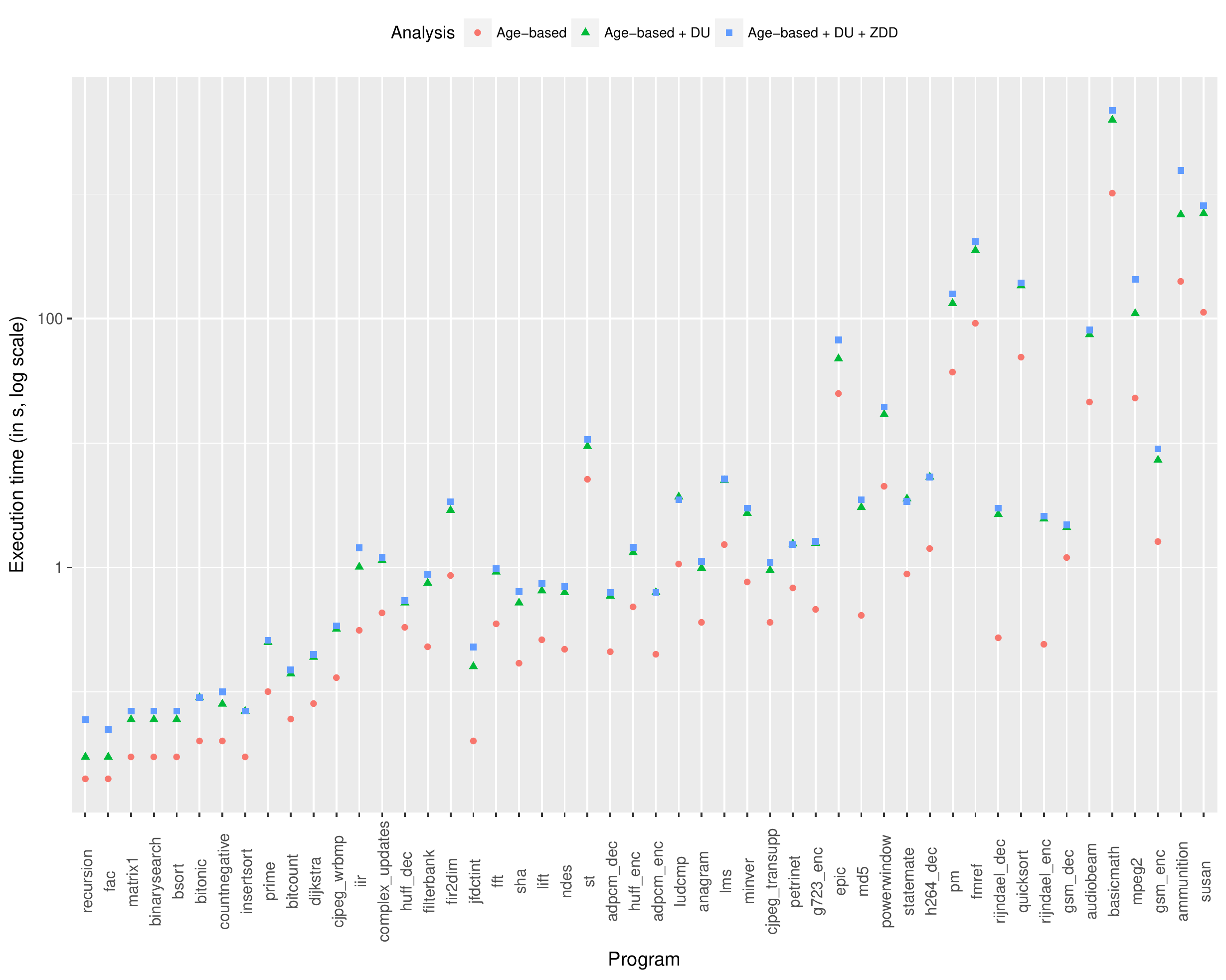}}{\includegraphics[width=0.85\textwidth]{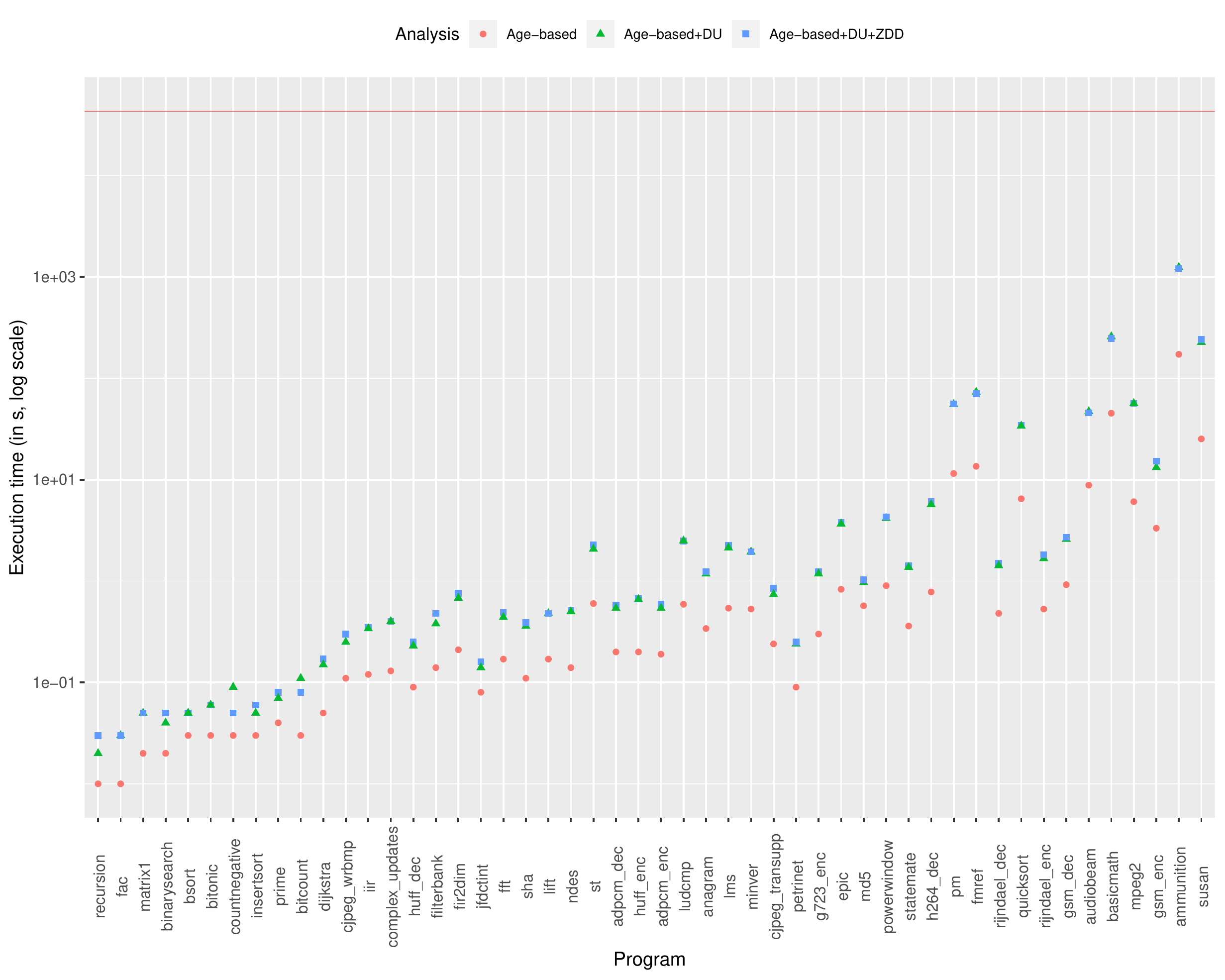}}
\caption{Execution time overhead.}\label{fig:time_overhead}
\end{center}
\end{figure}
\begin{figure}
\begin{center}
\oldnewFigure{0.9}{\includegraphics[width=0.85\textwidth]{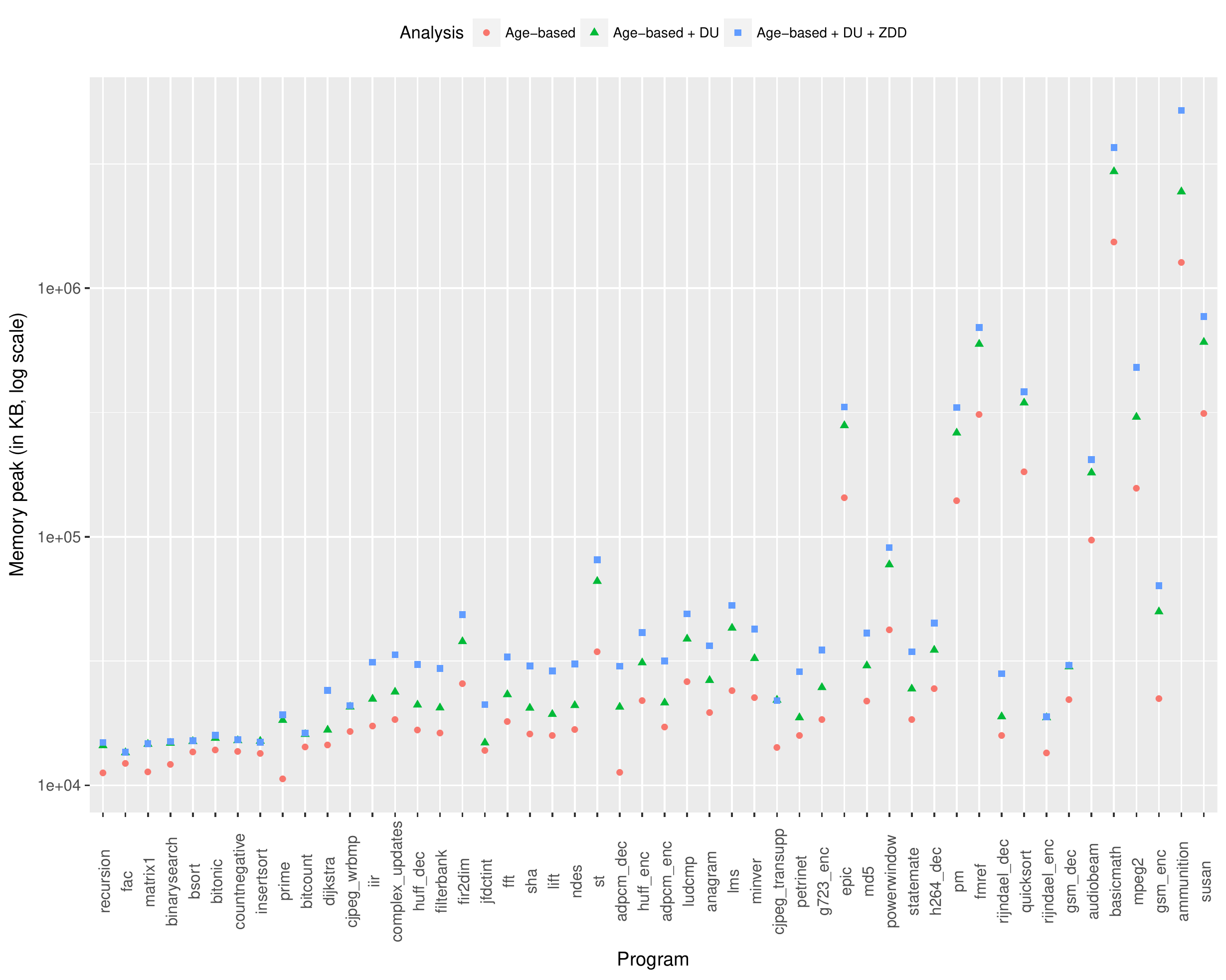}}{\includegraphics[width=0.85\textwidth]{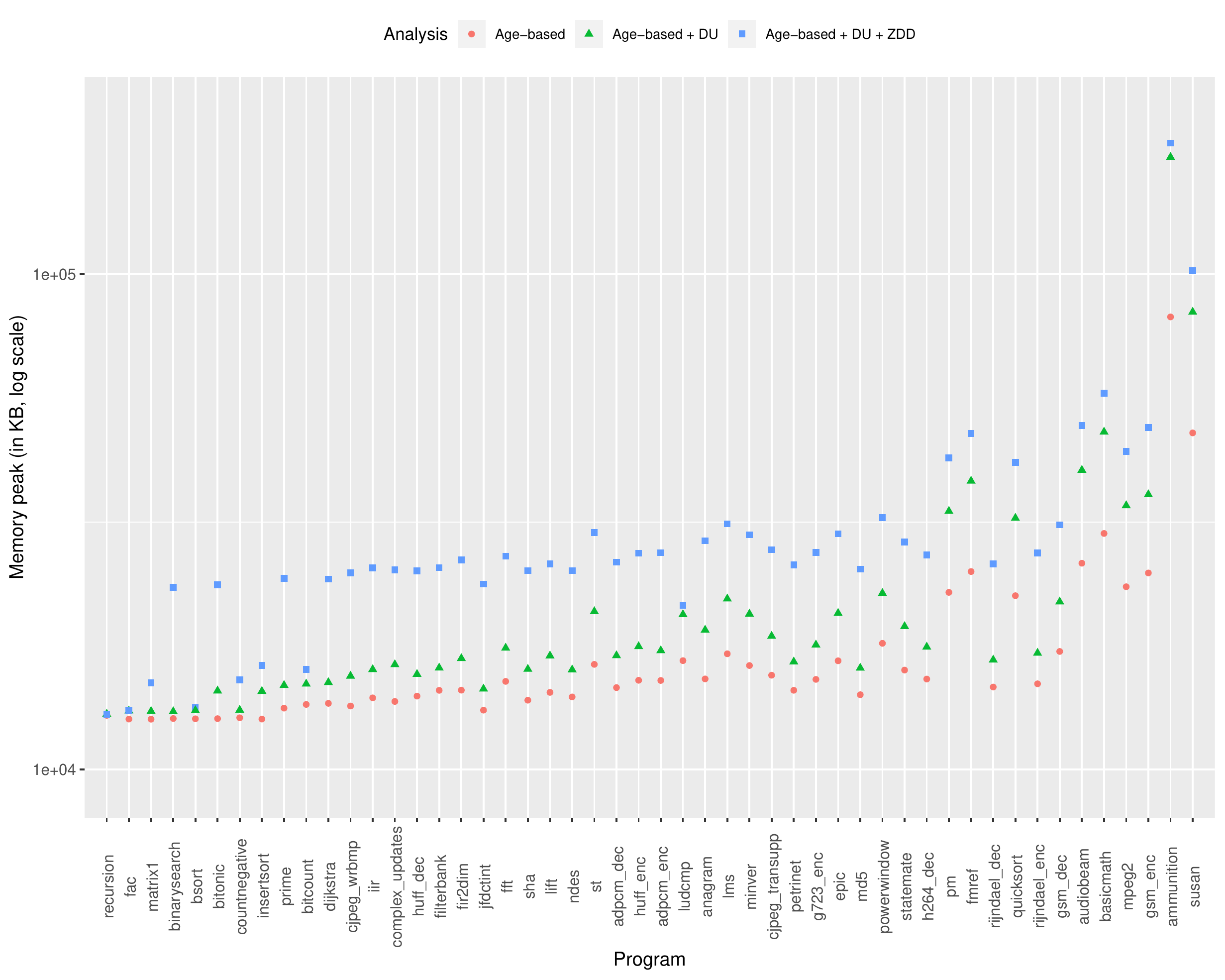}}
\caption{Memory overhead.}\label{fig:memory_overhead}
\end{center}
\end{figure}

We evaluate the scalability of our approach comparing it to the usual \emph{Age-based} analysis and DU analysis.
\Autoref{fig:time_overhead} and~\Autoref{fig:memory_overhead} show the execution time of our  approach compared to \emph{Age-based} and \emph{Age-based + DU} on a logarithmic scale.
The vertical distance between two points can be interpreted as the slowdown due to the additional analysis\footnote{Note that for very small benchmarks the time measurements are not very reliable, as the reported values correspond to single measurements. Experiments with \textsc{countnegative} showed variations of up to 30\% from one measurement to another.}.
On average, our approach is \oldnew{3.53}{3.46} times more costly than the usual \emph{Age-based} analysis, and only adds a \oldnew{16.4\%}{4.6\%} overhead over the \emph{Age-based + DU} analysis, as shown in \Autoref{fig:scatter_overhead}.

%scatter DU vs DU + ZDD
\begin{figure}
\begin{center}
\oldnewFigure{1.0}{\includegraphics[width=.5\textwidth]{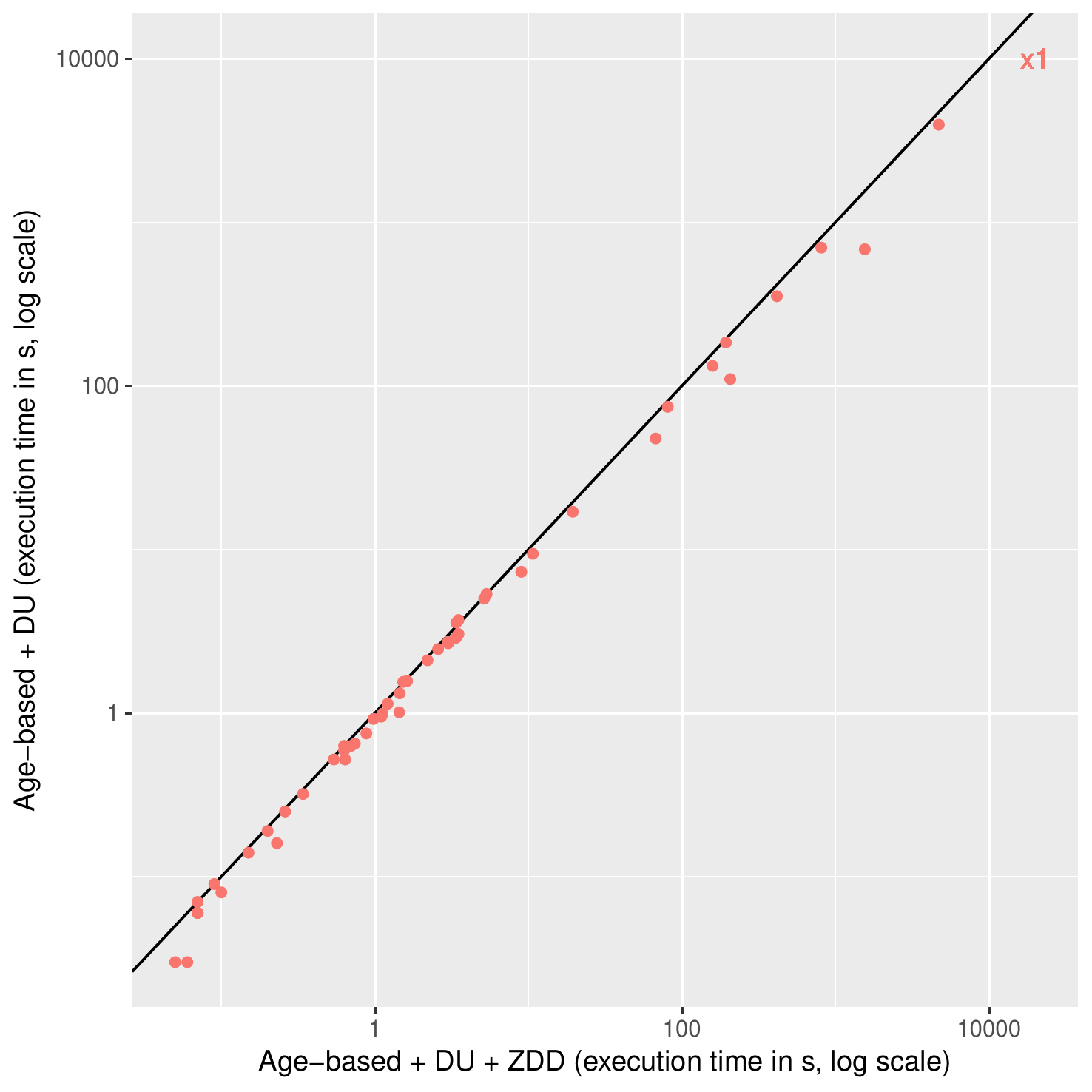}}{\includegraphics[width=.5\textwidth]{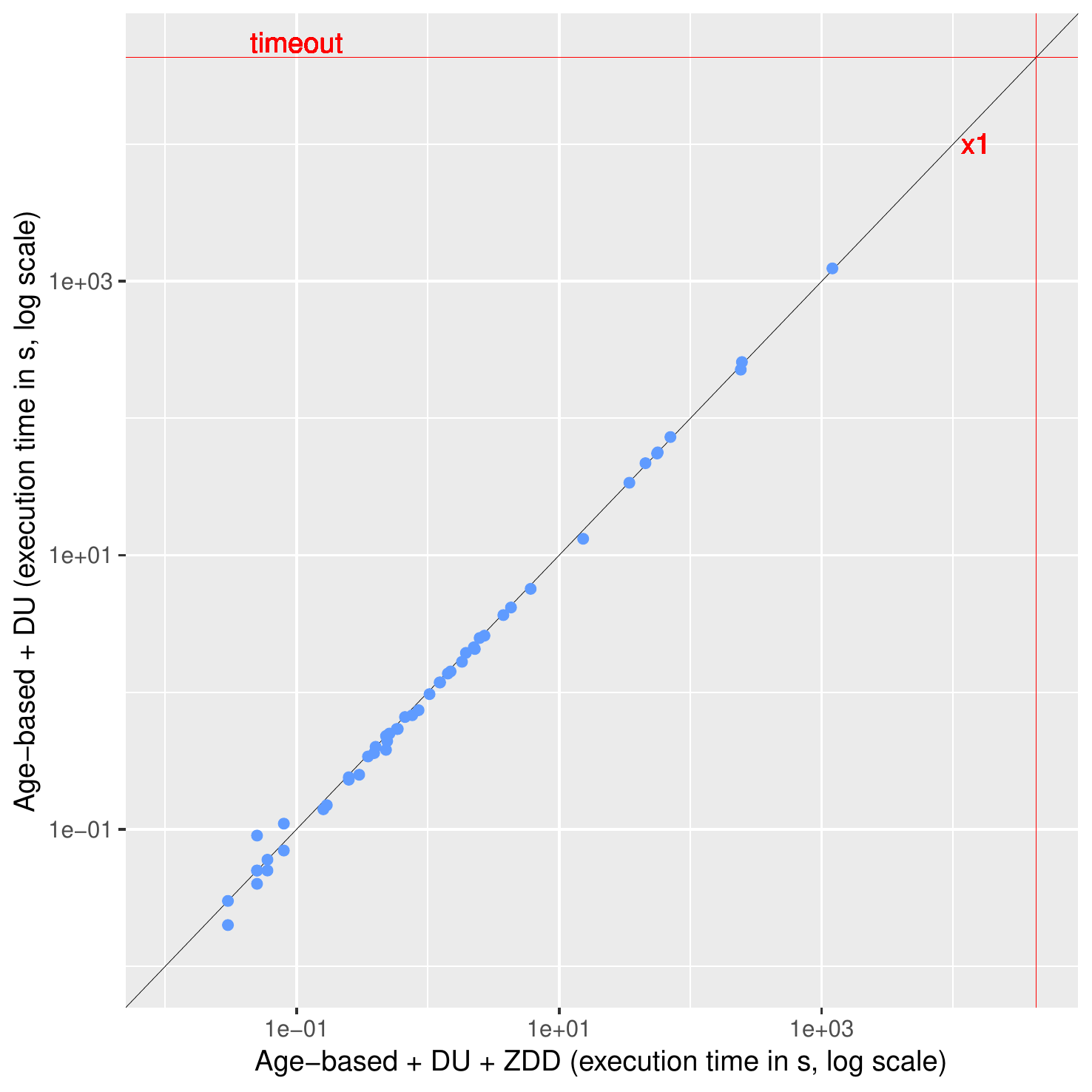}}
\caption{Execution time overhead.}\label{fig:scatter_overhead}%\todo{this figure seems to give no additional information w.r.t. \Autoref{\label{fig:time_overhead}} why do we have it?}
\end{center}
\end{figure}

To conclude, the ZDD approach is significantly faster than the exact analysis by \citet{DBLP:conf/cav/TouzeauMMR17}.
This validates our choice of algorithms and data structures. %need to rewrite
The results demonstrate that the ZDD approach offers good scalability that could lead to an industrial use.

\subsubsection{Scalability Comparison with Other Previous Work}
\label{sec:scalability_other_related_work}
The approaches of \citet{Cha_rts13,Chu_rtas16} analyze programs at the source code level and are thus not directly comparable to our work (which analyzes binary code), if only because we map statements to cache blocks differently.

The only common benchmark with ours is \software{statemate}: for this benchmark their analysis stops after 100 calls to the model checker and the analysis spends 195 seconds, where we analyze the whole benchmark in less than \oldnew{4}{2} seconds.
 Similarly, in \citet{Chu_rtas16} the paper states an analysis time of 350 seconds for \software{statemate} and 38 seconds for the benchmark \software{ndes} where we analyze them in less than \oldnew{4}{2} seconds and less than \oldnew{1}{1} second.

Unfortunately, there are no other common benchmarks; it however seems our analysis scales much better than these two previous works.

%%% Local Variables:
%%% mode: latex
%%% TeX-master: "LRU_cache_analysis_POPL2019"
%%% End:

\section{Related Work}
\label{sec:related_work}
\paragraph{Cache analysis for the verification of real-time systems}
Static cache analysis was first studied in the context of real-time systems~\cite{LITESLITES-v003-i001-a005}.
\citet{Mueller1995ss} introduced a data-flow analysis for direct-mapped caches, i.e., for caches with associativity 1.
Based on abstract interpretation (AI), \citet{DBLP:journals/rts/FerdinandW99} proposed the classical age-based analysis for set-associative caches with LRU replacement.
Their analysis is still in widespread use in commercial and academic WCET analysis tools, e.g.~\cite{ait,DBLP:conf/seus/BallabrigaCRS10}.

As discussed in \Autoref{sec:motivating_example}, Ferdinand and Wilhelm's analysis can be seen as computing a range of possible ages for each memory block in a given program.
On straight-line code this analysis is exact.
However, at control-flow joins the relation between ages of different blocks may be lost. 
As a consequence, the analysis may classify some accesses as ``unknown'' that are in fact always hits or always misses.
There have been several attempts to improve upon the precision of the classical AI-based cache analysis:

\citet{Cha_rts13} refines memory accesses classified as ``unknown'' by AI using a software model-checking step: when abstract interpretation cannot classify an access, the source program is enriched with annotations for counting conflicting accesses and run through a software model checker.
		Their approach, in contrast to ours, takes into account program semantics during the refinement step;
it is thus likely to be more precise on programs where many paths are infeasible for semantic reasons.
Our approach however scales considerably better, as shown in \Autoref{sec:implementation}.
(Our approach can also be combined with analyses for checking the feasibility of paths, as sketched in \Autoref{ref:future_work}; we however have not experimented with it yet).
They advocated applying their method with a time bound. Accesses classified as ``unknown'' by AI are refined until the time bound is reached.
\citet{Chu_rtas16} present a WCET analysis framework based on symbolic execution, where an SMT solver is used to prune infeasible paths.
They employ the age-based abstraction of Ferdinand and Wilhelm within symbolic execution, but never join states, thus avoiding any imprecision.
See Section~\ref{sec:scalability_other_related_work} for a performance comparison of these two analyses with ours.

\citet{DBLP:conf/cav/TouzeauMMR17} refine accesses classified as ``unknown'' by AI using model checking similarly to \citet{Cha_rts13}.
In order to reduce the number of calls to the model checker, they introduce an AI-based analysis that can classify accesses as ``definitely-unknown''.
For a ``definitely-unknown'' access both a hit and a miss are possible depending on the path taken through the control flow graph to reach the access. The classification of such accesses cannot be refined by model checking.
To reduce the effort of the model checker in classifying a particular access, they also introduce a ``focused semantics'', which we reuse and extend in this paper, as discussed in \Autoref{sec:fixpoint}.
We compare the efficiency of their approach with ours in \Autoref{sec:implementation}.

Abstract interpretation has also been applied to the analysis of caches with other popular replacement policies found in modern microarchitectures, such as \emph{first-in, first-out}~(FIFO)~\cite{Grund09,Grund10,Guan13}, \emph{not most-recently-used}~(NMRU)~\cite{Guan14}, and \emph{pseudo-LRU}~(PLRU)~\cite{Grund10a}.
No exact analysis has been proposed for these policies, which are considered to be harder to analyze than LRU~\cite{Reineke07}.
It is doubtful whether our approach can be extended to these policies, as they do not seem to exhibit any useful monotonicity properties as LRU does.

\paragraph{Compiler optimizations}
Optimizing compilers may apply loop transformations to maximize parallelism and data locality~\cite{Feautrier1992,Lim:1997:MPM:263699.263719}.
To support such optimizations, various approaches have been proposed to compute or approximate the number of cache misses of a given loop nest~\cite{Ghosh:1999:CME:325478.325479,Chatterjee:2001:EAC:378795.378859,Cascaval:2003:ECM:782814.782836,BEYLS2005223,Bao:2017:AMC:3177123.3158120}.
This line of work is limited to restricted classes of programs, usually affine loop nests with no input-dependent control flow or input-dependent memory accesses.
The advantage of such methods is that they distinguish each dynamic instance of an instruction in a loop nest, while our approach classifies all dynamic instances together, thereby introducing pessimism.
It would be interesting to investigate whether the exact abstraction developed in this paper could be combined with analytical approaches such as \cite{Bao:2017:AMC:3177123.3158120} to support input-dependent program behavior. 

\paragraph{Cache side-channel analysis}
Caches can be exploited as covert channels~\cite{Kocher2018,Lipp2018} and in side-channel attacks~\cite{Bernstein_2005,Mowery:2012:AXC:2381913.2381917,Liu2015,Yarom2017}.
Static cache analysis has been applied to quantify the vulnerability of implementations of cryptographic protocols~\cite{Doychev2015, Doychev:2017:RAS:3062341.3062388} to cache side-channel attacks.
More accurate cache analyses, such as the one developed in this paper, may yield more accurate vulnerability quantifications.
Applying our analysis in this context is future work.

\paragraph{Antichains}
Antichains have been used in verification to represent lower and upper sets, which occur in many contexts (automata, LTL satisfiability, games, etc.).
\citet{DBLP:conf/cav/WulfDHR06,DBLP:conf/atva/WulfDMR08} proposed two succinct representations:
\begin{inparaenum}[(i)]
\item \emph{fully symbolic}: a binary decision diagram (BDD) represents sets of sets of states: each automaton state is mapped to a BDD variable, a set of states is thus a valuation;
  this representation is thus similar to ours except that they use BDDs and not ZDDs;
\item \emph{semi symbolic}: a state is encoded as an integer, thus as a vector of bits, a set of states is thus encoded as a BDD, and an antichain is thus a sequence of BDDs;
  this representation is thus similar to our explicit list of sets of blocks, which we tried then discarded due to inefficiency.
\end{inparaenum}

They report that, on their examples, the fully symbolic representation is less efficient than the semi symbolic one, particularly on large automata;
the explain this by the linear growth of the number of variables in the BDDs for the fully symbolic representation with respect to automaton size,
as opposed to logarithmic growth in the semi symbolic representation.
We explain this difference with our own findings as follows:
\begin{inparaenum}[(1)]
\item Their sets of states correspond to sets of blocks in our problem. We limit the cardinality of the sets we handle to associativity, whereas, as far as we understand, their sets can be very large.
\item We use ZDDs, which are more compact than BDDs for sets of small sets (no need for nodes ``this element is not in the set'').
\item Their antichains are smaller than ours. %\todo{Valentin}
\end{inparaenum}
Our ZDD approach thus seems efficient for large, often similar antichains of small sets, whereas their semi symbolic approach seems efficient for small antichains of large sets.

%%% Local Variables:
%%% mode: latex
%%% TeX-master: "LRU_cache_analysis_POPL2019"
%%% End:

\section{Conclusion and Future Work}
\label{sec:conclusion}
\label{ref:future_work}

For decades, it was believed that only rough abstractions (Ferdinand's analysis) could scale up for cache analyses. We show here that it is actually possible to perform an exact analysis by carefully refining the abstraction and using good algorithms and data structures.

We have demonstrated how it is possible to obtain a cache analysis as precise as one obtained by analyzing the cache replacement policy using a model checker, but at a much lower cost (hundreds of times faster).
The main difference between the two analysis is that we abstract the problem further while preserving the exact same results.

Our results are the strongest possible (exactly classifying accesses as ``always hit'', ``always miss'', ``hits or misses depending on the execution'') with respect to a model where all paths inside the control-flow graph may be taken.
This includes paths that cannot be taken inside the program, e.g. ones with conflicting tests $x < 1$ and $x > 2$ with no change to $x$ in between.
It is impossible to remove all spurious paths: this is an undecidable question.
We can however combine our analysis with others to improve the precision in this respect.

Our analysis computes an abstract state, as opposed to encoding everything in a model checker. As such, it can be combined simultaneously with other abstract interpretations (program variables, pointers, micro-architecture\dots) and approaches such as the \emph{abstract reachability graph}, with nodes adorned by pairs (location, abstract state), edges adorned by instructions and where each test, at least up to a certain depth, is split into two branches;
two nodes may be merged, as in a directed acyclic graph, if one abstract state is subsumed by the other;
cycles in the graph may be introduced, perhaps after widening operations, to account for loops.
Such an analysis would exclude some infeasible paths.

Our approach applies both to instruction and data caches. If the address of a data fetch or write is only known to lie within a set of addresses $\{a_1, \dots, a_n\}$, then we consider n parallel edges labeled with $a_1, \dots, a_n$ in the analysis graph. As future work, trace partitioning may be used to improve the precision of such an approach, and can be implemented as a layer above our analysis.

%%% Local Variables:
%%% mode: latex
%%% TeX-master: "LRU_cache_analysis_POPL2019"
%%% End:

%% Bibliography
\printbibliography

%% Appendix
\appendix
%Text of appendix \ldots
\clearpage
\section{Sharing ZDD implementation}

BDD (or ZDD) libraries intensively use hash tables for
\begin{inparadesc}
\item[hash-consing] the BDD nodes: when the library wishes to create a node isomorphic to one that already exists in the system, that one is used instead;
\item[memoizing] the BDD operations: during a recursive operation $f$ on $n$-tuples of BDDs nodes, the computed values of $f$ are stored into a hash table and are retrieved if the same $n$-tuple is encountered instead of recursing;
this ensures that an operation of BDDs $D_1,\dots,D_n$ is (roughly) in time $\prod_i |D_i|$.
\end{inparadesc}
The requirements are that all currently reachable BDD nodes should be retained in the hash-consing table (but other nodes may be retained), and that (at least for ensuring polynomial computations) during one BDD operation the memoizing hash tables should not be flushed (but they may be retained longer).
Depending on the BDD library in use and its parameters, ``garbage'' in these tables may be collected eagerly (at the risk of having to recreate or recompute collected data) or late (at the risk of storing useless data).

In our initial implementation, the analyses for various focus blocks $a$ were run completely separately; all ZDD tables were flushed in between.
We implemented a variation of the \emph{ZDD} approach that does \emph{not} free all structures and \software{Cudd}'s data between two analysis runs for different memory blocks.
Doing so, the analysis may benefit from computations done in a previous analysis run if the result of a particular computation has been memoized.

\begin{figure}[h]
	\begin{subfigure}[h]{0.45\textwidth}
		\centering
		\includegraphics[width=\textwidth]{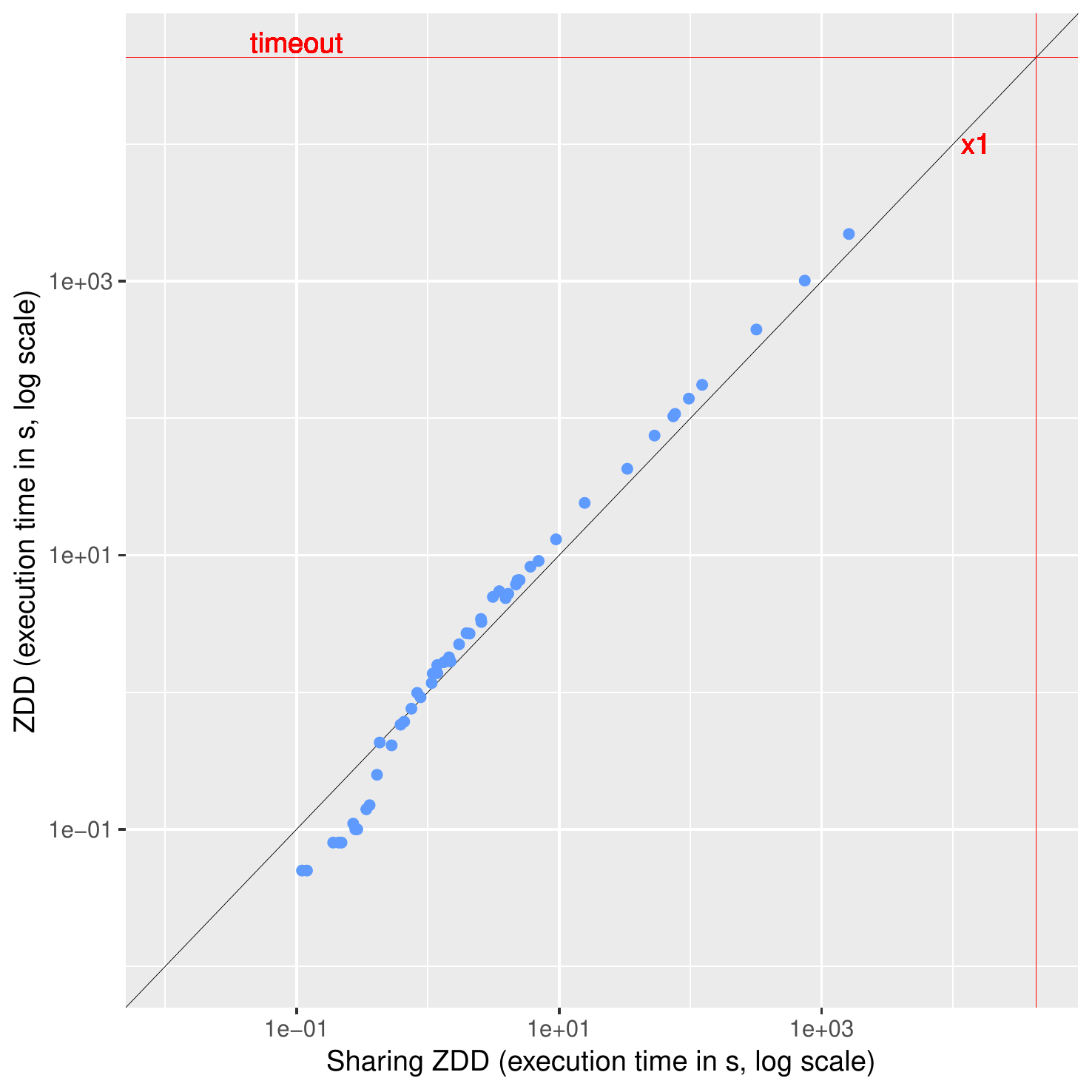}
		\caption{Time comparison}\label{fig:time_sharing}
	\end{subfigure}
	\begin{subfigure}[h]{0.45\textwidth}
		\centering
		\includegraphics[width=\textwidth]{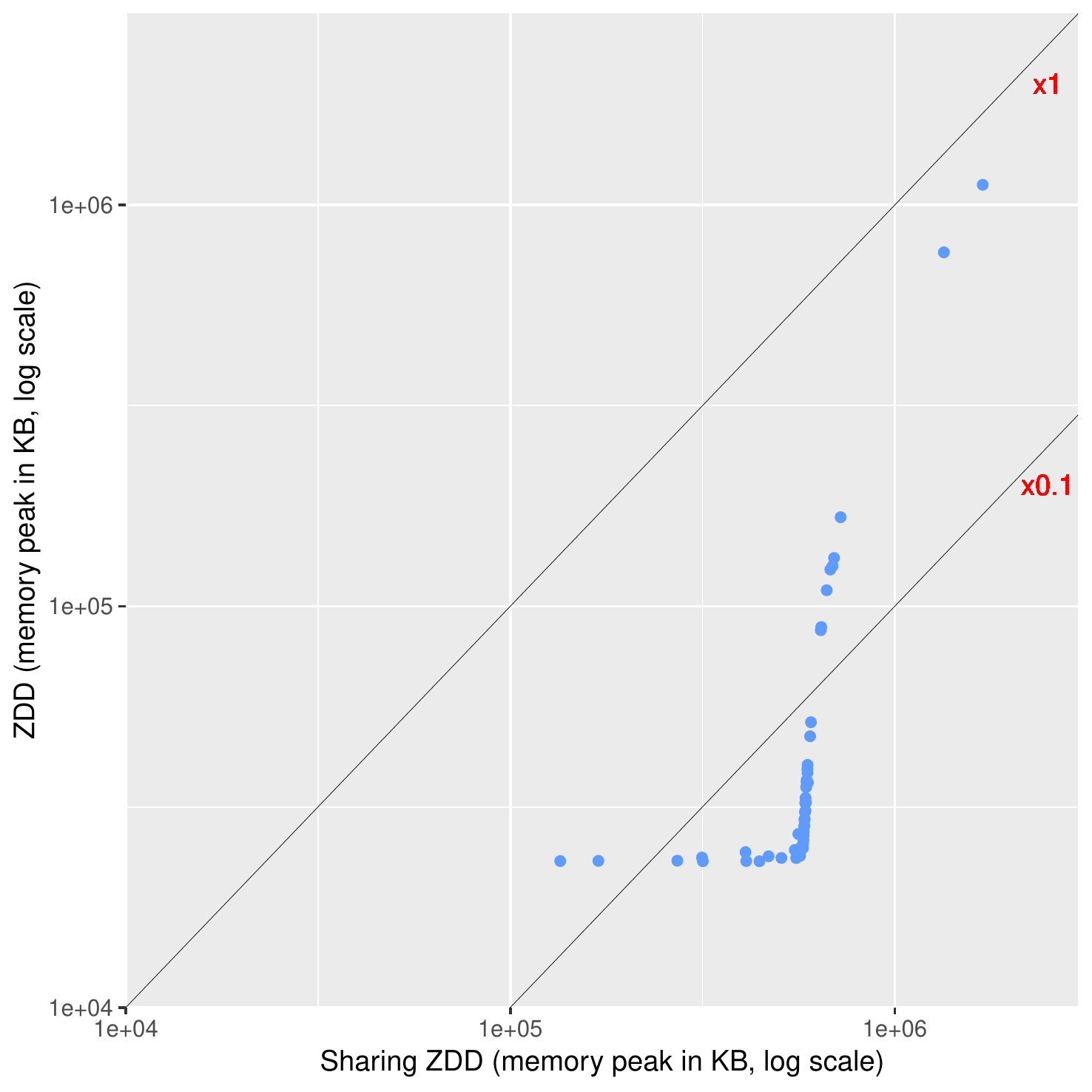}
		\caption{Memory comparison}\label{fig:memory_sharing}
	\end{subfigure}
	\caption{\emph{ZDD} vs. \emph{Sharing ZDD}}
	\label{fig:sharing}
\end{figure}

Comparing the \emph{ZDD} and \emph{Sharing ZDD} variants, we observe that sharing of ZDDs between analysis runs is not very beneficial in our case~(\Autoref{fig:sharing}):
\emph{Sharing ZDD} does not significantly improve the analysis execution time but is more costly in terms of peak memory usage.
It seems that there is not much sharing between successive analyses and that the library just fills the memory up to a certain threshold before collecting garbage;
our knowledge of \software{Cudd}'s internals is however insufficient to check whether this explanation is correct.

This motivates our yet unimplemented idea of parallelizing the analysis by running it for different values of $a$ on different cores, completely separate from each other.
Note that this is much easier than running analyses sharing a single ZDD manager, due to contention on the hash tables.

\clearpage
\section{Comparison of all exact analyses}

%%% Appendix (all ZDD + CAV MC)
%time
\begin{figure}[h]
\begin{center}
\includegraphics[width=\textwidth]{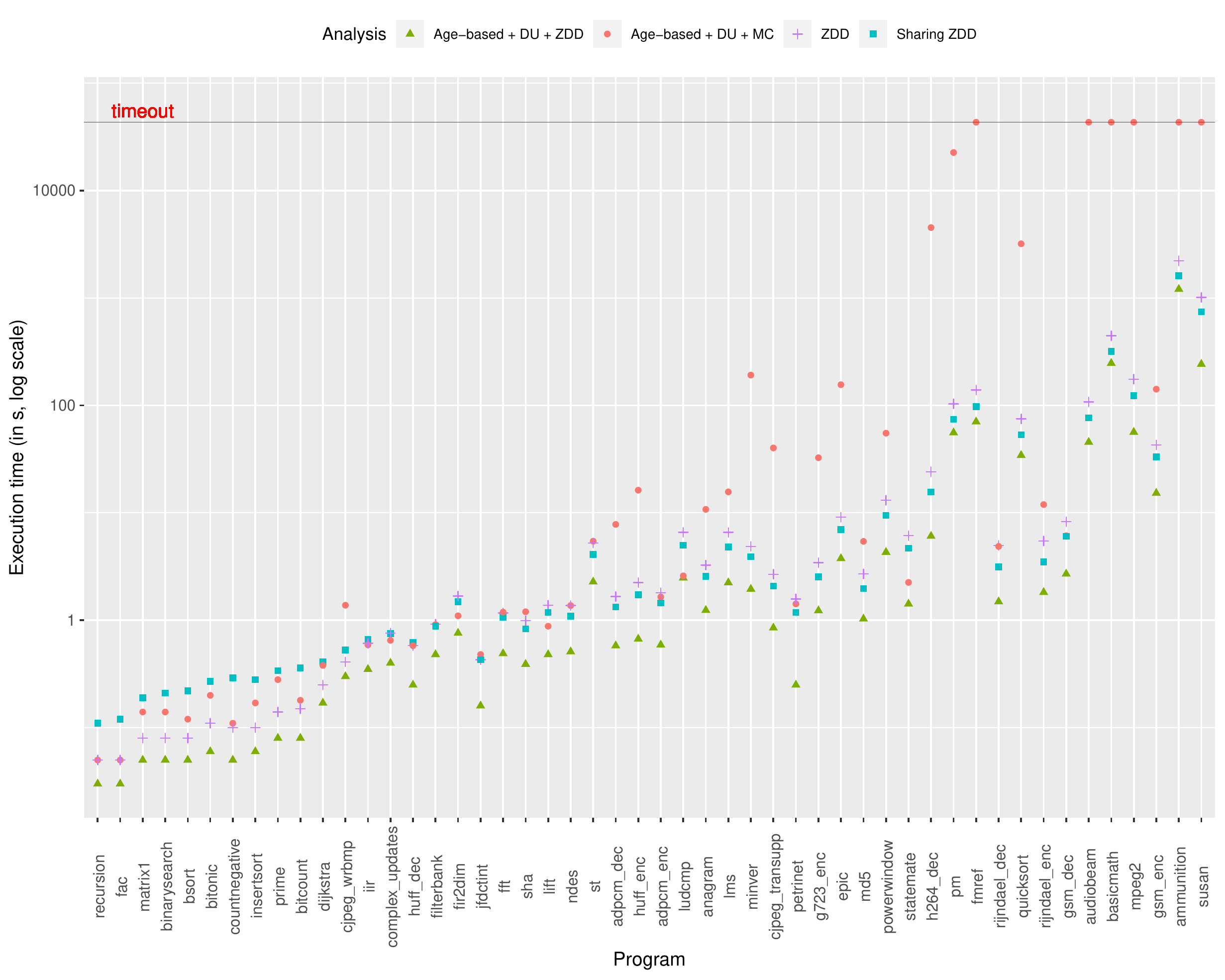}
\caption{Execution time of all exact analyses.}\label{fig:time_exact_analyses}
\end{center}
\end{figure}

%memory
\begin{figure}[h]
\begin{center}
\includegraphics[width=\textwidth]{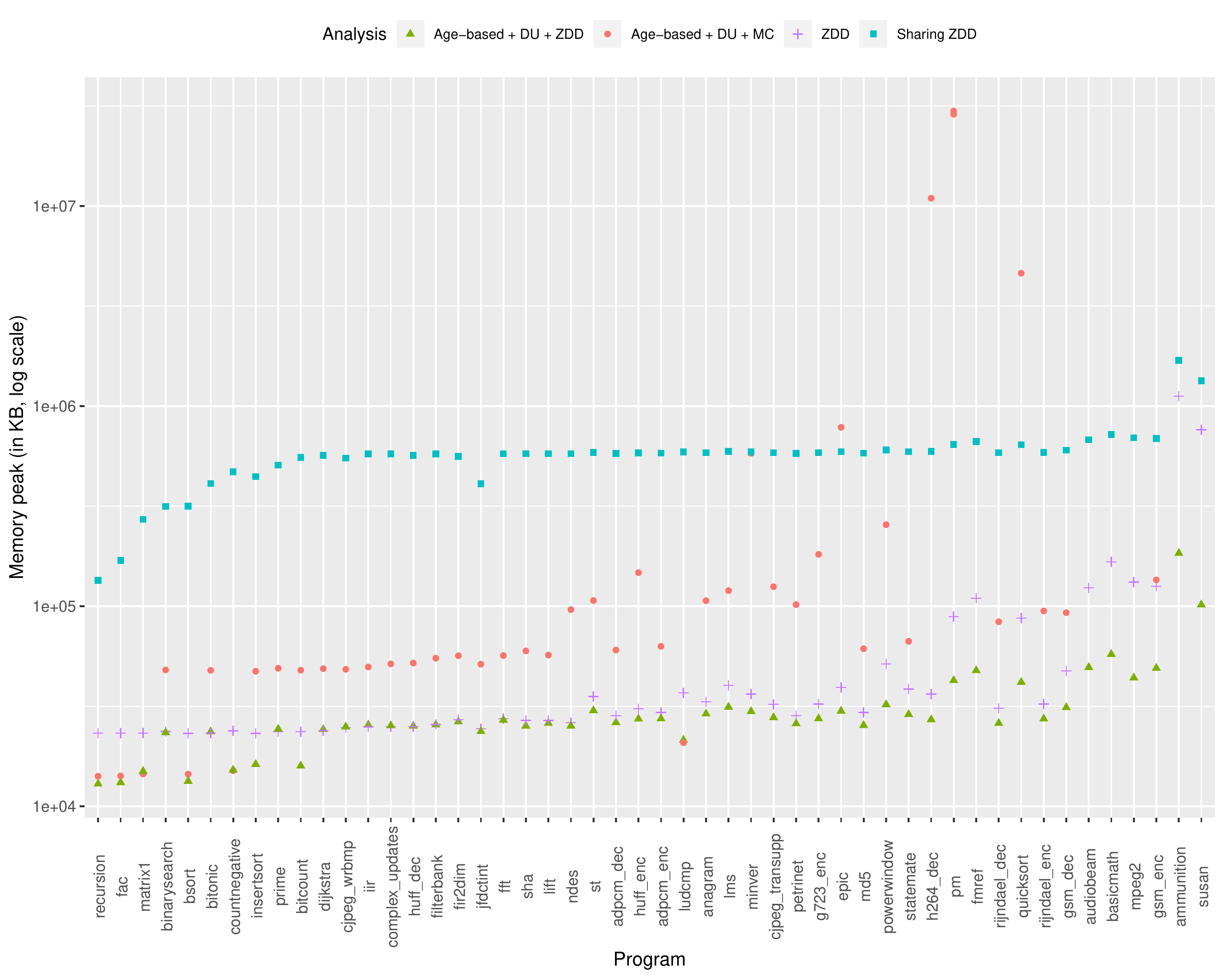}
\caption{Memory consumption of all exact analyses.}\label{fig:memory_exact_analyses}
\end{center}
\end{figure}

Figures~\Autoref{fig:time_exact_analyses} and~\Autoref{fig:memory_exact_analyses} show the execution time and memory consumption all fully-precise analyses we have implemented.

\end{document}